\documentclass[11pt]{article}
\usepackage{hyperref}
\usepackage{amsmath}
\usepackage{graphics}
\usepackage{color}
\usepackage{epsfig}
\usepackage{graphicx}%
\usepackage{amsfonts}%
\usepackage{amssymb}
\usepackage{setspace}
\usepackage[margin=1in]{geometry}
\usepackage{comment}
\usepackage{listings}
\usepackage{color}
\usepackage{url}

%TCIDATA{OutputFilter=latex2.dll}
%TCIDATA{LastRevised=Thursday, June 16, 2005 16:40:22}
%TCIDATA{<META NAME="GraphicsSave" CONTENT="32">}

%\setlength{\textheight}{8.75in}
%\setlength{\columnsep}{2.0pc}
%\setlength{\textwidth}{6.0in}
%\setlength{\topmargin}{0.25in}
%\setlength{\headheight}{0.0in}
%\setlength{\headsep}{0.0in}
%\setlength{\oddsidemargin}{0.0in}
%\setlength{\evensidemargin}{0.0in}
%\setlength{\parindent}{1pc}
%\setlength{\parskip}{1ex plus 0.5ex minus 0.2ex}

%\newtheorem{theorem}{Theorem}[section]
%\newtheorem{conjecture}{Conjecture}[section]
%\newtheorem{definition}{Definition}[section]
%\newtheorem{claim}{Claim}[section]
%\newtheorem{lemma}{Lemma}[section]
%\newtheorem{fact}{Fact}[section]
%\newtheorem{corollary}{Corollary}[section]
%\newtheorem{note}{Note}[section]
%\newtheorem{observation}{Observation}[section]
%\newtheorem{remark}{Remark}[section]
%\newtheorem{proposition}{Proposition}[section]
%\newtheorem{example}{Example}[section]
%\newenvironment{proof}{\noindent{\bf Proof.}}{  \rule{2mm}{3mm}}
%% \bibliographystyle{plain}
\newtheorem{theorem}{Theorem}[section]
\newtheorem{definition}{Definition}[section]
\newtheorem{claim}{Claim}[section]
\newtheorem{lemma}{Lemma}[section]

\newtheorem{corollary}{Corollary}[section]

\newtheorem{proposition}{Proposition}[section]
\newtheorem{example}{Example}[section]
\newcommand{\qed}{\hfill \mbox{\raggedright \rule{2mm}{3mm}}}
\newenvironment{proof}{\noindent{\bf Proof.}}{\qed}

\newcommand{\lbfl}{{\sc Lbfl}}
\newcommand{\cfl}{{\sc Cfl}}
\newcommand{\ufl}{{\sc Ufl}}
\newcommand{\opn}{\operatorname}

\date{}

\allowdisplaybreaks
% \doublespacing

\begin{document}

\title{
Sherali-Adams gaps, flow-cover inequalities and generalized configurations for
 capacity-constrained Facility Location 
%%\titlenote{(Produces the permission block, and
%%copyright information). For use with
%%SIG-ALTERNATE.CLS. Supported by ACM.}
}

\author{Stavros G. Kolliopoulos\thanks{Department of Informatics and
Telecommunications, National and Kapodistrian 
University of Athens, Panepistimiopolis Ilissia, Athens
157 84, Greece; (\texttt{sgk@di.uoa.gr}).}   
\and Yannis Moysoglou\thanks{ 
Department of Informatics and
Telecommunications, National and Kapodistrian 
University of Athens, Panepistimiopolis Ilissia, Athens
157 84, Greece; (\texttt{gmoys@di.uoa.gr}). } }
%\institute{Department of Informatics and
%Telecommunications, National and Kapodistrian 
%University of Athens, Panepistimiopolis Ilissia, Athens
%157 84, Greece; (\texttt{sgk@di.uoa.gr}  and \texttt{gmoys@di.uoa.gr}).
% \and
%% Department of Informatics and
% Telecommunications, National and Kapodistrian  University of Athens, Panepistimiopolis Ilissia, Athens
% 157 84, Greece; (\texttt{gmoys@di.uoa.gr}).
%}

\maketitle

\begin{abstract}
Metric facility  location is a  well-studied problem for  which linear
programming  methods have  been used  with great  success  in deriving
approximation  algorithms.  The capacity-constrained  generalizations,
such  as  capacitated  facility  location (\cfl\/)  and  lower-bounded
facility location  (\lbfl), have proved  notorious as far  as LP-based
approximation  is  concerned:   while  there  are  local-search-based
constant-factor  approximations, there  is no known  linear relaxation
with  constant  integrality gap.  According  to    Williamson and Shmoys 
devising a  relaxation-based approximation for \cfl\ is  among the top
10 open problems in approximation algorithms.

This paper  advances significantly the  state-of-the-art 
on  the effectiveness of
linear programming for capacity-constrained facility location
through  a host of impossibility 
results   
% for  substantial  families   of  linear relaxations  
for  both \cfl\  and  \lbfl.   We show  that  the  relaxations
obtained   from  the  natural   LP  at   $\Omega(n)$  levels   of  the
Sherali-Adams  hierarchy  have an  unbounded  gap,  partially 
answering an  open
question of  \cite{LiS13, AnBS13}. Here, $n$ denotes the  number of facilities
in the instance.  Building on the ideas for this result, we prove that
the standard \cfl\ relaxation enriched with the generalized flow-cover
valid inequalities  \cite{AardalPW95} has  also an unbounded  gap.  
This disproves a long-standing conjecture of \cite{LeviSS12}. 
We
finally introduce  the family of proper  relaxations which generalizes
to  its  logical extreme  the  classic  star  relaxation and  captures
general  configuration-style  LPs.  We  characterize  the behavior  of
proper  relaxations for  \cfl\ and  \lbfl\ through  a  sharp threshold
phenomenon.

\end{abstract}

\section{Introduction}

Facility location is one of the most well-studied problems in combinatorial optimization.
In the  {\em uncapacitated } version \emph{(\ufl)} we are given a set $F$ of facilities and set $C$ of clients. We may open facility $i$ by paying its opening cost $f_i$ and we may assign client $j$ to facility $i$ by paying the connection cost $c_{ij}$. We are asked to open a subset 
$F' \subseteq F$ of the facilities and  assign each client to an open
facility. The goal is to minimize the total opening and connection
cost. 
A {\em $\rho$-approximation algorithm,} $\rho \geq 1,$ outputs in polynomial time a
feasible solution with cost  at most $\rho$ times the optimum. 
The approximability of general \ufl\ is settled by an $O(\log
|C|)$-approximation \cite{Hochbaum82} which is asymptotically best
possible,  unless {\sf P = NP}. %\cite{RazS97}.
In  {\em metric} \ufl\ 
the service costs satisfy the following variant of the triangle inequality:
$c_{ij} \leq c_{ij'} + c_{i'j'} + c_{i'j}$ for any $i, i'\in F$ and $j, j' \in C.$
This very natural special case of \ufl\ is approximable within a
constant-factor, and many improved results have been published over
the years. In those,   LP-based
methods, such as filtering, randomized rounding and the primal-dual method
 have been particularly prominent (see, e.g., 
\cite{ShmoysWbook}). After a long series of papers 
the currently best approximation ratio for 
 metric \ufl\ is $1.488$ \cite{Li11}, while the best known lower
bound is $1.463,$ unless {\sf P = NP} (\cite{GuhaK99} and Sviridenko \cite{Vygen05}).
In this paper we focus on two generalizations of metric \ufl:  the 
 {\em capacitated facility location (\cfl\/)} and
 the {\em lower-bounded facility location (\lbfl\/)}.
%To our knowledge  the $1.463$ lower bound  is the only inapproximability result known
%for these two as well. 

\cfl\/ is the generalization of  metric \ufl\ where every facility $i$
has a capacity $u_i$ that specifies the maximum number of clients that
may be assigned to $i.$ In {\em uniform} \cfl\ all facilities have the
same capacity $U.$ Finding  an approximation algorithm for \cfl\/ that
uses  a linear  programming lower  bound, or  even proving  a constant
integrality  gap for an  efficient LP  relaxation, are  notorious open
problems. Intriguingly,  the following rare phenomenon occurs. 
The natural LP relaxations  have an unbounded
integrality gap and the only known $O(1)$-approximation algorithms are
based  on local  search,  with  the currently  best  ratios being  $5$
\cite{BansalGG12} for  the non-uniform and  $3$ \cite{AggarwalLBGGJ12}
for  the uniform  case respectively.  In  the special  case where  all
facility  costs  are equal,  \cfl\  admits  an LP-based  approximation
\cite{LeviSS12}.  Comparing the LP optimum against the solution output
by an LP-based  algorithm establishes a guarantee that  is at least as
strong  as the  one established  a priori  by worst-case  analysis. In
contrast, when a  local search algorithm terminates, it  is not at all
clear what  the lower  bound is.  According  to  Williamson and Shmoys
\cite{ShmoysWbook} devising a  relaxation-based algorithm for \cfl\ is
one of the top $10$ open problems in approximation algorithms.

A lot of effort has been  devoted to 
understanding the quality of relaxations obtained by an iterative 
lift-and-project procedure. Such procedures define hierarchies of 
successively  stronger relaxations, where  valid inequalities are added at 
each level. After at most $n$ levels, where $n$ is the number of 
variables, all valid inequalities have been  added and thus the integer polytope is
expressed. Relevant methods  include those  developed  by Balas et
al. \cite{BalasCC93},  Lov\'{a}sz and Schrijver \cite{LovaszS91} (for
linear and semidefinite programs), 
Sherali and Adams \cite{SheraliA90},    Lasserre  \cite{Lasserre01}
(for semidefinite programs). 
See
\cite{laurent} for a comparative discussion.
  
 The seminal work of Arora et al.  \cite{AroraBLT06}, studied integrality
 gaps  of  families of  relaxations  for  Vertex  Cover,
 including  relaxations    in  the  Lov\'{a}sz-Schrijver  (LS)
 hierarchy.  This  paper  introduced  the  use  of  hierarchies  as  a
 restricted model  of computation  for obtaining LP-based  hardness of
 approximation  results.   Proving  that  the integrality  gap  for  a
 problem  remains  large  after  many  levels of  a  hierarchy  is  an
 unconditional  guarantee   against  the  class   of  relaxation-based
 algorithms obtainable through the specific method.  At the same time,
 if an LP  relaxation maintains a gap of $g$ after  a linear number of
 levels,  one  can  take  this  as evidence  that  polynomially-sized
 relaxations are unlikely to yield approximations better than $g$ (see
 also  \cite{SchoenebeckTT07}).  In fact,  the former  belief is  now a
 theorem   for   maximum    constraint   satisfaction   problems:   
  in terms of approximation,
 LPs of size 
 $n^k,$ are exactly as powerful 
 as $O(k)$-level Sherali-Adams relaxations \cite{ChanLRS13}. 
%% it is proved that superconstant bounds on the SA
%%  hierarchy can be translated to superpolynomial bounds on the size of
%%  extended approximate relaxations for those problems.

 \lbfl\ is in a sense the opposite problem to \cfl. 
 In an \lbfl\ instance   every facility $i$ comes with a  lower
bound $b_i$ 
which is the minimum number of clients that must be assigned
 to $i$  if we open it. In {\em uniform} \lbfl\ all the lower bounds
have the same value $B.$  \lbfl\ is even less well-understood than \cfl. 
 The first approximation algorithm for the uniform case 
  had  a performance guarantee of
$448$ \cite{Svitkina08}, which has been  improved  to $82.6 $ \cite{AhmadianS12}. 
Both use local search.
%% Interestingly, the \lbfl\ algorithms
%% from \cite{Svitkina08,AhmadianS12}  both use a \cfl\ algorithm on a
%% suitable instance as a subroutine. 

Apart from some  work of the authors \cite{KolliopoulosM13,KolliopoulosM14b}
there has been no systematic theoretical  study of the power of linear programming
for approximating \cfl.  
In \cite{KolliopoulosM13} we show an unbounded gap for \cfl\ at $\Omega(n)$ levels
of the LS and the semidefinite mixed-LS$_{+}$ hierarchies, $n$ being
the number of facilities.  In  
\cite{KolliopoulosM14b} we show that linear relaxations in the classic
variables require at least an exponential number of constraints to
achieve a bounded integrality gap. Note that it is well-known that hierarchies
may produce an exponential number of inequalities already after one
round. 
 For related problems there are  some recent interesting results. 
%An improved approximation for the $k$-median problem was given in \cite{LiS13}. According to the authors their algorithm is equivalent to  rounding 
Improved  approximations  were given  for
$k$-median   \cite{LiS13}      and   capacitated   $k$-center
\cite{CyganHK12,AnBS13},  problems closely  related to
facility location.  For both, the improvements are obtained 
by LP-based techniques
that  include preprocessing  of the  instance in  order to  defeat the
known integrality gap. For $k$-median, the authors of \cite{LiS13}
state that their 
$(1+\sqrt{3} + \epsilon)$-approximation algorithm   can
be converted  to a  rounding algorithm on an 
 $O(\frac{1}{\epsilon^2})$-level LP  in the Sherali-Adams (SA)
lift-and-project hierarchy. 
They propose exploring the  direction of using 
SA  for approximating  \cfl.  In \cite{AnBS13}
the authors raise as an important question  to understand  the
power  of lift-and-project methods  for capacitated  location
problems, including  whether they automatically capture relevant 
preprocessing steps.

\vspace*{0.3cm}
\noindent
{\bf Our results.}
We give impossibility  results on arguably the most promising 
directions for   strengthening
linear relaxations for \cfl\ and \lbfl\  and in doing so we answer  open
problems from the literature.    Our contribution is threefold.

First, we show that the LPs  obtained from the
natural relaxations for  \cfl\ and \lbfl\ at $\Omega(n)$ levels of the 
SA hierarchy have an unbounded gap on an  instance
where 
 $|F|=\Theta(n)$ and $|C|= \Theta(n^3).$ 
%% First, we prove that the gap of the natural relaxation for \cfl\ (\lbfl\/)
%% remains unbounded after strengthening at 
%% the first $\Omega (n)$ levels of the Sherali-Adams
%% (SA) hierarchy, where $n$ denotes $|F|.$ 
This result answers the questions of \cite{LiS13} and
\cite{AnBS13} stated above as far as the natural LP is concerned 
and moreover it  is asymptotically tight.
In the  instances we consider clients  have unit demands and
it is well known that in this case the integer polytope and the 
mixed-integer (where fractional client assignments are allowed) 
polytope are the same.  Since SA extends to mixed-integer
programs  as well  \cite{Cornuejols08,BalasCC93},  the mixed-integer
polytope is obtained after at most  $n$ levels. Thus at most that many
levels are needed  also by the stronger, full-integer, SA procedure we employ,
which in the lifting stage multiplies 
also with assignment variables. 
From a qualitative aspect,   we  give  the first, to our knowledge, SA
bounds for a relaxation where variables have more than one type of
semantics, namely
the facility opening and the client assignment type. 
Compare this, for example, with the Knapsack and Max~Cut LPs that
contain each one type of variable. 
The lifting of the assignment variables 
raises obstacles in the proof that we managed to overcome as discussed
in Section \ref{SA-result}.

We use the \emph{local-to-global} method which was implicit in
\cite{AroraBLT06} for 
local-constraint
relaxations   and  was   then  extended   to  the   SA   hierarchy  in
\cite{FernandezdlVKM07}. 
 %% (local-global) 
See also \cite{GeorgiouMagen} for an explicit description and
\cite{CharikarMM09} for applications to Max~Cut and other
problems.  
In this approach, the feasibility of a solution for the $t$-level SA
relaxation is established through the design of a set of  appropriate
distributions over  feasible integer solutions for each constraint
such that these global distributions agree with each other locally on relevant
events.  
\begin{comment} -------- too technical 
In this approach, we interpret a linearized
product of a set $I$ of variables, namely $x_I$, as the probability of
occurrence of the event $\bigwedge_{j \in I}x_j=1,$ with respect to
a distribution over integer solutions.  
If there is an assignment $s^l$ of
values to the linearized variables
%% in any set of variables
appearing at level $t$ of SA such that for each lifted constraint there is a
distribution over some  integer solutions and the values  of the $x_I$
variables    coincide   with    the   probability    of    the   event
$\bigwedge_{j \in I }x_j=1,$ with respect to  that distribution, then
 the projection $s$ of $s^l$ on the $(y,x)$ variables
 is  feasible  for  the  relaxation  obtained  at  level  $t$  of  the
 hierarchy. 
\end{comment} 
To prove Theorem~\ref{cfl-SA:theorem} for \cfl\ we 
devise first in Lemma~\ref{assi-sym} an intuitive method to construct an initial 
 set of distributions for a %% lifted 
constraint. 
These  initial distributions are  inadequate for 
constraints where all facilities appear as indices. 
An alteration procedure,
explained in Propositions~\ref{level-ratio}--\ref{transf:prop}, 
 produces the final set of distributions. 
Theorem~\ref{cfl-SA:theorem} extends
significantly our earlier result on the  
LS hierarchy for \cfl\ \cite{KolliopoulosM13} to the stronger SA
hierarchy. It turns out that in both cases we can start from the same 
bad instance. 
It should be noted that the methodology in the two proofs is completely
different -- in \cite{KolliopoulosM13} the result was obtained via an 
inductive construction of  protection matrices.

Our second contribution (cf. Theorem~\ref{effective-cap:theorem})
is that the \emph{effective capacity}
inequalities introduced in \cite{AardalPW95,AardalPW95er} for \cfl\ 
fail to reduce the gap of the classic relaxation to constant. 
These constraints generalize the flow-cover inequalities for
\cfl. Thus  we disprove the long-standing conjecture of \cite{LeviSS12} 
that the addition of the latter % flow-cover inequalities 
to the classic LP 
suffices for  a constant integrality gap. 
Our proof deviates from standard integrality gap constructions 
by applying the local-global method.  
The bad solution fools every inequality $\pi$  because its part that is
``visible'' to $\pi$ can be extended to a solution $s^{\pi}$ that is a
convex combination of feasible integer solutions. 
%% Our proof uses the ideas from our  SA gap result
%% and relies on simple structural properties of
%% the inequalities. 
Our  ideas can be extended 
to even more general families
such as the  {\em submodular inequalities} \cite{AardalPW95}, cf. 
Theorem~\ref{thm:submod}
in the Appendix. 
All results  in this paper make no time-complexity assumptions. To our
knowledge no efficient separation algorithm for the effective
capacity inequalities is known.

We finally  introduce    
the family of  proper relaxations
which are  configuration-like linear programs.
The so-called \emph{Configuration LP} was  used by 
Bansal and Sviridenko 
\cite{BansalS06} for the Santa Claus problem and has yielded valuable insights, mostly
for resource allocation  and scheduling problems
(e.g., \cite{Svensson12}).
 The analogue of the Configuration
LP for facility location already exists, it is the {\em star
  relaxation} (see, e.g., \cite{JainMMSV03}).
%: it is the well-known 
%{\em star relaxation,} in which every  variable corresponds to a {\em star,} i.e., a
%facility $f$ and a set
%of clients assigned to $f.$ 
%The natural star relaxation 
%for   \cfl\ and \lbfl\  is  equivalent to the standard LPs 
%so it has an unbounded integrality gap.  
We take the idea of a star  to its logical  extreme by 
introducing  classes. 
A {\em class} consists of a set with an arbitrary number of facilities and clients
together with an assignment of each client to a facility in the set. 
%The definition of a class can thus vary from simple, ``local'' 
%assignments of  clients to  a  single facility, to  
% ``global'' snapshots  of the instance that 
%express  the assignment  of
%clients to a large set of  facilities.  
A {\em proper relaxation} for an instance is defined by a collection
$\mathcal{C}$ of classes and a decision variable for every class. 
We allow great freedom in 
defining  ${\cal C}\colon$ 
the only requirement   is that the resulting
formulation is symmetric and valid. 
The {\em complexity $\alpha$} of a proper relaxation is the maximum fraction
 of the 
available facilities that are contained in a class of $\mathcal{C}.$
%Proper LPs are stronger than the standard relaxation. 
%One can construct infinite families  of instances 
%where, by  increasing the complexity in a proper relaxation, one cuts off  more
%and more fractional solutions.  
In Theorem~\ref{theorem:proper}  we 
characterize the  behavior of proper relaxations  
for \cfl\ and \lbfl\ through a threshold result: 
anything less than maximum complexity results in unboundedness of
the integrality gap, while there are
proper relaxations of maximum complexity with a gap of
$1$.

Our  results disqualify  the  so far most promising approaches 
 for an efficient LP relaxation
for  \cfl. Moreover, we advance drastically the state-of-the-art for the
little understood \lbfl. 
Whether a fundamentally new approach
may succeed for either problem remains as an open question. 

\iffalse 
Our three  results  and their proofs seem additionally to suggest 
 that a bounded-gap relaxation should  not  constrain  the number of variables
 appearing in each  inequality. 
Whether such  non-trivial inequalities,   with large support and an
efficient separation oracle, exist is an open question. 
\fi 

For lack of space, some proofs and all material on \lbfl\ are
 in the Appendix.

\section{Preliminaries}
\label{sec:prel}

Given an instance $I(F,C)$ of \cfl\ or \lbfl, we use $n, m$ to denote $|F|$ and $|C|$
respectively.
We will show our negative results for uniform, integer, capacities and lower
bounds. Each client can be thought of as representing one unit of demand.
 It  is  well-known  that in such a setting  the  splittable  and
unsplittable versions  of the problem are equivalent. 
The following 0-1  IP is the standard  valid formulation of uncapacitated 
facility location with unsplittable unit demands.

\vspace*{-0.35cm}
\[ 
\begin{array}{ccc}
 \min \{ \sum_{i \in F} f_iy_i + \sum_{i \in F}\sum_{j
  \in C} x_{ij}c_{ij}  \; \mid   &    x_{ij} \leq y_i   \;\;  \forall i \in F, \forall j \in C,  \\    
\sum_{i \in F} x_{ij} =1   \;\;  \forall j \in C,   & 
y_i, x_{ij} \in \{0,1\}   \;\; \forall i \in F, \forall j \in C   \}
\end{array} \]
\vspace*{-0.2cm}

\iffalse 
\[ \min \{ \sum\limits_{i \in F} f_iy_i + \sum\limits_{i \in F}\sum_{j
  \in C} x_{ij}c_{ij}  \mid    
x_{ij} \leq y_i   \;\;  \forall i \in F, \forall j \in C    
\sum\limits_{i \in F} x_{ij} =1   \;\;  \forall j \in C   
y_i, x_{ij} \in \{0,1\}   \;\; \forall i \in F, \forall j \in C  \} \]
\fi 

\noindent 
The linear relaxation results  from the above IP by replacing the integrality constraints
 with: 
$ 
0 \leq y_i \leq 1, \; 0 \leq x_{ij} \leq 1,$   $\forall i \in F, \forall j \in C. 
$
To obtain the standard LP relaxations for 
uniform \cfl\ (and \lbfl) with capacity $U$ (lower bound $B$)
the following constraints are added
respectively: 

\vspace*{-0.2cm}
\[ \begin{array}{ccc}
\sum_ {j} x_{ij} \leq U y_i  \;\;   \forall i \in F     &  \mbox{ and }& 
\sum_ {j} x_{ij} \geq By_i   \;\;  \forall i \in F.  
\end{array}  \]
% \vspace*{-0.2cm}
%\nopagebreak

%
We will slightly abuse terminology by 
using  the term {\em (LP-classic)} for both LPs. It will be clear from the context to
which problem, \cfl\ or \lbfl,  we refer.

We proceed to define the Sherali-Adams hierarchy \cite{SheraliA90}. 
Consider a polytope $P\subseteq \mathbb{R}^d$ defined by the linear
constraints $A{x}-{b}\leq 0,$
$0 \leq x_i \leq 1$, $i = 1,\ldots,d$. We define the polytope 
$\opn{SA}^k(P)\subseteq
\mathbb{R}^d$ as follows. For every constraint $\pi(x)\leq 0$
of $P$, for every set of variables $U\subseteq \{x_i \mid i=1,\ldots,d\}$ such that
$|U|\leq k,$ and for every $W\subseteq U$, consider the valid constraint:
$\pi(x)\prod_{x_i\in U-W}x_i \prod_{x_i\in W}(1-x_i)\leq 0$.
Linearize the system obtained this way by replacing (i) $x_i^2$ with
$x_i$ for all $i$ and (ii) $\prod_{x_i\in I}x_i$
with  $x_I$ for each set $I\subseteq  \{x_i|i=1,\ldots,d\}$. $\opn{SA}^k(P)$
is the projection of the resulting linear system onto the singleton
variables. We call $\opn{SA}^k(P)$ the polytope {\em obtained from $P$ 
at level $k$} of the SA
hierarchy. Given a cost vector $c \in \mathbb{R}^d,$ the {\em relaxation
obtained from $P$ at level $k$} of SA is $\min \{c^T x \mid x \in \opn{SA}^k(P) \}.$

\begin{comment}  %%%%%%%%%%%%%%%%%%%%%%%%%%%%%%%%%%
The second well-known LP  is the star relaxation.
A {\em star} is a set consisting of some  clients and
one facility. Let $\mathcal{S}$ be a set of stars. For a star  $s\in \mathcal{S},$ let  $x_s$ be an indicator variable denoting
whether  $s$ is picked.   The cost  $c_s$ of star $s$   is equal
to the opening cost of the corresponding facility plus the cost of
connecting the star's clients to it. 
\iffalse \[ \begin{array}{ccccc}
& \min \sum_{s\in \mathcal{S}} c_sx_{s}   &  \\
\sum_{s \ni  j } x_s = 1   \;\; \forall j \in C  & &
\sum_{s \ni  i } x_s \leq  1  \;\;  \forall i \in F  &  & 
x_s \geq 0         \;\;  \text{for all stars } s \in \mathcal{S}  
\end{array}  \]    \fi
\[ \min \{ \sum_{s} c_sx_{s} \mid \sum_{s \ni  j } x_s = 1   \;\;
\forall j \in C, \;\;\;  \sum_{s \ni  i } x_s \leq  1  \;\;  \forall i \in F,
\;\;\; x_s \geq 0         \;\;  \text{for all stars } s \in \mathcal{S}   \} \]

Defining $\mathcal{S}$ as the set of all stars $s$ where  the total number of the clients in $s$ 
is at most the
capacity $U$ (at least the bound $B$),  we get corresponding relaxations for
  \cfl\ (\lbfl).
In the rest of the paper we  slightly abuse terminology by 
using    {\em (LP-star)}  to refer to the
star relaxation for the problem we examine each time (\cfl\  or \lbfl).

It is well known  that for both \cfl\ and \lbfl, (LP-classic) and (LP-star) are
equivalent, therefore (LP-star) can be solved in  polynomial time.
%%%%%%%%%%%%%%%%%%%%%%%%%%%%%%%%%% 
\end{comment}

\section{Sherali-Adams gap for \cfl }\label{SA-result}

%%In this section we  prove  tight bounds on the number of levels of
%% SA needed to reduce the gap of (LP-classic) below
%% $\Theta(n)$.  
Consider an  instance of metric  \cfl\ with a  total of
$2n$ facilities,  $n$ with opening cost  $0$ which we  call cheap (and
denote the corresponding set by $Cheap$) and $n$ with opening cost $1$
which   we  call  costly   (and  denote   by
$Costly$).  The  capacity  $U=n^3$  and  we have  a  total  of  $nU+1$
clients. All connection costs are $0$. We will show that the following
bad solution  $s$ to the instance\footnote{The reader should notice
  that any similarity with Knapsack is
  superficial. Theorem~\ref{cfl-SA:theorem} is about the % fractional 
{\em \cfl}
    polytope. Moreover, it is easy to embed our instance 
 in a slightly larger one, with a non-trivial metric,  so that 
%% the bad
%%  \cfl\ solution and the optimal solution of the underlying
%% ``knapsack'' problem have the same cost. 
 the projection of the bad \cfl\ solution
to the $y$-variables, is in the integral polytope of the ``underlying''
     knapsack  instance.
%  
%% one larger instance has a dummy facility at distance 1 whose cost 
%% is n**y_co. In the bad CFL solution we keep him at zero. 
%% The Knapsack optimal solution opens the n cheap and the dummy. 
%
%% if you want projection, open the dummy instance at 1 and let it
     %% have cost zero. It has no clients.   Projecting the y-variables gives a
 %% ``knapsack'' solution of cost slightly above optimal that opens
 %% n+1+1/n
%% It is however in the integral polytope. 
% Both  methods have  the advantage that the SA proof carries through
% for the new instance and the new solution! 
}
  survives a  number of SA levels, which is  linear 
in the number $2n$ of facilities. On the other hand, it
is known that 
at level $2n$ the  relaxation obtained expresses the integral polytope.
Let  $\alpha=n^{-2}$.  For all  $i\in  Cheap$  and  for all  $j\in  C,$
$y_{i}=1$ and  $x_{ij}=\frac{1-\alpha}{n}$, and for  all $i\in Costly$
and     for     all     $j\in    C$     $y_{i}=\frac{10}{n^2}$     and
$x_{ij}=\frac{\alpha}{n}$. Theorem~\ref{cfl-SA:theorem} below
indicates that, as often with hierarchies,  simple valid inequalities are
generated after many rounds.  The reader who is further interested in
the robustness of SA for \cfl\ may consult Section~\ref{sec:robust} in
the Appendix.

The following lemma, which is implicit in previous work
\cite{FernandezdlVKM07,GeorgiouMagen} 
 gives sufficient conditions for a solution to be feasible at  level $k$ of the SA hierarchy.

\begin{lemma}\cite{FernandezdlVKM07,GeorgiouMagen}\label{SA-survival}
Let $s$ be  a feasible solution to the relaxation  and let $v(\pi ,z)$
be  the  set  of  variables  appearing  in  a  lifted  constraint
obtained from $\pi$
multiplied by  $z$.  Solution $s$ survives $k$ levels of
SA if  for every constraint $\pi$  and each multiplier $z$  with at most
$k$ distinct variables there is:
\vspace*{-0.2cm}
\begin{itemize}
\item[1] A solution  $s'=s_{\pi,z}$  which agrees with
  $s$  on  $v(\pi ,z)$   such that $s'$
is a convex combination $E_d$ of integer solutions  (and thus $E_d$ defines a distribution on integer solutions) and
\item[2]  For   any  two  sets  $v(\pi_1,z_1)$   and  $v(\pi_2,z_2)$,  let
  $x_1x_2\cdot\ldots\cdot x_{l},$ $l \leq k+1,$ be a product appearing
  in both lifted constraints obtained from $\pi_1$ and $\pi_2$
  multiplied 
  with $z_1$ and $z_2$ respectively.  
Then
  the probability  $P[x_1=1 \wedge x_2=1  \wedge\ldots\wedge x_{l}=1]$ is
  the  same in  both  distributions $E_{d_1}$  and  $E_{d_2}$ associated  with
  $v(\pi_1,z_1)$ and $v(\pi_2,z_2)$ respectively.
\end{itemize}
  \end{lemma}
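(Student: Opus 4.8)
The statement to establish is Lemma~\ref{SA-survival}, so the plan is to exhibit an explicit lifted point that lies in $\operatorname{SA}^k(P)$ and projects onto the singleton variables to $s$. Recall that certifying $s\in\operatorname{SA}^k(P)$ amounts to assigning a real value $\hat x_I$ to every linearized monomial $x_I$ with $|I|\le k+1$ that occurs in the lifted-and-linearized system, in such a way that all those constraints hold and $\hat x_{\{i\}}=s_i$. The first move is to manufacture this single global vector $\hat x$ out of the local distributions supplied by condition~1. For each monomial $x_I$ that appears in some $v(\pi,z)$, I would set $\hat x_I := P_{E_d}[\bigwedge_{i\in I}x_i=1]$, where $E_d$ is the distribution associated with that piece. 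Condition~2 is exactly what makes this well defined: if $x_I$ occurs in two windows $v(\pi_1,z_1)$ and $v(\pi_2,z_2)$, the two distributions $E_{d_1},E_{d_2}$ assign it the same joint probability, so $\hat x_I$ does not depend on the window used to read it off. For a singleton, condition~1 gives $\hat x_{\{i\}}=P_{E_d}[x_i=1]=s'_i=s_i$, since $s'=s_{\pi,z}$ agrees with $s$ on $v(\pi,z)$; hence the projection of $\hat x$ is $s$.

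The second step is to verify that $\hat x$ satisfies every lifted constraint. Fix a base constraint $\pi$ and a multiplier $z$ determined by $U$ and $W\subseteq U$; the associated valid inequality before linearization is $\pi(x)\prod_{x_i\in U\setminus W}x_i\prod_{x_i\in W}(1-x_i)\le 0$. Expanding this product and applying the reductions $x_i^2\to x_i$ and $\prod_{i\in I}x_i\to x_I$ yields a linear form in the monomials $x_I$ with $I\subseteq\mathrm{vars}(\pi)\cup U$, all of which lie in $v(\pi,z)$ by definition. Evaluating this form at $\hat x$, and using that on $v(\pi,z)$ the values $\hat x_I$ coincide with $P_{E_d}[\bigwedge_{i\in I}x_i=1]=\mathbb{E}_{E_d}[\prod_{i\in I}x_i]$, linearity of expectation shows that the form equals $\mathbb{E}_{E_d}\!\left[\pi(x)\prod_{x_i\in U\setminus W}x_i\prod_{x_i\in W}(1-x_i)\right]$. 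Every integer point in the support of $E_d$ is feasible for $P$, so $\pi(x)\le 0$ there while the $0/1$-valued product is nonnegative; the integrand is therefore nonpositive pointwise and its expectation is $\le 0$. Thus $\hat x$ respects the $(\pi,z)$-constraint, and since $(\pi,z)$ was arbitrary, $\hat x\in\operatorname{SA}^k(P)$ projects to $s$, which is the claim.

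The delicate points, and where I would expect the only real care to be needed, are the following. First, the reduction $x_i^2\to x_i$ must commute with taking expectations under $E_d$; this is legitimate precisely because $E_d$ is supported on $0/1$ points, where $x_i^2=x_i$, so the linearized form and the genuine polynomial have the same expectation. Second, and more conceptually, the single global vector $\hat x$ must restrict, on each window $v(\pi,z)$ simultaneously, to the marginals of that window's distribution $E_d$; this is impossible without the overlap-consistency guaranteed by condition~2, so the interplay of the two conditions is the heart of the argument. Condition~2 builds one coherent lifted point, while condition~1 certifies that within each local window this point is a bona fide marginal of a distribution over feasible integer solutions, which is what forces each valid inequality to hold in expectation. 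The main obstacle is thus not a hard computation but getting the bookkeeping right: matching the monomials of the expanded, linearized $(\pi,z)$-constraint with the joint events measured by $E_d$, and confirming that $v(\pi,z)$ really contains all of them, so that no value lying outside the scope of $E_d$ is ever invoked.
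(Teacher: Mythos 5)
Your proposal is correct. Note that the paper does not prove Lemma~\ref{SA-survival} at all --- it is stated as implicit in the cited works \cite{FernandezdlVKM07,GeorgiouMagen} --- and your argument is exactly the standard local-to-global certification those references use: condition~2 glues the window-local marginals into a single well-defined lifted vector $\hat x$, and condition~1 plus linearity of expectation over the $0/1$-supported distribution $E_d$ (where $x_i^2=x_i$, so linearization commutes with expectation) forces each lifted inequality $\mathbb{E}_{E_d}[\pi(x)\,z]\le 0$ to hold, so $\hat x\in\operatorname{SA}^k(P)$ and projects to $s$.
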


First   consider  a  constraint   $\pi\colon  \sum_{j}x_{i^{\pi}j}\leq
Uy_{i^{\pi}}$ and a multiplier $z$.   After multiplying by $z$ and
expanding,  we obtain
a   linear  combination   of  monomials
(products). Then, % similarly to the previous, 
for the  $k< n-1$ levels
we  consider   there  must  be   some  costly  facility   $i_b  \notin
v(\pi,z)$.  We construct  a solution  $s_{\pi ,z}=(y',x')$  by setting
$y'_{i_b}=1-\sum_{i  \in  Costly-\{i_b\}}y_i$  and letting  all  other
variables the  same as in the  original bad solution $s$.  We say that
facility  $i_b$  \emph{takes the  blame}.  We  will   prove  that
$s_{\pi ,z}$ can be obtained as a convex combination $E_d$ of a set of
integer   solutions    satisfying   constraint   $\sum_{i\in   Costly}
y_i=1$.  %Here is a crucial difference to the mixed case: 
While
$s_{\pi,z}$ can be obtained as a convex combination $E_d$ in a variety of
ways, we require that the assignments of clients to the cheap facilities are
indistinguishable in $E_d$ and the same must be true for the assignments to
costly facilities other than $i_b$.  In the upcoming definition, we 
use  the product $p=z_1z_2\ldots z_l$ as an  abbreviation
of the event $\mathcal{E}_p := \bigwedge_{i=1}^{l} z_i = 1.$ 

\begin{definition}
 Let $i_b$ be the facility that takes the blame. We say that a distribution $E_d$ is \emph{assignment-symmetric}  if the following are true:
\vspace*{-0.1cm}
\begin{itemize}
\item[1] $P_{E_d}[x_{i_{a_1}j_{b_1}}\ldots x_{i_{a_t}j_{b_t}}y_{i_{a_{t+1}}}\ldots y_{i_{a_{l}}}]$,  with $t+l \leq k+1$ is the same if we exchange all occurrences of cheap facility $i_{r}$ by cheap facility $i_{r'}$ (in other words relabeling facilities). Note that we allow repetitions of facilities and clients in the description of the event.

\item[2]  $P_{E_d}[x_{i_{a_1}j_{b_1}}\ldots x_{i_{a_t}j_{b_t}}y_{i_{a_{t+1}}}\ldots y_{i_{a_{l}}}]$ is the same if we exchange all occurrences of client $j_q$ by client $j_{q'}$.

\item[3]  $P_{E_d}[x_{i_{a_1}j_{b_1}}\ldots x_{i_{a_t}j_{b_t}}y_{i_{a_{t+1}}}\ldots y_{i_{a_{l}}}]$ is the same if we exchange all occurrences of costly facility $i_1$ by costly facility $i_2$, $i_1,i_2\neq i_b$.
\end{itemize}
\end{definition}
\vspace*{-0.1cm}

We can always obtain $s_{\pi ,z}$ from such an assignment-symmetric distribution $E_d$ as shown
in the following lemma. 

\begin{lemma}\label{assi-sym}
Solution $s_{\pi ,z}$ is a convex combination $E_d$ of integer solutions which defines an assign\-ment-symmetric distribution.
\end{lemma}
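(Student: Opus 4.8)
The plan is to produce $E_d$ in two stages: first exhibit \emph{some} distribution $D$ over feasible integer solutions (each opening exactly one costly facility) whose singleton marginals equal $s_{\pi,z}$, and then symmetrize $D$ to enforce the three invariances of the definition. Since the phrase ``$s_{\pi,z}$ is a convex combination $E_d$'' only constrains the marginals $y_i,x_{ij}$, whereas assignment-symmetry constrains the joint law, this separation is the natural one: existence gives the marginals, symmetrization supplies the invariances.

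For the existence of $D$ I would decouple the opening decision from the assignment decision, which is legitimate precisely because the side constraint $\sum_{i\in Costly}y_i=1$ forces exactly one costly facility to be open in every integer solution while all cheap facilities stay open (their marginal is $1$). Concretely, open costly facility $i$ with probability $q_i$, where $q_{i_b}=y'_{i_b}$ and $q_i=\tfrac{10}{n^2}$ for $i\neq i_b$; conditioned on costly $i$ being open, use the fractional assignment that sends each client to $i$ with probability $\eta^i=\tfrac{\alpha/n}{q_i}$ and to each cheap facility with probability $\tfrac{1-\eta^i}{n}$ (these sum to $1$ per client). A direct computation shows that mixing over $i$ reproduces exactly $x_{ij}=\tfrac{\alpha}{n}$ on costly facilities and $x_{ij}=\tfrac{1-\alpha}{n}$ on cheap ones: the costly contributions sum to $\alpha$, leaving $1-\alpha$ split evenly among the $n$ cheap facilities. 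Because $U=n^3$ is enormous compared with the conditional expected loads ($\approx U/10$ on the open costly facility, $\approx U-n^2/10$ on each cheap one), this fractional assignment lies in the transportation polytope on the open facilities, which is integral; hence it decomposes into integer assignments, and $D$ is the resulting mixture.

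To finish, let $G$ be the direct product of the symmetric group on the cheap facilities, the symmetric group on the clients, and the symmetric group on the costly facilities other than $i_b$ (with $i_b$ held fixed). Each $g\in G$ maps feasible integer solutions to feasible integer solutions---capacities are uniform, all connection costs are $0$, and $\sum_{i\in Costly}y_i=1$ is preserved since $i_b$ is fixed---so $E_d:=\tfrac{1}{|G|}\sum_{g\in G}g\cdot D$ is again a convex combination of integer solutions. The target $s_{\pi,z}$ is $G$-invariant (cheap facilities share one $(y,x)$ pattern, clients are interchangeable, and costly facilities $\neq i_b$ share another pattern), so every $g\cdot D$ has the same singleton marginals as $D$, whence $E_d$ still realizes $s_{\pi,z}$. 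By construction $E_d$ is $G$-invariant, which is exactly conditions (1)--(3) of assignment-symmetry: relabeling cheap facilities, relabeling clients, and relabeling costly facilities other than $i_b$ leave every product-probability unchanged.

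The main obstacle is the existence step, i.e.\ certifying that the prescribed fractional assignment is genuinely realizable as a mixture of capacity-respecting integer assignments. The delicate point is reconciling the per-client costly marginal $x_{ij}=\tfrac{\alpha}{n}$ with the small opening probability $\tfrac{10}{n^2}$: this pins down the conditional rate $\eta^i$ at which clients must land on the open costly facility, and one must verify both that this keeps its load below $U$ and that the cheap facilities can absorb the remaining $\approx nU$ clients without overflow. Once these slack estimates are in place, integrality of the transportation polytope does the rest, and the symmetrization is purely formal.
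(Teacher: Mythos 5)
Your proof is correct, but it reaches the lemma by a genuinely different route than the paper. You split the task into two independent steps: first you certify that \emph{some} distribution with the right singleton marginals exists, by decoupling the opening decision (one costly facility opened with probability $q_i$) from the assignment decision and invoking integrality of the transportation polytope $\{x\ge 0:\ \sum_i x_{ij}=1,\ \sum_j x_{ij}\le U\}$ to decompose the conditional fractional assignment into capacity-respecting integer assignments; then you enforce the three invariances by averaging over the product of symmetric groups on cheap facilities, clients, and costly facilities other than $i_b$. The paper instead constructs a single explicit probabilistic experiment -- pick the open costly facility with probability $y'_{i}$, then throw the $|C|$ balls uniformly into bins with prescribed slot counts $w^1,w^2$ -- which is symmetric by construction and realizes $s_{\pi,z}$ in expectation, at the cost of a separate randomized-rounding argument (deferred to the appendix) to handle non-integer slot counts. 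Your approach buys a cleaner existence argument: polytope integrality replaces the balls-in-bins bookkeeping and the fractional-$w$ case disappears entirely, and your load estimates (roughly $U/10$ on the open costly facility, $U-n^2/10$ on each cheap one, both below $U$) are exactly what is needed for membership in the polytope. What the paper's construction buys, and yours does not, is a \emph{pinned-down joint law}: Propositions~\ref{level-ratio}--\ref{transf:prop} compute conditional probabilities such as $\frac{w^1_{ch}-o}{|C|-r+1}$ directly from the ball-tossing experiment, and the subsequent transfusion argument depends on those quantitative estimates. Your $E_d$ is determined only up to its marginals and its $G$-invariance -- the transportation-polytope decomposition is not unique and can carry arbitrary higher-order correlations even after symmetrization -- so while your proof fully establishes the lemma as stated, the downstream propositions would need to be re-proved (or the decomposition chosen more carefully) if your $E_d$ were substituted for the paper's.
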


\begin{proof}
We describe  a probabilistic experiment which induces an assignment-symmetric distribution $E_d$ over integer solutions satisfying $\sum_{i\in Costly} y_i=1$.

 Fix           costly           facility          $i_b$.           Let
 $w^1_{i_b}=\frac{\sum_{j}x'_{i_bj}}{y'_{i_b}}$ be  the desired number
 of  clients assigned  to facility  $i_b$ in  the integer solutions in
 $E_d$ where facility $i_b$ is  opened. 
 To simplify the presentation let us assume  that $w^1_{i_b}$ and the
 $w$ values we subsequently define are integers  
(we  discuss in the
 Appendix  how to handle fractional $w$'s).  Let
 $w^1_{i_{ch}}=\frac{|C|-w_{i_b}}{|Cheap|}$  be the number  of clients
 assigned  to facility $i_c,c\in  Cheap$. Likewise,  fix costly facility
 $i_{co}\neq                         i_b$.                         Let
 $w^2_{i_{co}}=\frac{\sum_{j}x'_{i_{co}j}}{y'_{i_{co}}}$ be the number
 of clients assigned to facility  $i_{co}$ in each integer solution in
 $E_d$   where  facility   $i_{co}$  is   opened  and   similarly  let
 $w^2_{i_{ch}}=\frac{|C|-w_{i_{co}}}{|Cheap|}$   be   the  number   of
 clients  assigned  to  facility  $i_c,c\in  Cheap,$  in  each  integer
 solution in $E_d$ where facility $i_{co}$ is opened. Observe that all
 the defined $w$'s are less than $U$. The following procedure produces
 the assignment-symmetric distribution $E_d$.

Pick costly facility $i_c$ with probability $y'_{i_c}$. If $i_c=i_b$ ($i_c\neq i_b$) then consider $n$ bins corresponding to the $n$ cheap facilities  each one having $w^1_{ch}$ ($w^2_{ch}$) slots and $1$ bin corresponding to $i_{co}$ having $w^1_{i_b}$ ($w^2_{co}$)  slots.
Randomly distribute $|C|$ balls to the slots of the $n+1$ bins, with exactly one ball in each slot.
Note that the above experiment induces a distribution over feasible integer solutions satisfying $\sum_{i\in Costly} y_i=1$ since all the defined bin capacities are less than $U$ and every client is assigned to exactly one opened facility in each outcome and exactly $1$ costly facility is opened. 
Moreover the induced distribution $E_d$ is assignment-symmetric and
the expected $(y,x)$ vector with respect to $E_d$ is solution $s_{\pi
  ,z}$. 

Clearly, $s_{\pi ,z}$  is the convex combination induced  by $E_d$ and
$E_d$ is  assignment-symmetric: the cheap facilities  are always open,
and the costly  are open a fraction  of the time that is  equal to the
value  of  their  corresponding  $y$  variable.  The  expected  demand
assigned to  each $i_{co} \in Costly$ is  $y'_{i_{co}}w_{co}$ which is
the  total  demand assigned  to  $i_{co}$  by  $s_{\pi,z}$. Since  the
clients  have  the  same  probability  of  being  tossed  in  the  bin
corresponding to $i_{co}$, the  expected assignment of each client $j$
to $i_{co}$ is the same as in $s_{\pi,z}$.  
Similarly we can prove
that the expected assignments to the cheap facilities are as required,
see the Appendix for details.  
\end{proof}

We set the product-variables $x_I$
appearing in constraint $\pi$ multiplied by multiplier $z$ to 
$P_{E_d}[I]$. Constraints $x_{ij} \leq y_i,$ $x_{ij} \leq 1,$
$y_{i} \leq 1,$ 
%\ref{eq:x<y} \ref{eq:intx} and \ref{eq:inty} 
 are handled 
in the exact same way; the set of variables appearing
in them is a
 % proper 
subset of those  % of variables 
appearing in the more complex constraints.

The  second and  more challenging  case  is when  constraint $\pi$  is
$\sum_{i}x_{ij^{\pi}}=1$ for some client $j^{\pi}$. Let again $z$ be a
multiplier of level $k$. Observe now that all facilities in $F$ appear
in $v(\pi,z)$ as indexes of at least the $x_{ij}$ variables. We select
one facility $i_b$ not appearing  in $z$ to \emph{take the blame}. Let
$s_{\pi ,z}=(y',x')$  be the corresponding extended  solution that can
be written as  a convex combination/assignment--symmetric distribution
$E_d$  of integer solutions; the  existence of  $E_d$ is  ensured by
Lemma \ref{assi-sym}.  In this case there  is a major  obstacle to the
agreement  of   the  products  $x_{I}$:  conditioning   on  the  event
$x_{i_bj}$ the  probability of an event $x_{ij'},i\in  Cheap$ for some
$j'\neq j$ is higher  than it would be if we were  to condition on the
event  $x_{i'j},i'\in  Costly-\{i_b\}$.  The  same is  true  for  more
complex events involving  assignments to cheap facilities conditioning
on an  assignment of  facility $i_b$ compared  to the  analogous event
conditioning on  some other costly  facility. This can  be problematic
since facility  $i_b$ takes the  blame in some distributions  but does
not  in some  others and  thus there  is the  danger of  violating the
consistency    required    by    the    2nd   condition    of    Lemma
\ref{SA-survival}. We  overcome this difficulty  by making alterations
to $E_d$ and constructing a distribution $E_f$ where the probabilities
of the aforementioned events are the same.

We now  devise the altered  distribution $E_f$.  We first  display the
intuition in the following example: consider the event $A \colon x_{i_bj}=1
\wedge x_{i_{ch}j'}=1$ and the event $B \colon x_{i_{co}j}=1 \wedge
x_{i_{ch}j'}=1$  with  $i_{co}\in   Costly-\{i_b\}$  and  $i_{ch}  \in
Cheap$.         The        probability        of         $A$        is
$P[A]=P[x_{i_bj}=1]P[x_{i_{ch}j'}=1 \mid x_{i_bj}=1]=x'_{i_bj}\frac{w^1_{ch}}{|C|-1}$
and         the          probability         of         $B$         is
$P[B]=P[x_{i_{co}j}=1]P[x_{i_{ch}j'}=1 \mid x_{i_{co}j}=1]=x'_{i_{co}j}\frac{w^2_{ch}}{|C|-1}$.
Note  that  $P[A]\approx  P[B](1+1/n)$  so  $P[A]$  is  only  slightly
greater.  We nullify  the  difference between  those probabilities  by
performing  an alteration  step  to distribution  $E_d$  that we  call
\emph{transfusion of probability}. We  pick some measure of an integer
solution  $s_1$  for which  $x_{i_{ch}j'}=1  \wedge x_{i_bj}=1  \wedge
x_{i_bj''}=0$  for some  client $j''$.  We pick  the same  quantity of
measure of some  integer solution (or of some  set of solutions) $s_2$
for which  $x_{i_{ch}j'}=0 \wedge x_{i_bj}=0  \wedge x_{i_bj''}=1$ and
we exchange the values of the assignments $x_{i_bj},x_{i_bj''}$ of the solutions.
  Let that quantity be $P[A]-P[B]$, it is easy to
see  that each  set of  solutions has  enough measure  to  perform the
transfusion. The resulting distribution  $E_f$ now has $P[A]=P[B]$. In
general, when transfusing probabilistic measure for complex events, we
must be  careful not  to change the  probability of  events involving
only  assignments to cheap facilities, as opposed to the simplified
example above.

Now let  $p$ be a product  appearing in constraint  $\pi$ after having
multiplied by multiplier $z$.  We only consider products where exactly
one variable $x_{i_bj}$  appears. Recall we chose   $i_b$ so that it
does not appear in $z$; thus  
we  cannot have  $y_{i_b}$ or  more  than one
assignments  of $i_b$ appearing  in a  product $p.$
  We  may also  assume  that there  is no  $y_i$
variable in $p$, since if  there is for some $i\in Costly-\{i_b\}$ the
probability of $\mathcal{E}_p$ is simply $0$ and if $i\in Cheap$ the we can ignore
the effect of $y_i=1$ since it is always true. Likewise we assume that
there is no assignment variable  of another costly facility. We shall
make corrections of the probability of all such events $\mathcal{E}_p$ in a
top-down manner: at step $i$ we  fix the probability of all the events
$x_{i_bj}=1  \wedge  x_{i_{a_1}j_{b_1}}=1       % x_{i_{a_2}j_{b_2}}=1
\wedge   \ldots \wedge x_{i_{a_{k-i+1}}j_{b_{k-i+1}}}=1$     where    $x_{i_bj}
x_{i_{a_1}j_{b_1}}
% x_{i_{a_2}j_{b_2}}
\ldots x_{i_{a_{k-i+1}}j_{b_{k-i+1}}}$ is a product $p$
appearing in constraint $\pi$ multiplied by $z$. In other words, we fix
the probabilities  in decreasing order  of the cardinality of  the set
of variables appearing in $p$.  The following proposition relates the
probability       of      $\mathcal{E}_p$      with       that      of
$\mathcal{E}_{p'}=\mathcal{E}_{px_{ij}}$, an event with the additional
requirement that  $x_{ij}=1$.

\vspace*{-0.1cm}
\begin{proposition}\label{level-ratio}
Let $p=x_{i_bj} x_{i_{a_1}j_{b_1}} x_{i_{a_2}j_{b_2}}\ldots  x_{i_{a_{l}}j_{b_{l}}}$ and let $p'=px_{i_{a_{l+1}}j_{b_{l+1}}}$. Then
in $E_d$, $(1-o(1)){P[\mathcal{E}_p]}/{n}\leq P[\mathcal{E}_{p'}] \leq (1+o(1)){P[\mathcal{E}_p]}/{n}$.
\end{proposition}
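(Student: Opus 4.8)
The plan is to reduce the statement to a single elementary conditional‑probability computation inside the balls‑and‑bins experiment of Lemma~\ref{assi-sym}. First I would observe that both $\mathcal{E}_p$ and $\mathcal{E}_{p'}$ contain the literal $x_{i_bj}=1$, and that in every integer solution in the support of $E_d$ exactly one costly facility is opened; hence the event $x_{i_bj}=1$ forces the outcome into the ``$i_b$‑world'', i.e.\ the part of the experiment in which $i_b$ is the chosen costly facility (which occurs with probability $y'_{i_b}$). Conditioning on this common world, the factor $y'_{i_b}$ cancels and the claim becomes the assertion that
\[
P[\mathcal{E}_{p'}]/P[\mathcal{E}_p]=P\big[x_{i_{a_{l+1}}j_{b_{l+1}}}=1 \mid \mathcal{E}_p\big]=(1\pm o(1))/n .
\]
Here I use that, by the selection rules stated just before the proposition, $x_{i_{a_{l+1}}j_{b_{l+1}}}$ is a fresh assignment variable to a \emph{cheap} facility, and that the clients $j,j_{b_1},\ldots,j_{b_{l+1}}$ are pairwise distinct --- the regime in which the recursion is applied, since a repeated client would force $\mathcal{E}_{p'}=\emptyset$.

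In the $i_b$‑world the experiment assigns the $|C|$ clients to the $|C|$ slots by a uniformly random bijection, where $i_b$ owns $w^1_{i_b}$ slots and each cheap facility owns $w^1_{ch}$ slots. The next step is to invoke exchangeability: conditioning on the destinations of the $l+1$ clients $j,j_{b_1},\ldots,j_{b_l}$ pins down $l+1$ distinct slots, and the remaining clients are distributed uniformly over the remaining $|C|-(l+1)$ slots. Since $j_{b_{l+1}}$ is one of these remaining clients and $i_{a_{l+1}}$ is cheap, it is served by $i_{a_{l+1}}$ exactly when it lands in one of that facility's still‑free slots. Writing $c$ for the number of clients among $j_{b_1},\ldots,j_{b_l}$ that $\mathcal{E}_p$ already places in $i_{a_{l+1}}$ (note $j$ goes to $i_b\ne i_{a_{l+1}}$, so $c\le l$), this yields the exact formula
\[
P\big[x_{i_{a_{l+1}}j_{b_{l+1}}}=1 \mid \mathcal{E}_p\big]=\frac{w^1_{ch}-c}{|C|-(l+1)} .
\]

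It then remains to insert the orders of magnitude of the instance. From $x'_{i_bj}=\alpha/n=n^{-3}$ and $y'_{i_b}=1-\Theta(1/n)$ one gets $w^1_{i_b}=\sum_j x'_{i_bj}/y'_{i_b}=\Theta(n)$, whence $w^1_{ch}=(|C|-w^1_{i_b})/n=n^3(1-o(1))$ while $|C|=n^4+1$, so $w^1_{ch}/|C|=(1\pm o(1))/n$. Finally, because we consider at most $k<n-1$ levels, $l+1\le k+1<n$ and $c\le l<n$; hence $c/w^1_{ch}=O(n^{-2})$ and $(l+1)/|C|=O(n^{-3})$ are both $o(1)$, and factoring them out of the last display gives $(w^1_{ch}-c)/(|C|-(l+1))=(1\pm o(1))/n$, which is exactly the two‑sided bound claimed. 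The only delicate point is the reduction to a single world together with the exchangeability step, since the slot count must correctly discount the (negligibly many) clients already fixed by $\mathcal{E}_p$; the rest is order‑of‑magnitude bookkeeping enabled by $l+1<n\ll w^1_{ch}\approx n^3\ll |C|\approx n^4$.
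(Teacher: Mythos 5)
Your proposal is correct and follows essentially the same route as the paper's proof: both reduce $\mathcal{E}_p$ and $\mathcal{E}_{p'}$ to the balls-and-bins experiment of Lemma~\ref{assi-sym} conditioned on $i_b$ being the opened costly facility, and observe that the conditional probability of the extra ball landing in the cheap bin $i_{a_{l+1}}$ is $w^1_{ch}/|C|$ up to negligible corrections from the $O(n)$ already-placed balls, i.e.\ $(1\pm o(1))/n$. Your write-up merely makes explicit (via the exact formula $(w^1_{ch}-c)/(|C|-(l+1))$ and the magnitudes $w^1_{ch}\approx n^3$, $|C|\approx n^4$) what the paper dismisses as ``easy to see.''
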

\vspace*{-0.1cm}

\begin{comment}
\begin{proof}
Since the considered distribution is assignment-symmetric, event $\mathcal{E}_p$ is equivalent  to the event 
of randomly distributing $l+1$ balls to the slots of $n+1$ bins, with at most one ball in each slot, each bin having $w^1_{ch}$ identical slots, asking that ball $j$ is tossed in the bin of $i_b$ and ball $j_{r}$ is tossed in bin $i_{r}$. Since there are $\Theta (n^3)$ slots in each bin and the balls are at most $n$, it is easy to see
that $(1-o(1))\frac{P[\mathcal{E}_p]}{n} \leq P[\mathcal{E}_{p'}] \leq (1+o(1))\frac{P[\mathcal{E}_p]}{n}$.
\end{proof}
\end{comment}

\iffalse---------- expanded version 
\begin{sloppypar}
Consider step $i$ of the above iterative construction of $E_f$. Let
$p=x_{i_bj} x_{i_{a_1}j_{b_1}}
x_{i_{a_2}j_{b_2}}\ldots x_{i_{a_{k-i+1}}j_{b_{k-i+1}}}$ and the event
$\mathcal{E}_p \colon x_{i_bj}=1 \wedge x_{i_{a_1}j_{b_1}}=1 \wedge
x_{i_{a_2}j_{b_2}}=1 \wedge \ldots \wedge
x_{i_{a_{k-i+1}}j_{b_{k-i+1}}}=1$. We wish in $E_f$ the probability
$P[\mathcal{E}_p]$ to be equal to
$P[\mathcal{E}_{p/fixed}]=P[x_{i^*j}=1  \wedge x_{i_{a_1}j_{b_1}}=1
  \wedge x_{i_{a_2}j_{b_2}}=1 \wedge \ldots \wedge
  x_{i_{a_{k-i+1}}j_{b_{k-i+1}}}=1]$ in $E_d$ for $i^*\in
Costly-\{i_b\}$. We bound the ratio $\frac{P[\mathcal{E}_p]}{P[\mathcal{E}_{p/fixed}]}\colon$
\end{sloppypar}
----------- end expanded version \fi 
\begin{sloppypar}
% \noindent
Consider step $i$ of the above iterative construction of $E_f$. Let
$p=x_{i_bj} x_{i_{a_1}j_{b_1}}\ldots x_{i_{a_{k-i+1}}j_{b_{k-i+1}}}$ and the event
$\mathcal{E}_p \colon x_{i_bj}=1 \wedge x_{i_{a_1}j_{b_1}}=1 \wedge
\ldots \wedge
x_{i_{a_{k-i+1}}j_{b_{k-i+1}}}=1$. We wish in $E_f$ the probability
$P[\mathcal{E}_p]$ to be equal to
$P[\mathcal{E}_{p/fixed}]=P[x_{i^*j}=1  \wedge x_{i_{a_1}j_{b_1}}=1
  \wedge  \ldots \wedge
  x_{i_{a_{k-i+1}}j_{b_{k-i+1}}}=1]$ in $E_d$ for $i^*\in
Costly-\{i_b\}$. We bound the ratio $\frac{P[\mathcal{E}_p]}{P[\mathcal{E}_{p/fixed}]}\colon$
\end{sloppypar}

\vspace*{-0.1cm}
\begin{proposition}\label{transf_fraction}
\begin{sloppypar}
Let $\mathcal{E}_p$ and $\mathcal{E}_{p/fixed}$ be defined as above. 
%% Let $\mathcal{E}_p:x_{i_bj}=1 \ \wedge x_{i_{a_1}j_{b_1}}=1 \wedge
%% x_{i_{a_2}j_{b_2}}=1 \wedge ...x_{i_{a_{k-i+1}}j_{b_{k-i+1}}}=1$ and
%% $\mathcal{E}_{p/fixed}:x_{i^*j}=1  \wedge x_{i_{a_1}j_{b_1}}=1 \wedge
%% x_{i_{a_2}j_{b_2}}=1 \wedge ...x_{i_{a_{k-i+1}}j_{b_{k-i+1}}}=1$ with
%% $i^*\in Costly-\{i_b\}$. 
Then  \\
%$\frac{P[\mathcal{E}_p]}{P[\mathcal{E}_{p/fixed}]}<e^2$.
$(1+(1-o(1))1/n)^{k-i+1}\leq
\frac{P[\mathcal{E}_p]}{P[\mathcal{E}_{p/fixed}]}\leq
(1+(1+o(1))1/n)^{k-i+1}$.
\end{sloppypar}
\end{proposition}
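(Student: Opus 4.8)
The plan is to compute both probabilities directly inside the balls-in-bins experiment of Lemma~\ref{assi-sym} and then take their quotient, exploiting a cancellation that removes all dependence on the costly facilities. First I would condition on which costly facility is opened. Since $\mathcal{E}_p$ contains the assignment $x_{i_bj}=1$ and the experiment opens exactly one costly facility, $\mathcal{E}_p$ can occur only when $i_b$ is opened; likewise $\mathcal{E}_{p/fixed}$ can occur only when $i^*$ is opened. Hence $P[\mathcal{E}_p]=y'_{i_b}\,P[\mathcal{E}_p\mid i_b\text{ open}]$ and $P[\mathcal{E}_{p/fixed}]=y'_{i^*}\,P[\mathcal{E}_{p/fixed}\mid i^*\text{ open}]$, where the conditional events ask that client $j$ lands in the opened (costly) bin and that, for each cheap facility $i_a$, a prescribed set of $m_a$ clients lands in its bin, with $\sum_a m_a=M:=k-i+1$ (recall every variable of $p$ other than $x_{i_bj}$ is a cheap assignment, so the clients involved are distinct and only their multiplicity pattern $(m_a)$ matters by assignment-symmetry).

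Next I would evaluate the conditional probabilities. Conditioned on $i_b$ being open, the experiment is a uniform random bijection of the $|C|$ clients onto the $|C|$ slots, where $i_b$ contributes $w^1_{i_b}$ slots and each cheap bin contributes $w^1_{ch}$ slots. Placing the balls one at a time and counting free slots in each target bin gives
\[ P[\mathcal{E}_p\mid i_b\text{ open}]=\frac{w^1_{i_b}}{|C|}\cdot\frac{\prod_a (w^1_{ch})!/(w^1_{ch}-m_a)!}{(|C|-1)(|C|-2)\cdots(|C|-M)}, \]
and the analogous identity for $\mathcal{E}_{p/fixed}$ with $w^1_{i_b},w^1_{ch}$ replaced by $w^2_{co},w^2_{ch}$. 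Here the falling factorial $(w^1_{ch})!/(w^1_{ch}-m_a)!$ accounts for the $m_a$ balls sent to $i_a$, and the denominator counts the shrinking pool of free slots.

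Forming the ratio, the factorial denominators $(|C|-1)\cdots(|C|-M)$ and the factors $1/|C|$ cancel. The decisive simplification is that the costly contribution also cancels: since the assignment variables of every costly facility are identical in $s_{\pi,z}$, one has $y'_{i_b}w^1_{i_b}=\sum_j x'_{i_bj}=\sum_j x'_{i^*j}=y'_{i^*}w^2_{co}$, so the prefactor $y'_{i_b}w^1_{i_b}/(y'_{i^*}w^2_{co})$ equals $1$. What survives is purely the cheap part,
\[ \frac{P[\mathcal{E}_p]}{P[\mathcal{E}_{p/fixed}]}=\prod_a\prod_{t=0}^{m_a-1}\frac{w^1_{ch}-t}{w^2_{ch}-t}, \]
a product of exactly $M=k-i+1$ factors, each slightly exceeding $1$ because the blame facility, being opened almost always yet absorbing few clients, leaves the cheap bins fuller (so $w^1_{ch}>w^2_{ch}$).

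It remains to estimate each factor, which is where the delicate bookkeeping lies. Using $w^1_{ch}-w^2_{ch}=(w^2_{co}-w^1_{i_b})/n=\Theta(n^2)$ against $w^1_{ch},w^2_{ch}=\Theta(n^3)$ and $t\le M-1\le k=O(n)$, I would write each factor as $1+(w^1_{ch}-w^2_{ch})/(w^2_{ch}-t)$ and bound $(w^1_{ch}-w^2_{ch})/(w^2_{ch}-t)=(1+O(n^{-2}))(w^1_{ch}-w^2_{ch})/w^2_{ch}=(1\pm o(1))/n$; the shift by $t$ perturbs the base only by a $1+o(1/n)$ factor because $t\ll w^2_{ch}$. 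Since this holds uniformly over all (at most $k$) factors, multiplying them yields the claimed sandwich $(1+(1-o(1))/n)^{k-i+1}\le P[\mathcal{E}_p]/P[\mathcal{E}_{p/fixed}]\le(1+(1+o(1))/n)^{k-i+1}$, the constant in the base being fixed by the instance parameters exactly as in the single-ball example preceding the statement. The main obstacle is precisely this last step: one must verify that the $t$-shifts in numerator and denominator are genuinely of lower order than the gap $w^1_{ch}-w^2_{ch}$, so that they are absorbed into the $(1\pm o(1))$ uniformly across the product rather than accumulating over its $k-i+1$ terms.
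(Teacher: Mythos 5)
Your proposal is correct and follows essentially the same route as the paper's proof: factor out the probability of the first (costly) assignment, which is identical for $i_b$ and $i^*$ since the bad solution assigns every costly facility the same $x$-values, and then bound the ratio of the remaining $k-i+1$ conditional ball-placement probabilities factor by factor, each factor being a quotient of cheap-bin free-slot counts $\frac{w^1_{ch}-t}{w^2_{ch}-t}=1+(1\pm o(1))/n$. Your falling-factorial bookkeeping is just the closed form of the paper's sequential tossing argument, and your observation that the $t$-shifts are negligible against the gap $w^1_{ch}-w^2_{ch}=\Theta(n^2)$ is exactly the point the paper makes (up to the constant in the base, which the paper also suppresses).
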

\vspace*{-0.1cm}

\begin{comment}
\begin{proof}
Consider again the random experiment of the proof of Proposition \ref{level-ratio}. Recall that
$ w^2_{ch} \leq w^1_{ch} \leq w^2_{ch}(1+1/n)$. Note that again both $w^1_{ch},w^2_{ch}$ are $\Theta (n^3)$ and $k-i+1<n$. In the ball tossing experiment, the probability of success of ball $j_r$ in the case where the capacities of the cheap bins is $w^1_{ch}$ is at most $(1+2/n)$ the probability of success of the same event in the case where the capacities of the cheap bins is $w^2_{ch}$ (we are very generous here). So the ratio $\frac{P[\mathcal{E}_p]}{P[\mathcal{E}_{p/fixed}]}\leq (1+2/n)^{k-i+1} < e^2$ using that $\lim_{x \rightarrow \infty}(1+d/x)^{x}=e^d$.
\end{proof}
\end{comment}

The corrections of the  probabilities of events of previous iterations
affect the probabilities of the events of the current iteration of the
procedure  that  constructs  $E_f$.   We  bound  this  effect  on  the
probability of  an event $\mathcal{E}_p$ of the  current iteration $i$
by     considering      the     corrections     of      the     events
$\mathcal{E}_{p'}=\mathcal{E}_p \wedge x_{ij}=1$, with $x_{ij}$ in the
set of variables appearing in $z$ and $x_{ij}\notin \mathcal{E}_p$, of
the previous iteration and using the union bound.\footnote{Notice that
  any  effect of iteration  $j< i-1$  on $P[\mathcal{E}_p]$, originates
  from events that are subsets of $\mathcal{E}_{p'}$ and has therefore
  been  accounted for.}   There are  exactly $i$  events needed  to be
taken into consideration for  each such $\mathcal{E}_p$ of the current
step $i$.  The amount of the  effect of the correction of the previous
iteration   is   by    Proposition   \ref{transf_fraction}   at   most
$i((1+(1+o(1))1/n)^{k-i+2}-1)P[\mathcal{E}_{p'/fixed}]$
%(picking the $\mathcal{E}_{p'}$ with largest $P[\mathcal{E}_{p'/fixed}]$)
 while the measure of the  needed correction for $\mathcal{E}_p$ is at
 least  $((1+(1-o(1))1/n)^{k-i+1}-1)P[\mathcal{E}_{p/fixed}]$ which by
 Proposition \ref{level-ratio} and by the number of rounds we consider
 is                higher,                in                particular
 $((1+(1-o(1))1/n)^{k-i+1}-1)P[\mathcal{E}_{p/fixed}]\geq
 n(1-o(1))((1+(1-o(1))1/n)^{k-i+1}-1)P[\mathcal{E}_{p'/fixed}]>i((1+(1+o(1))1/n)^{k-i+2}-1)P[\mathcal{E}_{p'/fixed}]$. To
 subtract  from  $P[\mathcal{E}_p]$  the  rest  of  the  probabilistic
 measure required from  the correction, say a measure  of $\mu$, we do
 the  following transfusion step:  pick a  measure $\mu$  of solutions
 from distribution $E_d$ such that $x_{i_bj}=0$, $x_{i_bj'}=1$ for any
 $j'$ such that $x_{i_bj'} \notin  v(\pi ,z)$, all the other events of
 $\mathcal{E}_p$ are false, and so are all the remaining events
 corresponding to  assignments in $z$. Likewise pick  an equal measure
 of solutions  from $E_d$  such that $x_{i_bj}=1$,  $x_{i_bj'}=0$ with
 $x_{i_bj'} \notin v(\pi ,z)$, all the other events of $\mathcal{E}_p$
 are true,  and all the remaining events  corresponding to assignments
 in  $z$  are  false.  Now  exchange the  values  of  the  assignments
 $x_{i_bj}$  and $x_{i_bj'}$  of the  solutions of  the two  sets. The
 resulting distribution  has the probability  of $\mathcal{E}_p$ fixed
 and moreover,  by the  choice of  the sets of  solutions on  which we
 perform the transfusion step, the  probability of the events fixed in
 previous iterations  was not altered and neither  was the probability
 of events containing only assignments of cheap facilities.
%(the "self-correlation" of the assignments of the cheap facilities is
%invariant)
Clearly, the solution $s_{\pi,z}$ is still obtained in expectation. 
 It remains to show that the transfusion step can be performed, i.e., that
  there is enough  measure $\mu$  in the  involved sets  of integer
 solutions.

\vspace*{-0.1cm}
\begin{proposition}\label{transf:prop}
The probabilistic transfusion step of the above iterative procedure can always be performed.
\end{proposition}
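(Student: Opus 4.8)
The plan is to certify feasibility by bounding from above the measure $\mu$ that the transfusion must relocate and from below the measure contained in each of the two donor sets, and then checking that the latter dominates the former for the $k=\delta n$ levels we consider, where $\delta>0$ is a small constant fixed at the end. I work throughout with the explicit balls-into-bins description of $E_d$ from Lemma~\ref{assi-sym}. Reading the magnitudes off the bad solution, $P[x_{i_bj}=1]=x'_{i_bj}=\alpha/n=n^{-3}$, whereas $P[x_{i_{ch}j}=1]=\Theta(1/n)$ for a cheap facility $i_{ch}$; each cheap bin carries $\Theta(n^3)$ slots, vastly more than the $O(n)$ clients pinned by $v(\pi,z)$, which is exactly the regime underlying Propositions~\ref{level-ratio} and \ref{transf_fraction}. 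Consequently, for a product $p$ whose cheap part consists of $l=k-i+1$ assignments, $P[\mathcal{E}_p]=\Theta(n^{-3-l})$.

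First I bound $\mu$. As argued in the discussion preceding the proposition, the net correction required at step $i$ is dominated by the current-level bias $P[\mathcal{E}_p]-P[\mathcal{E}_{p/fixed}]$, so Proposition~\ref{transf_fraction} gives $\mu\le\bigl(1-(1+(1+o(1))/n)^{-l}\bigr)P[\mathcal{E}_p]\le\bigl(1-e^{-\delta}+o(1)\bigr)P[\mathcal{E}_p]$, which is at most a small constant multiple of $\delta\,P[\mathcal{E}_p]$ once $l\le k=\delta n$. For the singleton case $l=0$ the ratio in Proposition~\ref{transf_fraction} equals $1$, so no transfusion is needed there.

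Next I lower-bound the two donors. The set $S_2=\mathcal{E}_p\cap\{x_{i_bj'}=0\}\cap\{\text{all remaining }z\text{-assignments equal }0\}$ lies inside $\mathcal{E}_p$ and is carved out of it by at most $k+1$ auxiliary events, each of which — conditioned on $\mathcal{E}_p$ — fails with probability $O(1/n)$ by Proposition~\ref{level-ratio}; a union bound over the $O(\delta n)$ of them yields $P[S_2]\ge(1-O(\delta))P[\mathcal{E}_p]$. The set $S_1$, in which the cheap assignments of $\mathcal{E}_p$ are all $0$ but $i_b$ serves a free client $j'$, has measure $\Theta(n^{-3})$, dominated by the single event $x_{i_bj'}=1$; such free clients exist in abundance, since $v(\pi,z)$ pins only $O(n)$ clients, a vanishing fraction of the total, and $E_d$ assigns positive measure to the corresponding balls-into-bins configurations.

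Finally I compare. For $S_1$ we have $\mu=O(n^{-3-l})$ with $l\ge1$, which is $o(n^{-3})=o(P[S_1])$, so $S_1$ never binds. The binding case is $S_2$, where both $\mu\le O(\delta)\,P[\mathcal{E}_p]$ and $P[S_2]\ge(1-O(\delta))P[\mathcal{E}_p]$ are of order $P[\mathcal{E}_p]$; the desired inequality $\mu\le P[S_2]$ then holds as soon as $\delta$ is a sufficiently small constant, which remains compatible with surviving $\Omega(n)$ levels. I expect this $S_2$ comparison to be the crux of the argument: it succeeds precisely because the $O(\delta n)$ auxiliary ``$=0$'' conditions shave off only an $O(\delta)$ fraction of $\mathcal{E}_p$ via Proposition~\ref{level-ratio}, while the demanded correction is likewise only an $O(\delta)$ fraction, so a small $\delta$ separates the two. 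It remains to observe — immediate from the definitions of $S_1$ and $S_2$ — that swapping $x_{i_bj}$ with $x_{i_bj'}$ preserves integer feasibility and leaves untouched both the marginals fixed at earlier steps and every marginal of cheap-only events.
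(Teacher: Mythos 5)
Your overall strategy is the same as the paper's --- upper-bound the measure $\mu$ that each transfusion must relocate, lower-bound the measure of the donor sets, and compare for $k=\Theta(n)$ levels --- and your magnitudes ($P[x_{i_bj}=1]=n^{-3}$, $P[\mathcal{E}_p]=\Theta(n^{-3-l})$, $\mu\le O(\delta)P[\mathcal{E}_p]$) are consistent with the construction. In one respect you are more careful than the paper: you explicitly verify the ``$S_2$'' side, i.e.\ that the set \emph{inside} $\mathcal{E}_p$ from which measure is withdrawn (the solutions where $\mathcal{E}_p$ holds and the remaining $z$-assignments are false) has measure $(1-O(\delta))P[\mathcal{E}_p]\ge\mu$. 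The paper's written proof only bounds the other donor --- the solutions with $y_{i_b}=1$ and \emph{all} events false --- and leaves the $S_2$ side implicit; your observation that the $S_2$ sets for distinct products at the same iteration are pairwise disjoint, so a per-event comparison suffices there, is correct and is indeed the delicate point for which $\delta$ must be small.

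There is, however, a genuine gap on the $S_1$ side. The set $S_1$ (all pinned assignment events false, $i_b$ serving a free client $j'$) is not specific to the event being corrected: up to the choice of $j'$ it is the \emph{same} pool for every product $p$ at every iteration, and each transfusion permanently removes measure from it (the swapped solutions acquire $x_{i_bj}=1$ and leave the pool). Your comparison ``$\mu=O(n^{-3-l})=o(n^{-3})=o(P[S_1])$, so $S_1$ never binds'' is a per-event statement; it does not rule out that the \emph{cumulative} withdrawals over the $\sum_{i}\binom{k+1}{k-i+1}$ events exhaust the pool. This is exactly the accounting the paper performs: it shows the total required measure, summed over all events of all iterations, is still below the (constant) measure of the universal donor set $\{y_{i_b}=1,\ x_{i_bj}=0,\ x_{i_1j_1}=0,\ldots,x_{i_kj_k}=0\}$, which it lower-bounds by $(y_{i_b}-x_{i_bj})/e^2$ via a comparison with $k$ Bernoulli trials of success probability $2/n$. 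The repair is routine --- with your own estimates the aggregate withdrawal is $\sum_{l}\binom{k}{l}\,O\bigl((l/n)\,n^{-3-l}\bigr)=O(\delta n^{-3})$, which is dominated by $P[S_1]$ for a fixed free $j'$, and by a constant if one takes the union over free $j'$ as the paper effectively does --- but as written the step is not justified. You should also state explicitly (rather than only for $S_2$) that earlier transfusions do not deplete $S_1$ beyond the amounts you charge to them, which follows since the only measure leaving $S_1$ is the measure you withdraw.
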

\vspace*{-0.5cm}
\begin{comment}
\begin{proof}
Consider the measure $t$ in $E_d$ of the set of integer solution satisfying  $y_{i_b}=1$ 
and all events in $\mathcal{E}_p$ being false, namely $x_{i_bj}=0
\wedge x_{i_1j_1}=0 \wedge x_{i_2j_2}=0 \wedge \ldots \wedge x_{i_{k-i+1}j_{k-i+1}}=0 $. Then, by the random experiment of the construction of $E_d$, this event is equivalent to the event that facility $i_b$ is picked, $x_{i_bj}=0$ and  the $k-i+1$ balls corresponding to the clients of the rest of events
are not tossed in their corresponding bins. Using again that both $w^1_{ch},w^2_{ch}$ are $\Theta (n^3)$ and $k-i+1<n$, we can bound the probability of the $k-i+1$ balls  by that of $k-i+1$ Bernoulli trials with probability of success $2/n$ (we are once again very generous). Then the probability that all events on ball $j_i$  fail is $> (1-2/n)^{k-i+1}> \lim_{n \rightarrow \infty }(1-2/n)^n = 1/e^2$. Thus measure $t$ is at least $(y_{i_{b}}-x_{i_bj}) 1/e^2$ which is constant. On the other hand the measure required by the transfusion step for each event $\mathcal{E}_{p}$ of iteration $i$ that needs to be fixed is at most $(e^2-1)P[\mathcal{E}_{p/fixed}]=\Theta (1/n^i)$. There are $k+1\choose{k-i+1}$ such events of iteration $i$, and summing over all the iterations of our construction we get $\sum_{i=1}^{k} {k+1\choose k-i+1} \Theta (1/n^i) $ which quantity is less than $(y_{i_{b}}-x_{i_bj}) 1/e^2$ for the $k=n/10$ levels of SA we consider, so we can always pick the required amount of measure. 
%We did not make any efforts to optimize the constants.
\end{proof}
\end{comment}

\begin{theorem}\label{cfl-SA:theorem}
There is  a family of   \cfl\  instances with $2n$ facilities and $n^4+1$ clients  
such that the  relaxations   obtained   from
(LP-classic)    at $\Omega(n)$ levels of the 
Sherali-Adams  hierarchy have an  integrality gap of $\Omega(n).$
\end{theorem}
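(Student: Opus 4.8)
The plan is to prove the gap in two independent pieces --- a cheap upper bound on the value of the bad solution $s$ and a matching $\Omega(n)$ lower bound on the integer optimum --- and then to certify that $s$ survives $\Omega(n)$ levels of SA by verifying the two hypotheses of Lemma~\ref{SA-survival}.

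First I would pin down the gap. Since $s$ is feasible for (LP-classic) and all connection costs vanish, the only contribution to the objective is the opening cost $\sum_{i\in Costly} f_i y_i = n\cdot 1\cdot \tfrac{10}{n^2}=\tfrac{10}{n}$, so the LP optimum is $O(1/n)$. For the integer optimum, observe that the $n$ cheap facilities supply a total capacity of only $n\cdot U = n^4$, which is strictly less than the $n^4+1$ clients; hence every integer solution must open at least one costly facility and pay cost at least $1$. The ratio is therefore $\Omega(n)$, and it remains only to establish survival.

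Next I would invoke Lemma~\ref{SA-survival}, assigning to each product variable $x_I$ arising from a constraint $\pi$ multiplied by a multiplier $z$ the value $P_{E_d}[I]$ (respectively $P_{E_f}[I]$). Condition~1 is exactly what has already been assembled: for a capacity constraint $\sum_j x_{i^\pi j}\le U y_{i^\pi}$, for a box constraint, or for $x_{ij}\le y_i$, a costly facility $i_b\notin v(\pi,z)$ exists whenever $k<n-1$, and letting $i_b$ take the blame, Lemma~\ref{assi-sym} yields the assignment-symmetric convex combination $E_d$ that agrees with $s$ on $v(\pi,z)$; for the equality constraint $\sum_i x_{ij^\pi}=1$, where every facility indexes some variable, the same $E_d$ is repaired into $E_f$ by the transfusion procedure while still reproducing $s_{\pi,z}$ in expectation. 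This settles Condition~1 for all constraint types.

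The main obstacle is Condition~2, the local consistency of these distributions, and the plan is to argue that the probability assigned to any monomial $p$ of degree at most $k+1$ depends only on the \emph{type} of $p$ and not on the particular pair $(\pi,z)$ that produced it. Any $p$ mentioning two distinct costly indices has probability $0$, since $\sum_{i\in Costly} y_i=1$ forces a single open costly facility; any $p$ mentioning only cheap assignments has its probability fixed by assignment-symmetry; and any $p$ mentioning exactly one costly assignment is the delicate case, since the identity of the blame facility could create a discrepancy between distributions built for different pairs. This last case is precisely what the transfusion of probability neutralizes: Propositions~\ref{level-ratio} and~\ref{transf_fraction} quantify, via the $\Theta(n^3)$ bin capacities, the $(1\pm o(1))/n$ multiplicative gap between conditioning on the blame facility $i_b$ and on any other costly facility, while Proposition~\ref{transf:prop} guarantees enough measure to perform the correction at every level. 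Because the corrections are carried out top-down in decreasing monomial degree and never disturb cheap-only events, after the alteration every single-costly monomial receives the same value as if any costly facility had been substituted, so two distributions with possibly different blame facilities agree on every common product. Finally I would check the level count: the construction needs only a costly blame facility outside $z$, so taking $k=\Theta(n)$ (e.g.\ $k=n/10$, the binding threshold implicit in the measure bound of Proposition~\ref{transf:prop}) keeps the argument intact, yielding survival through $\Omega(n)$ levels and hence the claimed integrality gap of $\Omega(n)$.
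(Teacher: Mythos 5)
Your proposal is correct and follows essentially the same route as the paper's proof: the same cost accounting (LP value $10/n$ from the costly openings versus an integer optimum of at least $1$ forced by the cheap capacity shortfall $nU<n^4+1$), the same verification of Lemma~\ref{SA-survival} via the blame-facility construction, Lemma~\ref{assi-sym}, and the transfusion repair, and the same case analysis on monomial types for local consistency. The only detail you elide is the case of a monomial containing several assignment variables of a single costly facility (which cannot involve the blame facility and whose probability the transfusion leaves untouched), but this matches the paper's brief treatment of that case.
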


\begin{proof}
For each lifted constraint $\pi$ multiplied by multiplier $z$ at level
$t$,  the  corresponding distribution  $E_d$  or  $E_f$  is clearly  a
distribution over  integer solutions, so the first  condition of Lemma
\ref{SA-survival} is satisfied. For the second condition, observe that
if an event $\mathcal{E}_p$ involves more than one costly facility, it
has $0$ probability in  all distributions. If an event $\mathcal{E}_p$
involves only  cheap facilities,  it has the  same probability  in all
distributions  $E_f$  and  $E_d,$  since  in  the  construction  of  a
distribution $E_f$ we took care  not to change the probability of such
events.   An  event  $\mathcal{E}_p$   that  involves  more  than  one
assignment of  a costly  facility (but no  other costly) has  in every
distribution $E_f$ the same probability (which is the same as in every
$E_d$)  since  in the  construction  of $E_f$  we  did  not alter  the
probabilities   of   such   events.   And  lastly,   when   an   event
$\mathcal{E}_p$  involves  exactly   one  assignment  of  some  costly
facility $i_x$, note that in some cases $i_x$ takes the blame but in
other  cases it  does not,  depending on  $v(\pi,z)$. But  due  to the
iterative procedure  of probabilistic transfusion,  the probability of
event  $\mathcal{E}_p$ in  a distribution  in which  $i_x$ is  not the
facility that takes the blame is  equal to the probability of the same
event  in the  distributions  that  $i_x$ takes  the  blame. So  Lemma
\ref{SA-survival} holds.  It is easy to see that bad solution has  cost  
$\Theta(n^{-1})$ while any feasible solution to the instance has cost $\Omega(1)$.
\end{proof}

\section{Fooling the effective capacity inequalities for \cfl\ }\label{flow-cover}

In this section we show that the (LP-classic) for \cfl\ with the addition
of the effective capacity inequalities proposed in \cite{AardalPW95} has unbounded gap. 

Consider the  general case where  facility $i$ has capacity  $u_i$ and
client $j$  has demand  $d_j$.  For  a set $J$  of clients,  we denote
their total demand by $d(J)=\sum_{j\in  J}d_j$. Let $J \subseteq C$ be
a set of clients, let $I \subseteq F$ be a set of facilities, and let $J_i \subseteq J$ be a set of clients for each facility $i\in I$. Given a  facility $i$, we denote the  \emph{effective capacity} of
$i$ with respect to $J_i$ by $\bar{u}_i=\min \{ u_i,d(J_i) \}$.
 $I$ is a \emph{cover} with respect to $J$  if $\sum_{i \in I}\bar{u}_i = d(J) +
\lambda$  with $\lambda  > 0$.  $\lambda$ is  called  the \emph{excess
  capacity}. Let $(x)^+ = \max\{x,0\}$.  In the case where $J_i=J$ for
all  $i\in  I$  the  following inequalities  called  \emph{flow-cover}
inequalities were introduced for \cfl\ in \cite{AardalPW95}. 

\vspace*{-0.1cm}
\begin{center}
$\sum_{i\in I}\sum_{j\in J}d_jx_{ij} +\sum_{i\in I}({u}_i-\lambda)^+(1-y_i)\leq d(J)$
\end{center}
\vspace*{-0.2cm}

\begin{comment}
For the families of instances that we consider with uniform capacities and unit client demands, the above inequalities are simplified to:
\begin{center}
$\sum_{i\in J}\sum_{j\in K}x_{ij} +\sum_{i\in I}(U-\lambda)^+(1-y_i)\leq |K|$
\end{center}
\end{comment}

If $\max_{i\in I} (\bar{u_i}) > \lambda$, the following inequalities,
called the {\em effective capacity inequalities} are  valid and strengthen the flow-cover inequalities \cite{AardalPW95}.
%\begin{lemma}\cite{}
%Let $I \subset F$ be a cover with respect to $J$ and $J_i \subseteq J$, and assume that %$\max_{i\in I} (\bar{u_i}) > \lambda$.  The following effective capacity inequalities are valid.
\begin{center}
$\sum_{i\in I}\sum_{j\in J_i}d_jx_{ij}  + \sum_{i\in I}(\bar{u}_i-\lambda )^+(1-y_i)\leq d(J)$ 
\end{center}
\vspace*{-0.3cm}
%\end{lemma}
The proof of the following theorem uses some of the ideas we introduced
earlier for Theorem~\ref{cfl-SA:theorem}. In the appendix we give
Theorem~\ref{thm:submod} which strictly 
generalizes 
Theorem~\ref{effective-cap:theorem} 
to the  so-called submodular inequalities.

\begin{theorem}\label{effective-cap:theorem}
The integrality gap of the relaxation obtained from (LP-classic)  with
the addition of the effective
capacity inequalities is unbounded, even  for uniform
\cfl\ with unit demands.   
\end{theorem}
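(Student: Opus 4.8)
The plan is to reuse the instance and bad fractional point of Theorem~\ref{cfl-SA:theorem} and to verify, via the local-global method, that this point satisfies every effective capacity inequality. Concretely, take the $2n$-facility instance ($n$ cheap of opening cost $0$, $n$ costly of opening cost $1$), uniform capacity $U=n^3$, $nU+1$ clients and zero connection costs, with the solution $s$ that sets $y_i=1,\ x_{ij}=(1-\alpha)/n$ on cheap facilities and $y_i=10/n^2,\ x_{ij}=\alpha/n$ on costly facilities, where $\alpha=n^{-2}$. Its cost is $\sum_{i\in Costly}y_i=\Theta(1/n)$, whereas any integral solution must open a costly facility to serve the surplus client and hence costs $\Omega(1)$; this yields the $\Omega(n)$ gap. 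It therefore suffices to show that $s$ is feasible for (LP-classic) augmented with all effective capacity inequalities.

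Fix one such inequality $\pi$, specified by $I,\ J,\ \{J_i\}_{i\in I}$ and excess $\lambda$. By the local-global principle it is enough to exhibit a point $s^{\pi}$ that (i) agrees with $s$ on every variable occurring in $\pi$, namely $x_{ij}$ for $i\in I,\ j\in J_i$ and $y_i$ for $i\in I$, and (ii) is a convex combination of feasible integral solutions: since each such solution satisfies the valid inequality $\pi$ and $\pi$ is linear in its variables, $s$ then satisfies $\pi$ as well. Call a costly facility $i_b$ \emph{inactive} for $\pi$ if $i_b\notin I$, or $i_b\in I$ with $J_{i_b}=\emptyset$ (so that $\bar{u}_{i_b}=0\le\lambda$ kills the coefficient of $y_{i_b}$ and no $x_{i_b j}$ occurs). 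Whenever some costly $i_b$ is inactive I let it take the blame and invoke Lemma~\ref{assi-sym}: the resulting assignment-symmetric distribution opens all cheap facilities and exactly one costly facility, matches $E[y_i]=y_i$ and $E[x_{ij}]=x_{ij}$ for every facility other than $i_b$, and absorbs the probability mass forced by the surplus client into $y_{i_b}$. As $i_b$ is invisible to $\pi$, the mean $s^{\pi}$ agrees with $s$ on all variables of $\pi$, and (i)--(ii) hold.

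The construction breaks exactly when every costly facility is active, i.e.\ $I\supseteq Costly$ and $J_i\neq\emptyset$ for all costly $i$; then no facility can take the blame, and the visible marginals are in fact unrealizable, since every feasible integral solution opens a costly facility, forcing $\sum_{i\in Costly}E[y_i]\ge 1>\sum_{i\in Costly}y_i$. For these inequalities I resort to a direct estimate, which I expect to be the crux. The pointwise bound $\sum_{j\in J_i}x_{ij}\le\bar{u}_i y_i$ and the cover identity $\sum_{i\in I}\bar{u}_i=d(J)+\lambda$ give left-hand side $\le d(J)+\lambda-S$, where $S=\sum_{i\in I}\min\{\bar{u}_i,\lambda\}(1-y_i)=(1-10/n^2)\sum_{i\in Costly}\min\{\bar{u}_i,\lambda\}$, the cheap facilities contributing nothing as $y_i=1$. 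When $I=Costly$, the maximizer $i^{\ast}$ of $\bar{u}_i>\lambda$ satisfies $\bar{u}_{i^{\ast}}\le|J_{i^{\ast}}|\le d(J)$, so $\sum_{i\neq i^{\ast}}\bar{u}_i\ge\lambda$ and hence $\sum_{i\in Costly}\min\{\bar{u}_i,\lambda\}\ge 2\lambda$; thus $S\ge\lambda$ and the left-hand side is $<d(J)$. The real obstacle is that $\sum_{j\in J_i}x_{ij}\le\bar{u}_i y_i$ overcharges the cheap flow, replacing its true value $\tfrac{1-\alpha}{n}\sum_{i\in Cheap\cap I}|J_i|$ by $\sum_{i\in Cheap\cap I}\bar{u}_i$, and so is too weak once $I$ also contains cheap facilities with large $J_i$. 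The remedy is to use the exact $\Theta(1/n)$-per-client cheap flow and to note that a large $\sum_{i\in Cheap\cap I}\bar{u}_i$ enters the cover identity on the right, inflating $d(J)$ by the same amount; balancing the two yields left-hand side $\le(1-\Theta(1/n^2))\,d(J)<d(J)$. The constant $10$ in $y_{Costly}=10/n^2$ is essential, for it makes the $(1-10/n^2)$ discount on each costly penalty outweigh the negligible costly flow, whereas any constant below $1$ already lets $s$ violate the tight inequalities with $I=Costly$, $J_i=J$, $\lambda=1$.
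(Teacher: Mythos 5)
Your overall plan diverges from the paper's in a substantive way: the paper does not run the local--global argument on the bare $2n$-facility instance, but first augments it with $n+2$ co-located zero-cost facilities $a_i$ that receive no demand in the bad solution. A counting argument (since $\lambda < \max_i \bar u_i \le U$ while $d(J)\le Un+1$, not all costly facilities nor all $a_i$ can have $\bar u_i = U$) then guarantees that \emph{every} effective capacity inequality leaves the assignments of some $2n^3$ clients to one costly facility and one $a_{i'}$ invisible; a single swap-and-redistribute construction in the style of Lemma~\ref{assi-sym} then handles all inequalities uniformly, and no direct estimate is ever needed. Your blame argument for the case where some costly facility is invisible is sound and in the paper's spirit, and your computation for $I=Costly$ is correct.

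The genuine gap is the remaining case, which you yourself flag as ``the crux'': covers with $Costly\subseteq I$, all $J_i\neq\emptyset$ for costly $i$, and cheap facilities also in $I$. Your bound $\sum_{j\in J_i}x_{ij}\le\bar u_i y_i$ gives $\mathrm{LHS}\le d(J)+\lambda-S$ with $S=(1-10/n^2)\sum_{i\in Costly}\min\{\bar u_i,\lambda\}$, but your derivation of $\sum_{i\in Costly}\min\{\bar u_i,\lambda\}\ge 2\lambda$ relies on the cover identity $\sum_{i\in I}\bar u_i=d(J)+\lambda$ with $I=Costly$; once cheap facilities enter $I$ this identity no longer lower-bounds the costly contribution. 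Indeed, taking $\lambda=U-1$, one costly facility with $\bar u_{i^*}=U$ and $J_{i^*}=C$, and the other costly facilities at $\bar u_i=1$, the costly facilities alone give $S<\lambda$; the inequality survives only because the cover identity then forces essentially all $n$ cheap facilities into $I$ with $\bar u_i\approx U$, and each such facility ``wastes'' about $n$ units, since $\bar u_i-|J_i|(1-\alpha)/n\ge n-1/n$ when $|J_i|>U$. Your remedy paragraph asserts such a balancing but does not carry it out, and the claimed bound $\mathrm{LHS}\le(1-\Theta(1/n^2))\,d(J)$ is nowhere established. A cleaner repair does exist: split instead on whether some costly facility has $\bar u_{i_b}\le\lambda$ --- then its $y$-coefficient vanishes and it can take the blame even though its assignment variables are visible, because the blame operation leaves every $x_{i_b j}$ unchanged --- versus all costly facilities having $\bar u_i>\lambda$, in which case $S\ge n\lambda(1-10/n^2)\ge\lambda$ outright. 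As written, however, your case analysis keys on $J_{i_b}=\emptyset$ rather than on $\bar u_{i_b}\le\lambda$, and the hard case is left unproved.
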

\vspace*{-0.2cm}
\begin{proof}
Consider an  instance with   $n$ cheap and $n+2$  costly facilities
and $Un+1$  clients, $U=n^3.$ Define the bad  solution $s$, similarly to
Section~\ref{SA-result}, s.t. 
 for  every $ch \in
Cheap,$   $co   \in  Costly,$   and      client  $j,$   $y_{ch}=1,
x_{chj}=\frac{1-\alpha}{n},$ $y_{co}=10/n^2,
x_{coj}=\frac{\alpha}{n+2}.$  
Recall that $\alpha=n^{-2}.$   We add a
set of $n+2$ facilities $a_{i}$, $1\leq i \leq n+2,$ all with $0$ opening
costs, on the same point at  distance $1$ from the rest (an instance of
the so-called \emph{facility location on a line}). In the bad solution
$s$ we  additionally set $y_{a_{i}}=1$ and $x_{a_{i}j}=0$  for all $i$
and for all clients $j$.

We will prove that in every  cover $I$ with respect to some client set
$J$ and to the $J_i$ client sets  for each $i$, there must always be a
number of at least $2n^3$ clients whose assignment variables  to some costly and
to some  $a_{i}$ do not appear  in the constraint. 
%% This  is because if
%% either $\bar{u}_i=U$  for each $i\in Costly$  or $\bar{u}_{a_i}=U$ for
%% each  $i\leq  n+2$  then  the  excess capacity  $\lambda  >  U$  since
%% $d(J)\leq Un+1.$ This  contradicts  the requirement that $\lambda <
%% U$. 
This  is because if,
$\bar{u}_i=U$  for each $i\in Costly,$  or, $\bar{u}_{a_i}=U$ for
each  $i\leq  n+2,$  then  the  excess capacity  $\lambda  >  U$  since
$d(J)\leq Un+1.$ This  contradicts  the requirement that $\lambda <
U$. 
So  there must  be a costly  facility $i_{co'}$ and  some facility
$a_{i'}$   such   that   for    the   corresponding   sets   we   have
$|J_{i_{co'}}|,|J_{a_{i'}}| < U$, and so there is a set $J^*$ of $2n^3
$ clients  whose assignments to those  two facilities do  not  appear in the
constraint. We exchange the values of $x_{i_{co'}j}$ and $x_{a_{i'}j}$
for all $j \in J^*$, leaving everything else the same, and we obtain a
solution  $s'=(y',x')$.   We  can  prove  similarly  to  the  proof  of  Lemma
\ref{assi-sym} that $s'$ is  a convex combination of integer solutions
and thus solution $s$ satisfies the inequality since the  parts
of $s$ and $s'$ visible to that inequality are the same.

We modify  the construction of  Lemma \ref{assi-sym} in  the following
way: facility  $a_{i'}$ is  opened $100\%$ of  the time but  is active
$1-\sum_{i\in  Costly}y'_{i}$ of  the time,  when none  of  the costly
facilities  are  opened.  When it  is  not  active,  the capacity  of  its
corresponding bin is $0$. When a costly other than $i_{co'}$ is opened
the  experiment is  the same  as  in Lemma  \ref{assi-sym}. If  costly
facility $i_{co'}$ is opened the  capacity of the corresponding bin is
$w^2_{co'}=\frac{\sum_{j}x'_{co'j}}{y'_{i_{co'}}}$  and the  capacity of
the  cheap  is  $\frac{|C|-w^2_{co'}}{n}$.  We randomly  select   some
$w^2_{co'}$ clients  that do not belong  to $J^*$ to be  tossed in the
bin of $i_{co'};$ we randomly
distribute  the balls corresponding  to the  remaining clients  to the
slots of  the cheap facilities. When  $a_{i'}$ is active,  and thus no
costly facility  is opened, the  capacity of the corresponding  bin is
$w^1_{a_{i'}}=\frac{\sum_j  x'_{a_{i'}j}}{1-\sum_{i\in  Costly}y'_{i}}$
and the capacity of the cheap is $\frac{|C|-w^1_{a_{i'}}}{n}$. We select
randomly  some  $w^1_{a_{i'}}$  clients  in  $J^*$  and  we  toss  the
corresponding balls in the bin of $a_{i'}$.  We randomly toss the
remaining balls to the slots of the bins of the cheap facilities.

Note that  the above experiment  induces a distribution  over feasible
integer solutions since  all the defined bin capacities  are less than
$U$ (this is by  the choice of the size of $J^*$)  and every client is
assigned to exactly one opened facility in each outcome.  We do not need
this distribution to be assignment-symmetric. Observe that the expected
vector   with  respect   to  the   latter  distribution   is  solution
$s'$. Finally, note that we  once again treated the capacities $w$ of
the bins  as
being integral.  For fractional bin capacities (which is
actually  always the case  for the  defined $w$'s)  we can  define the
experiment in a similar way to the proof of Lemma \ref{assi-sym}.  
\end{proof}

\section{Proper Relaxations}\label{sec:firstfamily}

In this section  we present the family of  proper relaxations
and characterize their strength.
Consider a   $0$-$1$ $(y,x)$ vector on the set of
 variables  of the  classic  relaxation (LP-classic)
such that $y_i \geq x_{ij}$ for all $i \in F, j \in C.$  The meaning  of
 $y_i=1$ is the usual one that
 we open facility $i.$  Likewise, the meaning of $x_{ij}=1$ is
 that we assign client $j$ to facility $i$. We call such a vector a 
 \emph{class}. Note that  the definition is quite general  and a class
 can be defined  from any such $(y,x)$, which may or  may not have a
 relationship  to a  feasible  integer solution.  
%% Classes generalize the notion of a star. 
We  denote the  vector
 corresponding  to a  class $cl$  as $(y,x)_{cl}$.  We  associate with
 class  $cl$   the  {\em cost   of  the  class}   $c_{cl}=\sum_{i \mid
   y_i=1  \in
   (y,x)_{cl}} f_i+  \sum_{i,j \mid  x_{ij}=1 \in  (y,x)_{cl}}
 c_{ij}$. Let the {\em assignments of class} $cl$ be defined as 
 $Agn_{cl}= \{(i,j) \in F\times C \mid  x_{ij}=1$ in $(y,x)_{cl}\}$.
We say that $cl$ {\em contains}  facility $i,$ if the corresponding entry
$y_i$  in the vector $(y,x)_{cl}$ equals $1.$  
The set of facilities contained in $cl$ is denoted by 
 $F(cl).$

%%%%%%%%% FOCS Definition, lots of space 
\iffalse ------------

\begin{definition}  {\bf (Constellation LPs)} \label{def:constell}
Let $\mathcal{C}$ be a set of classes defined for an instance $I(F,C)$
of
\lbfl. Let $x_{cl}$  be a variable associated with class $cl \in
\mathcal{C}.$
The {\em  constellation LP with class set}
  $\mathcal{C}$ is defined as 

\begin{align*} 
\min \sum_{cl \in \mathcal{C}} c_{cl}x_{cl} &&  \tag{LP($\mathcal{C}$)} \\
\sum\nolimits_{{cl} \mid  \exists i:(i,j) \in  Assignments_{cl}}
x_{cl}=1 && \forall j \in C   \\
\sum\nolimits_{{cl} \mid  i  \in  F({cl})} x_{cl} \leq 1 &&
\forall i \in F  \\
 x_{cl} \geq 0  && \forall cl \in \mathcal{C} 
\end{align*} 

\end{definition}
We will refer simply to a constellation LP when
$\mathcal{C}$ is implied from the context.  
------------------ \fi 

\vspace*{-0.05cm}
\begin{sloppypar}
\begin{definition}  {\bf (Constellation LPs)} \label{def:constell}
Let $\mathcal{C}$ be a set of classes defined for an instance $I(F,C)$
of \cfl\ or \lbfl. Let $x_{cl}$  be a variable associated with class $cl \in
\mathcal{C}.$
The {\em  constellation LP with class set}
  $\mathcal{C},$ denoted LP($\mathcal{C}$),  is defined as $\min \{
\sum_{cl \in \mathcal{C}} c_{cl}x_{cl} \mid \sum_{{cl} \mid  \exists
  i:(i,j) \in  Agn_{cl}} x_{cl}=1  \; \forall j \in C, \:\;\; \sum_{{cl} \mid
  i  \in  F({cl})} x_{cl} \leq 1 \; \forall i \in F,  \:\;\;  x_{cl} \geq 0
\; \forall cl \in \mathcal{C} \}$.

\begin{comment}
\[ \min \{ \sum_{cl \in \mathcal{C}} c_{cl}x_{cl}  \mid \sum_{{cl} \mid  \exists i:(i,j) \in  Agn_{cl}}
x_{cl}=1  \; \forall j \in C,   \;\;\; 
\sum_{{cl} \mid  i  \in  F({cl})} x_{cl} \leq 1 
\; \forall i \in F,  \;\;\;  
 x_{cl} \geq 0   \; \forall cl \in \mathcal{C}  \}
\]
\end{comment}
\end{definition}
\end{sloppypar}
\vspace*{-0.03cm}

\noindent 
We  refer simply to a {\em constellation LP} when
$\mathcal{C}$ is implied from the context.  
We define the \emph{projection} $s'=(y^{s'}, x^{s'})$ of solution $s=(x^s_{cl})_{cl \in \mathcal{C}}$ 
of  LP$(\mathcal{C})$ to the  facility opening
and assignment variables $(y,x)$ as $y_i^{s'}=\sum_{cl|i\in cl}x_{cl}^{s}$ and 
$x_{ij}^{s'}=\sum_{cl| (i,j) \in Agn_{cl}}x_{cl}^{s}$.
%
\iffalse --------------- due to SPACE
We will restrict our attention to  constellation LPs that satisfy a  natural property:   the LP is symmetric
 with respect to the clients and  the  facilities. 
The fact  that all facilities have the  same capacity / lower bound and all
clients have unit demand makes  this  property quite sound. 
For a class   $cl$ and
$f_1: \{1,...,n \} \rightarrow \{1,...,n \}$ a permutation of the facilities, we denote by $cl_{f_1}$
the class resulting by exchanging  for all $k, j$ the values  of the $y_{k}$  and
$x_{kj}$ coordinates 
of $(y,x)_{cl}$  with   the value of  the $y_{f_1(k)}$  and $x_{f_1(k)j}$ coordinates of
$(y,x)_{cl}$. Similarly,  for $f_2: \{1,...,  m\} \rightarrow \{1,...,
m\}$ a permutation of the clients, we denote  by $cl_{f_2}$ the class resulting  by exchanging for
every $i$  the value of  the $x_{ik}$ coordinate of  $(y,x)_{cl}$ with
 the value of the $x_{if_2(k)}$ coordinate of $(y,x)_{cl}$.

\begin{definition} {\bf ($P_1$: Symmetry)} \label{def:symmetry}
We  say  that property  $P_1$  holds for the constellation linear program LP($\mathcal{C}$)   if  the
following is  true: let $\phi  (n)$ be any  permutation of $F$  and $\mu
(m)$ any permutation of $C$.
 Then, for every  class $cl \in \mathcal{C},$  $cl_{\phi}$ 
and $cl_{\mu}$ are also in $\mathcal{C}.$
\end{definition}
\vspace*{-0.6cm}

------------------------------ \fi 
%
We  restrict our attention to  constellation LPs that satisfy a
symmetry property that is very natural for uniform capacities and unit
demands. 

\vspace*{-0.1cm}
\begin{definition} {\bf ($P_1$: Symmetry)} \label{def:symmetry}
We  say  that property  $P_1$  holds for the constellation linear program LP($\mathcal{C}$)   if  
 for every  class $cl \in \mathcal{C},$  all classes resulting from
 a  permutation that relabels the facilities and/or the clients of
 $cl$ are
 also in $\mathcal{C}.$
\end{definition}
\vspace*{-0.3cm}

%The second property we require is the obvious one that the 
%relaxation  is {\em valid,} i.e., the projection of its  feasible region to 
%$(y,x)$ contains all the characteristic vectors of the feasible integer solutions of the instance.
\begin{definition}  {\bf (Proper Relaxations)}  \label{def:proper} 
We call {\em proper relaxation}  for \cfl\ (\lbfl\/)  a constellation LP
 that is valid and satisfies property $P_1.$ 
\end{definition}
\vspace*{-0.1cm}

\noindent 
 A simple  example of a constellation LP is the well-known {\em
  (LP-star)}  (see, e.g., \cite{JainMMSV03}) where $\mathcal{C}$
corresponds to the set of all  {\em stars}: 
a facility and a set of at most $U$ (or at least $B$ for \lbfl) clients assigned to
it.
Obviously (LP-star) is a proper relaxation, while (LP-classic) is equivalent to
(LP-star). Therefore proper relaxations generalize the known natural
relaxations for \cfl\ and \lbfl. 
\begin{comment}
Our  result on proper relaxations is that  proper LPs that  are not ``complex'' enough have an unbounded integrality gap while those  that
are sufficiently ``complex'' have an integrality gap of $1.$  To that end, we
define  the complexity  of  a  proper LP. 
\end{comment}
In order to  characterize the strength of a proper LP we need  the notion of
complexity.  
\iffalse
Furthermore, for each  such facility  $i$ we  denote by
$C_{cl}(i)$ the  set of clients  $j$ for which  there is a facility  $i$ so
that $x_{ij}=1$ in $(y,x)_{cl}$.
\fi 

\vspace*{-0.2cm}
\begin{definition} {\bf (Complexity of proper relaxations)}  \label{def:complexity}
Given an instance $I(F,C)$ of \cfl\ (\lbfl\/)
let $F'$ be  a 
maximum-cardinality set  of open facilities in an integral feasible
solution. The {\em complexity $\alpha$} of a  proper
relaxation $LP(\mathcal{C})$ for $I$ is
defined as the 
   $\sup_{cl \in \mathcal{C}} ({|F(cl)|}/{|F'|}).$
% $\sup_{cl \in \mathcal{C}} \frac{|F(cl)|}{|F'|}.$ 
\end{definition}

 \vspace*{-0.09cm}

%Note that for \lbfl\ it is possible  to have a proper relaxation with complexity greater
%than $1$. 
The  complexity of  a  proper LP  represents the  maximum
fraction of the  total number of feasibly openable  facilities that is
allowed in a single class. 
%% For   a  proper relaxation  $LP(\mathcal{C}),$ the  complexity
%% describes to what extent 
%%  classes in $\mathcal{C}$ consider the instance locally. 
A complexity of nearly $1$
means that there are classes that take each 
into consideration almost the whole instance
at once. Low complexity means that all classes consider
the assignments of a small fraction of the instance at a time.  
\iffalse ===========================
We remark
that    the   proper    LP    with   an    integral   polytope    from
Theorem~\ref{thm:gap1} has
a complexity of  $1$ since every class corresponds  by construction to
a feasible integral solution. 
(Clearly not every LP with complexity $1$ has an integrality gap of $1$
since it might contain weak classes together with the strong
ones.) 
============ \fi 
By increasing the complexity of a proper LP  for a given instance 
 we can produce strictly stronger 
proper relaxations, an example is given in the Appendix. 

\vspace*{-0.1cm}
\begin{theorem}\label{theorem:proper}
Every proper relaxation for uniform \cfl\ (\lbfl) with complexity $\alpha < 1$ has an
unbounded integrality gap. There is a proper  relaxation for
\cfl\ (\lbfl) of  complexity $1$  whose projection to $(y,x)$ expresses the integral polytope. 
\end{theorem}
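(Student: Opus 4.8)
The statement has two halves that I would attack by opposite strategies: the complexity-$1$ half asks me to exhibit one good relaxation, whereas the $\alpha<1$ half asks me to defeat every relaxation on a suitable family of instances. I would dispatch the (easier) gap-$1$ half first. Take $\mathcal{C}_0$ to be the set of all classes that are \emph{complete} feasible integral solutions of the instance (every client assigned to an open facility, capacities respected for \cfl\ / lower bounds met for \lbfl, idle open facilities allowed). Because all capacities/bounds are uniform and demands are unit, relabeling the facilities or clients of a feasible solution gives a feasible solution, so $\mathcal{C}_0$ satisfies $P_1$; each feasible solution is representable by its own indicator, so $\mathrm{LP}(\mathcal{C}_0)$ is valid; and $|F(cl)|\le|F'|$ with equality for the solution opening $F'$, so the complexity is $1$. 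The crucial observation is that every $cl\in\mathcal{C}_0$ assigns \emph{every} client, so each client-constraint collapses to $\sum_{cl\in\mathcal{C}_0}x_{cl}=1$; hence any feasible $(x_{cl})$ is a probability distribution over feasible integral solutions and its projection $(y^{s'},x^{s'})$ is exactly the corresponding convex combination of integer points. This gives one inclusion (projection $\subseteq$ integral polytope), and the reverse is immediate since each integer solution is itself a class, so the projection equals the integral polytope.

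For the $\alpha<1$ half I would read the claim as: for every constant $\alpha<1$ there is a family of instances on which \emph{every} proper relaxation of complexity $\le\alpha$ has integrality gap tending to infinity. I would use an instance in the spirit of Section~\ref{SA-result}, but with unequal class sizes: $M$ cheap facilities (cost $0$) and $K$ costly facilities (cost $1$), common capacity $U$, and $|C|=MU+1$ unit-demand clients, where $K$ is chosen with $K/M$ small enough that $M/(M+K)>\alpha$, and $U\to\infty$. Since the cheap capacity is only $MU$, every feasible integral solution opens at least one costly facility, so the integral optimum is $\ge1$; the whole game is then to produce a feasible fractional solution of $\mathrm{LP}(\mathcal{C})$ of cost $O(1/U)=o(1)$, yielding gap $\Omega(U)$.

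The hypothesis $\alpha<M/(M+K)$ forces $|F(cl)|\le\alpha(M+K)<M$ for every class, which I would exploit twice. First, no class can serve all $|C|>MU$ clients, so (unlike the complexity-$1$ case above) the client-equalities do \emph{not} reduce to a single global $\sum x_{cl}=1$ and therefore do not pin the cost to that of a full solution. Second, by $P_1$ every class admits a relabeling that lands its fewer-than-$M$ facilities inside the cheap set, producing a cost-$0$ copy. Validity supplies a representation of the feasible solution ``all cheap plus one costly'' as classes $cl_1,\dots,cl_r$; replacing each costly occurrence by its cheap relabeling yields a supply of cheap-only classes that, together with their client-relabelings, cover all clients. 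From these I would assemble a client-symmetric fractional solution giving each client coverage $\beta=MU/(MU+1)$ from cheap-only classes and the residual $\gamma=1-\beta=1/|C|$ from costly classes used at weight $\Theta(1/U)$; then every client-equality sums to $1$, cheap capacity $MU$ exactly carries the cheap coverage, the costly weights are tiny, and the cost is $\sum_{i\in Costly}y_i=\Theta(1/U)$.

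The main obstacle is precisely the exact feasibility of this last step: the classes a given $\mathcal{C}$ provides may be large indivisible bundles rather than single stars, so combining them into a solution with \emph{exact} per-client coverage $\beta$ while keeping each facility-usage $\le1$ is not automatic, and the $M$ cheap facilities sit at capacity (the $r$ cheap-relabeled classes want $M+1$ cheap slots, overbooked by exactly the one excess client). I would resolve this by the averaging device already underlying Lemma~\ref{assi-sym}: average each cheap-only class over all its client-relabelings to obtain a perfectly client-symmetric block, introduce an $\varepsilon$-slack so that the cheap facilities are used at weight slightly below $1$ (leaving room for the cheap partners of the costly-serving classes), and add a symmetric costly block for the residual coverage; symmetry makes all marginals uniform and the capacity accounting close, at negligible cost. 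For \lbfl\ the construction is dual, with lower bounds and ``excess demand'' replacing capacities and excess capacity, and the analogous argument is deferred to the Appendix.
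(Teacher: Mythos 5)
Your treatment of the complexity-$1$ half is correct and is essentially the paper's own proof: take one class per feasible integral solution, observe that every class covers every client so the client constraints collapse to $\sum_{cl}x_{cl}=1$, and conclude the projection is a convex combination of integer points. The $\alpha<1$ half, however, has a genuine gap. Your entire bound rests on producing a fractional solution of cost $O(1/U)$, and the cost accounting forces the following: since the $M$ cheap facilities can supply at most $MU$ units of coverage in total (facility weight at most $1$ each, at most $U$ clients per facility per class), at least one full unit of client coverage must be routed through costly facilities. If every class available to you assigns at most $k$ clients to each costly facility it contains, then a costly coverage of $1$ costs at least $1/k$. So you need classes in $\mathcal{C}$ in which a costly facility serves $\Theta(U)$ clients. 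But the only classes you extract from validity come from the representation of the integral solution ``all cheap at capacity plus one costly serving the single excess client,'' and in that representation every class containing the costly facility assigns it at most \emph{one} client (its assignments must be a subset of that solution's assignments). Using those classes, any feasible completion of your construction has cost at least $1$, not $\Theta(1/U)$, and the gap is bounded. The fix requires applying validity to a \emph{different} integral solution, one in which a costly facility is fully loaded, together with the saturation argument the paper proves (its density lemma: if facility $i$ receives exactly its capacity in $s$ and $y_i=1$, then every class in the support assigns exactly $U$ clients to $i$, because $U$ is then a convex combination of quantities at most $U$). That lemma is the crux of the paper's proof and is absent from yours; without it you also cannot justify that your cheap-only blocks deliver exactly $U$ clients per facility, which your coverage bookkeeping ($\beta=MU/(MU+1)$ with cheap facilities essentially saturated) silently assumes.

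For completeness of the comparison: the paper does not use the cheap/costly instance of Section~\ref{SA-result} here. For \cfl\ it takes $n$ facilities, $(n-1)U+1$ clients and zero distances, so that every integral solution must open all $n$ facilities; it extracts a ``type $B$'' class of density exactly $U$ avoiding the one expensive facility, and symmetrizes it in two rounds to open that facility only to extent $1/n^2$. For \lbfl\ (which you defer entirely, though it is where most of the work lies) the paper needs a nontrivial metric -- a regular simplex plus two far-away facilities -- and a delicate two-round counting argument over exclusive clients. Your overbooking/$\varepsilon$-slack device is a reasonable sketch of the symmetrization step, but it does not substitute for the existence argument above.
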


\bibliographystyle{plain}

\bibliography{bibliography-ver1}

\appendix

%\section{Appendix to Section~\ref{sec:prel}}

\section{Appendix to Section~\ref{SA-result}}

Omitted part of the proof of Lemma~\ref{assi-sym}.

\begin{proof}
First we explain how to handle fractional bin capacities. 
To handle the case where the  $w$'s are not integers, we simply do the
following:  each  time  costly  facility $i_b$  ($i_{c}\neq  i_b$)  is
picked, we set the number of slots of the corresponding bin to $\lfloor
w^1_{i_b} \rfloor$  ($\lfloor w^2_{i_{co}} \rfloor$)  with probability
$1-(w^1_{i_b}-\lfloor  w^1_{i_b} \rfloor  )$ ($1-(w^2_{i_{co}}-\lfloor
w^2_{i_{co}} \rfloor )$), otherwise set the slots to $\lceil w^1_{i_b}
\rceil$($\lceil  w^2_{i_{co}}  \rceil$). If  the  number  of slots  of
$i_{b}$  ($i_{co}$) is  set to  $\lfloor w^1_{i_b}  \rfloor$ ($\lfloor
w^2_{i_{co}}   \rfloor$)  then   we  pick   some  $n(\frac{|C|-\lfloor
  w^1_{i_b}  \rfloor  }{n}   -  \lfloor  (\frac{|C|-\lfloor  w^1_{i_b}
  \rfloor }{n}) \rfloor)$  ( $n(\frac{|C|-\lfloor w^2_{i_{co}} \rfloor
}{n}   -  \lfloor   (\frac{|C|-\lfloor   w^2_{i_{co}}  \rfloor   }{n})
\rfloor)$)  cheap facilities  at  random and  set their  corresponding
number of  slots to $\lceil \frac{|C|-\lfloor  w^1_{i_b} \rfloor }{n}
\rceil$ ($\lceil \frac{|C|-\lfloor w^2_{i_{co}} \rfloor }{n} \rceil$)
and the number of slots of the rest of the cheap facilities to
$\lfloor \frac{|C|-\lfloor        w^1_{i_b}        \rfloor        }{n}
\rfloor$($\lfloor \frac{|C|-\lfloor    w^2_{i_{co}}    \rfloor    }{n}
\rfloor$).  Otherwise  pick some $n(\frac{|C|-\lceil  w^1_{i_b} \rceil
}{n} -  \lfloor (\frac{|C|-\lceil  w^1_{i_b} \rceil }{n})  \rfloor)$ (
$n(\frac{|C|-\lceil    w^2_{i_{co}}     \rceil    }{n}    -    \lfloor
(\frac{|C|-\lceil   w^2_{i_{co}}   \rceil   }{n})   \rfloor)$)   cheap
facilities at  random and set  their corresponding number of  slots to
$\lceil  \frac{|C|-\lceil  w^1_{i_b}  \rceil  }{n}  \rceil$  ($\lceil
 \frac{|C|-\lceil w^2_{i_{co}} \rceil }{n}  \rceil$) and the number of
slots of the rest to $\lfloor \frac{|C|-\lceil w^1_{i_b} \rceil }{n} \rfloor
$($\lfloor \frac{|C|-\lceil  w^2_{i_{co}} \rceil }{n}  \rfloor$). Note
than in every case the expected  number of slots per facility is as in
the previous experiment.

As for the expected assignments to the cheap
facilities, observe that in every outcome of the experiment the demand
not assigned  to costly facilities  is exactly the demand  assigned to
cheap.  Since we  have proved  that  the expected  assignments to  the
costly  facilities are  those of  the  bad solution,  by linearity  of
expectation we get that the  total assignments to all cheap facilities
are $\sum_{i  \in Cheap}\sum_{j}x_{ij}'$ (the total  assignment of each
client adds up  to $1$ by the constraints of the  LP). By the symmetric
way the  cheap are handled  in the experiment  we have that  the total
expected demand assigned to  each $i\in Cheap$ is $\sum_{j}x_{ij}'$ and
by  the symmetric  way the  clients are  assigned to  $i$  through the
experiment we get  that the expected assignment of each  $j$ to $i$ is
$x_{ij}'$.

\end{proof}

Proof of Proposition \ref{level-ratio}.

\begin{proof}
Since  the  considered  distribution  is  assignment-symmetric,  event
$\mathcal{E}_p$ is  equivalent to  the event of  randomly distributing
$l+1$ balls to the slots of $n+1$  bins, with at most one ball in each
slot, each bin having $w^1_{ch}$ identical slots, asking that ball $j$
is  tossed in  the bin  of $i_b$  and ball  $j_{r}$ is  tossed  in bin
$i_{r}$.  Since there are  $\Theta (n^3)$  slots in  each bin  and the
balls    are   at    most   $n$,    it   is    easy   to    see   that
$(1-o(1))\frac{P[\mathcal{E}_p]}{n}   \leq   P[\mathcal{E}_{p'}]  \leq
(1+o(1))\frac{P[\mathcal{E}_p]}{n}$.

\end{proof}

Proof of Proposition \ref{transf_fraction}.

\begin{comment}
\begin{proposition}\label{transf_fraction}
Let $\mathcal{E}_p:x_{i_bj}=1 \ \wedge x_{i_{a_1}j_{b_1}}=1 \wedge x_{i_{a_2}j_{b_2}}=1 \wedge ...x_{i_{a_{k-i+1}}j_{b_{k-i+1}}}=1$ and $\mathcal{E}_{p/fixed}:x_{i^*j}=1  \wedge x_{i_{a_1}j_{b_1}}=1 \wedge x_{i_{a_2}j_{b_2}}=1 \wedge ...x_{i_{a_{k-i+1}}j_{b_{k-i+1}}}=1$ with $i^*\in Costly-\{i_b\}$. Then 
%$\frac{P[\mathcal{E}_p]}{P[\mathcal{E}_{p/fixed}]}<e^2$.
$(1+1/n)^{k-i+1}\leq \frac{P[\mathcal{E}_p]}{P[\mathcal{E}_{p/fixed}]}\leq (1+2/n)^{k-i+1} <e^2$.
\end{proposition}
\end{comment}

\begin{proof}
Consider  again the  random  experiment of  the  proof of  Proposition
\ref{level-ratio}.     Recall   that,    ignoring    constant   factors,
$w^1_{ch}=n^3-1$ and $w^2_{ch}=n^3-n^2.$ $P[\mathcal{E}_p]=x_{i_bj}P[
  x_{i_{a_1}j_{b_1}}=1      \wedge     x_{i_{a_2}j_{b_2}}=1     \wedge
  \ldots \wedge x_{i_{a_{k-i+1}}j_{b_{k-i+1}}}=1     \mid    x_{i_bj}=1]$     and
$P[\mathcal{E}_{p/fixed}]=x_{i^*j}P[x_{i_{a_1}j_{b_1}}=1         \wedge
  x_{i_{a_2}j_{b_2}}=1  \wedge  \ldots \wedge x_{i_{a_{k-i+1}}j_{b_{k-i+1}}}=1  \mid
  x_{i^*j}=1]$ and since $x_{i_bj}=x_{i^*j}$  we can compute the ratio
of the  probability of  success of the  tossing of $k-i+1$  balls when
$x_{i_bj}=1$, and thus the capacity of the bins corresponding to cheap
facilities is $w^1_{ch}$, to the probability of success of the tossing
of $k-i+1$ balls  when $x_{i^*j}=1$ and thus the  capacity of the bins
corresponding to cheap facilities is $w^2_{ch}$. When tossing the ball
$j_{b_r}$ given the successful  tossing of balls $j_{b_q}$ with $q<r$,
the  probability   of  success  is   $\frac{w^1_{ch}-o}{|C|-r+1}$  and
$\frac{w^2_{ch}-o}{|C|-r+1}$  respectively, where  $0\leq o\leq  r$ is
the  number  of  balls  already   placed  in  some  slot  of  the  bin
corresponding    to    cheap   facility    $a_r$.    We   have    that
$(1+(1-o(1))1/n)\leq         \frac{w^1_{ch}-o}{w^2_{ch}-o}        \leq
(1+(1+o(1))1/n)$.
\begin{comment}
Note that again both $w^1_{ch},w^2_{ch}$ are $\Theta (n^3)$ and $k-i+1<n$. In the ball tossing experiment, the probability of success of ball $j_r$ in the case where the capacities of the cheap bins is $w^1_{ch}$ is at most $(1+2/n)$ the probability of success of the same event in the case where the capacities of the cheap bins is $w^2_{ch}$ - we are very generous here, in fact the ratio of probability of success is just a little over $1+1/n$.
\end{comment}
 So $(1+(1-o(1)1/n)^{k-i+1} \leq \frac{P[\mathcal{E}_p]}{P[\mathcal{E}_{p/fixed}]}\leq (1+(1+o(1))1/n)^{k-i+1} < e^2$ using that $\lim_{x \rightarrow \infty}(1+d/x)^{x}=e^d$. 
\end{proof}

Proof of Proposition \ref{transf:prop}.

\begin{proof}
The intuition behind the proof is that the ``donor'' event that
supplies the required measure is much more likely to occur than the
events that require the transfusion. 

Consider  the measure  $t$ in  $E_d$ of  the set  of  integer solutions
satisfying $y_{i_b}=1$ and all events encountered at any iteration 
being false,
namely  $x_{i_bj}=0  \wedge  x_{i_1j_1}=0 \wedge  x_{i_2j_2}=0  \wedge
\ldots \wedge x_{i_{k}j_{k}}=0$. Then, by the random experiment of the
construction  of $E_d$,  this event  is equivalent  to the  event that
facility  $i_b$   is  picked,  $x_{i_bj}=0$  and   the  $k$  balls
corresponding to the clients of the rest of the events are not tossed in
their  corresponding bins. Using  again that  both $w^1_{ch},w^2_{ch}$
are $\Theta (n^3)$ and $k<n$,  we can bound the probability of the
$k$ balls by that of  $k$ Bernoulli trials with probability of
success $2/n$ (we are once  again very generous). Then the probability
that all events %% on ball $j_i$ 
fail is at least $(1-2/n)^{k}> \lim_{n
  \rightarrow \infty }(1-2/n)^n = 1/e^2$. Thus measure $t$ is at least
$(y_{i_{b}}-x_{i_bj}) 1/e^2$ which is  constant. On the other hand the
measure   required   by   the   transfusion  step   for   each   event
$\mathcal{E}_{p}$ of iteration  $i$ that needs to be  fixed is at most
$(e^2-1)P[\mathcal{E}_{p/fixed}]=\Theta     (1/n^i)$.     There    are
$k+1\choose{k-i+1}$ such events of iteration $i$, and summing over all
the iterations of our  construction we get $\sum_{i=1}^{k} {k+1\choose
  k-i+1}   \Theta   (1/n^i)   $    which   quantity   is   less   than
$(y_{i_{b}}-x_{i_bj})  1/e^2$  for  the   $k=n/10$  levels  of  SA  we
consider, so we can always pick the required amount of measure.
%We did not make any efforts to optimize the constants.
\end{proof}

\subsection{SA gap for \lbfl} 

A  similar result  to Theorem~\ref{cfl-SA:theorem}  can be  proved for
\lbfl\/. Consider an instance with $n$ facilities, lower bound $B=n^3$
and a total of $n(B-1)$ clients. The metric space here is more intriguing
than the one for the \cfl\ case. Consider a regular
$(n-1)$-dimensional 
simplex with
edge length $1.$  On each of  the $n$ vertices of the simplex  a facility along with
some $B-1$  clients are  located. All opening  costs are  $0.$ Clearly
every integer solution has a cost  of at least $B-1$ since we can open
at most  $n-1$ of the facilities,  and so at least  $B-1$ clients will
have to  be assigned to some facility  other than the one  on the same
vertex.  We call a client $j$  that is located on the same vertex with
facility   $i,$  \emph{exclusive}   client   of  $i$.   We  denote   by
$Exclusive(i)$ the set of clients  that are exclusive to facility $i$.
On the other hand we can  show that the following bad solution $s$ is
feasible at  $\Omega(n)$ levels of the  SA hierarchy. 
 For  all $i\in  F,$ $y_{i}=1-n^{-2}$; 
 for  a client $j\in  C,$ $x_{ij}=1-10n^{-2},$  if
$j \in Exclusive(i),$ and
$x_{ij}=\frac{10n^{-2}}{n-1}$  for   all  other  facilities.  Solution
$s$ incurs a cost of $o(B)$.

\begin{theorem}\label{lbfl-SA:theorem}
There is  a family of  \lbfl\   instances with $n$ facilities and $n^4-n$ clients  
such that the  relaxations  obtained from (LP-classic)  at $\Omega (n)$
levels of  the Sherali-Adams hierarchy have an unbounded integrality
gap. 
\end{theorem}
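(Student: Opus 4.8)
The plan is to follow the template of Theorem~\ref{cfl-SA:theorem} essentially verbatim, with \cfl\ replaced by \lbfl\ and the roles of ``open'' and ``closed'' exchanged. First I would record the two elementary facts. The bad solution $s$ is feasible for (LP-classic): $\sum_i x_{ij}=(1-10n^{-2})+(n-1)\frac{10n^{-2}}{n-1}=1$, and for every facility $\sum_j x_{ij}=(B-1)\geq B(1-n^{-2})=By_i$, the last inequality being $n^3-1\geq n^3-n$. The gap is $\Omega(n)$: the connection cost of $s$ is $\sum_i\sum_{j\notin Exclusive(i)}x_{ij}=n\cdot(B-1)\cdot 10n^{-2}=\Theta(n^2)=o(B)$, whereas any integral solution opens at most $n-1$ facilities---opening all $n$ would require $nB=n^4$ clients but only $n^4-n$ are available---so some facility is closed and its $B-1=\Theta(n^3)$ exclusive clients are served across unit-length simplex edges, forcing integral cost $\Omega(n^3)$. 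The structural point is the role reversal: the valid inequality that $s$ violates is $\sum_{i\in F}y_i=n-1$ (exactly one facility closed), and I would add this equality to (LP-classic) to form the enriched LP whose feasible integral solutions I will sample; correspondingly, the facility that ``takes the blame'' is now the one that is \emph{usually closed}, rather than the usually-open costly facility of \cfl.

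Next I would prove the \lbfl\ analogue of Lemma~\ref{assi-sym}. Given a constraint $\pi$ and a multiplier $z$, I pick a blame facility $i_b\notin v(\pi,z)$ and set $y'_{i_b}=(n-1)-\sum_{i\neq i_b}y_i=(n-1)n^{-2}$, so that $\sum_i y'_i=n-1$ and the identity $\sum_i(1-y'_i)=1$ lets me sample the unique closed facility by declaring $i$ closed with probability $1-y'_i$; thus $i_b$ is closed with probability $\approx 1-1/n$ while every other facility is closed with probability only $n^{-2}$. The balls-and-bins experiment of Lemma~\ref{assi-sym} must be adapted to the nontrivial metric: after the closed facility $i^\ast$ is sampled, its $B-1$ exclusive clients become homeless and are spread symmetrically over the $n-1$ open bins, while each open facility keeps its own exclusive clients except for a small symmetric migration realizing the away-marginal $\frac{10n^{-2}}{n-1}$, which the away-slack of $s$ makes feasible. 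I would then check, as in Lemma~\ref{assi-sym} (fractional bin sizes included), that the experiment yields feasible integral solutions---each open bin receives $\approx (B-1)\frac{n}{n-1}>B$ clients, meeting the lower bounds---that the induced distribution is assignment-symmetric (now under relabelling each facility together with its exclusive client set, and under relabelling clients sharing a home), and that its expected vector is $s_{\pi,z}$.

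I would then split into the two cases of Theorem~\ref{cfl-SA:theorem}. For $x_{ij}\le y_i$, $x_{ij}\le 1$, $y_i\le 1$ and the lower-bound constraints $\sum_j x_{i^\pi j}\ge By_{i^\pi}$, a blame facility $i_b\notin v(\pi,z)$ exists once the number of levels is $k<n-1$, and reading off $P_{E_d}[\cdot]$ on the product variables settles these constraints directly. The hard case, exactly as in \cfl, is the assignment constraint $\sum_i x_{ij^\pi}=1$, where every facility indexes some variable of $v(\pi,z)$. I would choose $i_b\notin z$ with the extra requirement $j^\pi\notin Exclusive(i_b)$ (possible for $k<n-1$), so that the fixed marginal $x'_{i_b j^\pi}=\frac{10n^{-2}}{n-1}\le y'_{i_b}$ is consistent with $i_b$ being usually closed, and then run the transfusion-of-probability procedure of Propositions~\ref{level-ratio}--\ref{transf:prop}. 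Since $i_b$ is usually closed, conditioning on the rare assignment $x_{i_b j^\pi}=1$ forces an atypical configuration and perturbs the joint probability of assignments to the usually-open facilities by the factor $(1+(1\pm o(1))/n)^{k-i+1}$ of Proposition~\ref{transf_fraction} relative to the same event conditioned on a non-blame facility; the top-down transfusion---moving measure between integer solutions that differ in a single $i_b$-assignment---equalizes these, and Proposition~\ref{transf:prop} still applies since the donor event ($i_b$ open, all relevant clients absent) has constant probability while each correction costs only $\Theta(n^{-i})$.

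The main obstacle, and the only genuinely new ingredient relative to \cfl, is controlling the home/away structure imposed by the metric throughout this last step: because connection costs are nonzero, every event now carries the type of each assignment (to its home facility or away), and I must verify that neither the migration used to realize the away-marginals nor the transfusion ever changes the probability of an event supported solely on open facilities---these are exactly the events on which consistency (condition~2 of Lemma~\ref{SA-survival}) must hold across different distributions. Granting this, the verification concludes as before: any event requiring two distinct closed facilities has probability $0$ everywhere by the enriched constraint; events supported on open facilities agree by the assignment-symmetric construction; and the single-$i_b$-assignment events agree after transfusion because the blame facility's variables appear in $v(\pi,z)$ only through $x_{i_b j^\pi}$. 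Both conditions of Lemma~\ref{SA-survival} thus hold for $k=\Omega(n)$ levels, which together with the gap above proves the theorem; the bound is tight since the mixed-integer, hence integral, polytope is reached at level $n$.
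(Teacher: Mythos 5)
Your instance, feasibility check, and gap computation all match the paper's, and the overall architecture (a blame facility, the enriched equality $\sum_i y_i=n-1$, a balls-and-bins sampler, transfusion for the assignment constraints) is the right template. But there is a genuine gap in your construction of $s_{\pi,z}$ and of $E_d$, precisely at the one point where \lbfl\ is \emph{not} a role-reversed copy of \cfl. In the \cfl\ proof the blame facility's opening variable is \emph{increased} to $\approx 1$, so $x_{i_bj}\le y_{i_b}$ remains satisfied and no assignment variable of $i_b$ has to move. Here you \emph{decrease} $y'_{i_b}$ to $(n-1)n^{-2}$, so for every $j\in Exclusive(i_b)$ the value $x_{i_bj}=1-10n^{-2}$ can no longer be realized: in any distribution over feasible integer solutions with $E[y_{i_b}]=(n-1)n^{-2}$ one has $P[x_{i_bj}=1]\le (n-1)n^{-2}$, hence a mass of $\approx 1-1/n$ per exclusive client of $i_b$ must be reassigned to other facilities, on top of what $s$ prescribes. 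Your experiment spreads this mass ``symmetrically over the $n-1$ open bins,'' which gives every facility $i\ne i_b$ a marginal of order $1/n$ on these clients instead of the value $\frac{10n^{-2}}{n-1}=\Theta(n^{-3})$ that $s$ assigns. These variables are visible: for $\pi\colon\sum_j x_{i^{\pi}j}\ge By_{i^{\pi}}$ the base constraint already contains $x_{i^{\pi}j}$ for \emph{every} client $j$, including every $j\in Exclusive(i_b)$. So $s_{\pi,z}$ fails to agree with $s$ on $v(\pi,z)$ and condition~1 of Lemma~\ref{SA-survival} breaks, already for the lower-bound constraints and before any transfusion is invoked.

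The paper's fix, which is the genuinely new ingredient your proposal is missing, is to route the displaced demand of $Exclusive(i_b)$ not to all open facilities but to a constant-size auxiliary set $I_b$ of facilities chosen, like $i_b$ itself, to be absent from $v(\pi,z)$ (this costs only $O(1)$ extra levels and is harmless for \lbfl, which has no upper capacities). The facilities in $I_b$ then carry badly distorted marginals, but no product involving $i_b\cup I_b$ appears in the lifted constraint, so both conditions of Lemma~\ref{SA-survival} are only ever checked on the remaining events. Two consequences you would also need to absorb: in the hard case $\sum_i x_{ij^{\pi}}=1$ one must require $j^{\pi}$ to be exclusive neither to $i_b$ \emph{nor to any facility of $I_b$}, and the transfusion step must repair the events involving $I_b$ as well as those involving $i_b$. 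With $I_b$ (or an equivalent device for hiding the displaced mass) added, the rest of your outline goes through along the paper's lines.
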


The proof  is similar to  that of \cfl\  and is thus omitted.  Here the
reader can  find a  sketch of  the necessary changes  to the  proof of
Theorem~\ref{cfl-SA:theorem}.

\vspace*{0.2cm}
\noindent
{\em Sketch of proof of Theorem \ref{lbfl-SA:theorem}.}
Consider a constraint $\pi: \sum_{j}x_{i^{\pi}j}\geq By_{i^{\pi}}$ and
a  multiplier $z$  at  level $k$  and  let $v(\pi,z)$  be  the set  of
variables appearing in the  multiplied constraint.  We pick a facility
$i_b$ not  in $v(\pi,z)$  to take the  blame. We construct  a solution
$s'$ where we set $y'_{i_b}=n-1-\sum_{i\neq i_b}y_{i}$ and for each $j
\in  Exclusive(i_b)$ we  set  $x'_{i_bj}=y'_{i_b}=\frac{1-1/n}{n}$ and  we
distribute the remaining demand that was assigned to $i_b$ to each
facility from a constant-size set $I_b$ of facilities  not appearing in
$v(\pi,z)$. Solution $s'$  can be obtained as a  convex combination of
integer solutions  by constructing  a distribution similarly  to Lemma
\ref{assi-sym}. This time the distribution satisfies 
 that exactly  $n-1$ facilities are opened in
each  outcome of the  experiment. Note that we do not require the 
underlying distribution to be assignment symmetric, 
because facilities have to treat differently their exclusive clients. 
We  set the  
values of  the linearized products appearing in the multiplied constraint
equal to  the probability of the  corresponding events with respect  
to the aforementioned  distribution. No product involving variables of 
$i_b\cup I_b$ appear in the constraint.  For  constraints 
$0\leq  x_{ij},y_i \leq 1$  and $x_{ij} \leq y_{i}  $ the  construction 
of the  distribution is the  same. The distributions constructed so far
 are locally consistent as required by Lemma \ref{SA-survival}.

The case where the constraint is $\pi: \sum_{i}x_{ij^{\pi}}=1$ is once
again  more complicated.   We choose  a  facility $i_b  \notin z$  and
moreover $j^{\pi}\notin Exclusive(i_b)$ to  take the blame and the set
$I_b$ is  defined as before except  we also require  that $j^{\pi}$ is
not exclusive to any of them. Solution $s'$ is constructed like in the
previous case. All products take the value of the corresponding events
in the  distribution except those in which the  unique variable involving
$i_b$  appears, namely  $x_{i_bj}$ and  those involving  facilities in
$I_b$. We perform a transfusion  step so that the probabilities of all
the  events   whose  corresponding  products  appear   in  the  lifted
constraint become consistent with the distributions of the previous
case:  this  time we  need  to fix  the  probabilities  of the  events
involving facility $i_b$ or some facility $i\in I_b$.

\subsection{Robustness of the SA gap} \label{sec:robust} 

In  this section  we explain to the  interested reader  how
adding simple valid inequalities does  not affect our arguments on the
SA hierarchy. 

As an example we address the valid inequality 
$\sum_i y_i  \geq \lceil D/U  \rceil$, where $D$  is the
total  amount  of  demand.  This  is  a  well-known
facet-inducing constraint for our instance, see, e.g., %% (31) in
\cite[p. 283]{LeungM89}. 
Of course this inequality is  rendered useless by  slight modifications to
the instance and the bad solution. 
Identifying  ``areas'' of a fractional  solution  where  the demand  exceeds the
available capacity is impossible  without some yet unknown form of preprocessing.
In fact part of the motivation behind   Theorem~\ref{cfl-SA:theorem}
is to demonstrate that the 
SA hierarcy is inadequate for such preprocessing purposes. It therefore 
suffices  to include in the body of the paper 
the simplest possible proof for the theorem.

We modify  the family of  "bad" instances by  using the same  trick we
used in the proof  of Theorem~\ref{effective-cap:theorem}: we have $n$
cheap  and $n$  costly  facilities  and $Un+1$  clients,  and the  bad
solution  in which  for every  $ch \in  Cheap,$ $co  \in  Costly,$ and
client    $j,$     $y_{ch}=1,    x_{chj}=\frac{1-\alpha}{n},$    $y_{co}=10/n^2,
x_{coj}=\frac{\alpha}{n}$ and additionally we add a set of $n$ 
{\em dummy} facilities
$a_{i}$, $1\leq  i \leq n,$  all with $0$  opening costs, on  the same
point  at distance  $1$ from  the  rest. In  the bad  solution $s$  we
additionally set $y_{a_{i}}=1$ and  $x_{a_{i}j}=0$ for all $i$ and for
all clients $j$. The inequality is obviously satisfied.

In the  design of  the locally consistent  distributions, now  we must
give a distribution for the case where the constraint $\pi$ is the new
one $\sum_i y_i \geq \lceil D/U \rceil$, and verify that the "visible"
part of  the distribution  agrees with the  visible part of  all other
distributions of  the proof.   In this case  there must be  some dummy
facility $a_d$ not appearing as an  index in the multiplier $z$ of the
constraint    (although   its    $y$   variable    does    appear   in
$\pi$). Additionally  there must be  a costly facility $i'$  for which
the assignments  of clients to  $i'$ do not  appear in $v(\pi,  z)$ --
this is  ensured by the number  of rounds we consider.   We modify the
solution $(y,x)$  to obtain $(y',  x')$ where the facilities  $i'$ and
$a_d$  exchange the  values  of their  corresponding assignments.   We
define  now the  random experiment  similarly  to the  proof of  Lemma
\ref{assi-sym}  with  facility  $a_d$   taking  the  blame.  The  only
difference is that while $a_d$ is  opened 100\% of the time, it is not
assigned any demand when a  costly facility other than $i'$ is opened.
In  the terminology  of  Theorem~\ref{effective-cap:theorem}, $a_d$  is
always open but  it is inactive when some $i \in  Costly,$ $i \neq i',$
is  opened.  It  is  easy to  see  that the  distribution obtained  is
consistent with all the  other distributions defined for this modified
instance, as required by Lemma \ref{SA-survival}.

\section{Appendix to Section~\ref{flow-cover}}

\subsection{How to fool submodular inequalities} 

Here we show that the classic relaxation strengthened by the
submodular inequalities has unbounded gap. The submodular inequalities
introduced in \cite{AardalPW95} are even stronger than the effective
capacity inequalities. We limit our discussion to uniform \cfl\ where
all clients have unit demands. 

Choose a subset $J\subseteq C$ of  clients, and let $I \subseteq F$ be
a subset  of facilities.  For each  facility $i\in I$  choose a subset
$J_i \subseteq J$.  Consider a 3-level network $G$ with a source $s,$ a set of
nodes corresponding to the facilities, a set of nodes corresponding
to the clients and a sink $t$. The source $s$ is connected by an edge of capacity
$\min\{U, |J_i|\}$ to each facility node $i.$ That node 
 is connected by an  edge of unit
capacity to each node corresponding to client $j,$ $j\in J_i$.
Each node corresponding to some client is connected by an edge of unit
capacity to the sink $t$.
%Once the sets K, J and Kj for j 6 J are
%known, we can define the effective capacity of depot j as rhj = min(m~, d(Kj)).

Define   $f(I)$ as the maximum $s$-$t$ flow value in $G.$ 
%% from the facilities in $I$ to the clients in $J$
%% given the arc set $\{(j,k): i \in I, j \in J_i\}$. 
Define $f(I\setminus\{i\})$ as the maximum flow when facility $i$ is
closed, i.e., when the
capacity of edge $(s,i)$ is set to zero. 
The difference in maximum flow when all facilities in $I$
are open, and when all facilities except facility $i$ are open, is
called the 
{\em increment}
function and is defined as $\rho_i(I \setminus \{i\}) = f(I) - f(I
\setminus \{i\})$.

For any choice of $I \subseteq F,$ $J\subseteq C,$ and $J_i \subseteq
J,$ for all $i,$ the following inequalities, 
 called \emph{the submodular inequalities,} are valid for
 \cfl\ \cite{AardalPW95}. The name reflects the fact that the function
 $f(I)$ is submodular. 

\begin{center}
$\sum_{i\in I}\sum_{j\in J_i}x_{ij}+\sum_{i\in I}\rho_i(I\setminus \{i\})(1-y_i) \leq f(I)$
\end{center}

\begin{theorem}  \label{thm:submod} 
The integrality gap of (LP-classic) remains unbounded even after the addition of the submodular inequalities.  
\end{theorem}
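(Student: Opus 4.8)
The plan is to reuse the instance and the bad solution $s$ of Theorem~\ref{effective-cap:theorem}: $n$ cheap facilities, $n+2$ costly facilities, $n+2$ zero-cost dummy facilities $a_i$ at distance $1$, and $Un+1$ clients with $U=n^{3}$. Since $s$ costs $o(1)$ while every integral solution costs $\Omega(1)$, it again suffices to show that $s$ satisfies every submodular inequality. Two observations drive the argument. First, in $s$ each cheap and each dummy facility has $y_i=1$, so $(1-y_i)=0$ and the increment term reduces to $\sum_{i\in I\cap Costly}\rho_i(I\setminus\{i\})(1-y_i)$: only the costly increments are ever charged. Second, the visible assignment mass $A:=\sum_{i\in I}\sum_{j\in J_i}x_{ij}$ of $s$ is itself a feasible $s$-$t$ flow in the network defining $f(I)$ (each costly source carries only $|J_i|\,\alpha/(n+2)=O(n)$, each cheap source at most $U$, and each client at most $1$), so $A\le f(I)$ for free; the real task is to absorb the extra term $B:=\sum_{i\in I\cap Costly}\rho_i(1-y_i)$.

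As in Theorem~\ref{effective-cap:theorem}, I would use the local-global method: for a fixed inequality---a choice of $I$, $J$, and the sets $J_i$---it is enough to build a convex combination $s'$ of feasible integer solutions that agrees with $s$ on every variable the inequality reads, namely $x_{ij}$ for $i\in I,\ j\in J_i$ and $y_i$ for $i\in I$. The constants $\rho_i$ and $f(I)$ depend only on the inequality, so $s$ and $s'$ give the same value to both sides; validity holds at each integer point of the combination, hence at $s'$, hence at $s$. I would manufacture $s'$ by the same rerouting device: pick a costly $i_{co'}$, a dummy $a_{i'}$, and a set $J^{*}$ of $2U$ clients invisible to both (that is, $J^{*}\cap J_{i_{co'}}=J^{*}\cap J_{a_{i'}}=\emptyset$ whenever the facility lies in $I$), swap $x_{i_{co'}j}\leftrightarrow x_{a_{i'}j}$ on $J^{*}$, and realise the result by the bin experiment of Lemma~\ref{assi-sym}, keeping $a_{i'}$ open throughout but \emph{active} only when no costly facility opens. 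The always-open $a_{i'}$ is the absorber that serves the surplus client which the cheap capacity $nU$ cannot accommodate, making every integer solution in the combination feasible.

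All the difficulty lies in securing the invisible triple $(i_{co'},a_{i'},J^{*})$, and this is the step I expect to be the main obstacle. In the effective-capacity proof it was automatic: the non-triviality hypothesis $\lambda<U$ forbids every costly (or every dummy) facility from having $\bar{u}_i=U$, so a ``small'' facility of each type with $|J_i|<U$ exists and a $J^{*}$ of size $2U$ fits in the complement. A submodular inequality carries no such $\lambda$, so I would split on the total source capacity $\sum_{i\in I}\bar{u}_i$. When $\sum_{i\in I}\bar{u}_i<(n+1)U$ at most $n$ facilities can be saturated, so among the $n+2$ costly and the $n+2$ dummy a small facility of each type survives and the exchange goes through verbatim.

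The genuinely new case is the high-capacity one, where many facilities are saturated and visible, no invisible $J^{*}$ exists, and $f(I)$ must be faced directly; this is the crux. My plan there is to use submodularity of $f$ in the form $\sum_{i\in I\cap Costly}\rho_i(I\setminus\{i\})\le f(I)-f(I\setminus Costly)$, together with the fact that the cheap part of $A$ is already a feasible flow in $I\setminus Costly$ while the costly part $A_{costly}\le|C|\,\alpha=O(n^{2})$ is negligible, so that $f(I\setminus Costly)\ge A-O(n^{2})$. Substituting into $A+B$ and exploiting the factor $(1-y_{i_{co}})=1-10/n^{2}$ on every costly increment should yield $A+B\le f(I)$. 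The delicate point I anticipate is reconciling the $O(n^{2})$ discrepancy with inequalities that are already nearly tight (so $f(I)-A$ is small); there one expects the increments to be correspondingly small, so that $A\le f(I)$ with the surviving slack already suffices. Fractional bin capacities $w$ are handled exactly by the rounding device in the proof of Lemma~\ref{assi-sym}.
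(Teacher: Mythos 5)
Your overall framework (local--global, convex combinations of integer solutions agreeing with $s$ on the visible variables, built-in dummy absorbers) is in the spirit of the paper's proof, and your low-capacity branch is essentially the rerouting argument of Theorem~\ref{effective-cap:theorem} and is fine. The genuine gap is in the high-capacity branch, and it is not merely ``delicate'': the accounting you propose provably cannot close. Your chain gives $A+B\le f(I\setminus Costly)+A_{costly}+(1-10/n^2)\Delta$ with $\Delta=f(I)-f(I\setminus Costly)$, which is at most $f(I)$ only when $(10/n^2)\Delta\ge A_{costly}$. Now take $I=Cheap\cup Costly$, $J=C$, $J_{co}=C$ for every costly facility except that one client $j_0$ is placed in $J_{i_0}$ for a single costly $i_0$ only, and $J_{ch}=C\setminus\{j_0\}$ for every cheap facility. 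Every costly facility is saturated ($|J_{co}|\ge|C|-1$), so no pair $(i_{co'},a_{i'})$ admits an invisible $J^*$ of size $2U$ and you are forced into the high-capacity branch; yet $\Delta=1$, $\rho_{i_0}=1$, $A_{costly}\approx\alpha|C|=n^2$, and your bound evaluates to about $f(I)+n^2$, exceeding $f(I)$. The inequality does hold for $s$ here, but only because the visible cheap mass $(|C|-1)(1-\alpha)$ falls short of $f(I\setminus Costly)=|C|-1$ by $\approx n^2$ --- exactly the slack your step ``cheap part $\le f(I\setminus Costly)$'' discards. Your hoped-for rescue (``near-tight inequalities have small increments'') also fails: giving $i_0$ an exclusive block of $U$ clients invisible to the cheap facilities makes a single increment equal to $U=n^3$ while the recovered term $(10/n^2)\Delta=10n$ stays far below $A_{costly}$.

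The paper closes the hard case combinatorially rather than by submodular accounting. It splits on how many assignment-to-cheap variables are missing from $\pi$. If at least $n$ are missing, it modifies the \emph{instance}, adding a fresh zero-cost absorber facility $a$, transferring the missing cheap demand to it, and building the convex combination in the enlarged instance (where $\pi$ is still valid). If fewer than $n$ are missing, then $Cheap\subseteq I$ and $J=C$, and $Cheap$ together with any single costly facility having $J_i\ne\emptyset$ already routes all $|C|$ units of flow; hence at most one costly facility can have $\rho_i>0$, every other costly facility has coefficient zero on its $(1-y)$ term, and one of those can take the blame in a Lemma~\ref{assi-sym}-style construction. To repair your proof you would need to replace the submodularity step with an argument of this kind, or with a finer accounting that credits the $\alpha$-deficit of the visible cheap assignment mass against $A_{costly}$.
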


\begin{proof}
Consider the  instance and  bad solution  $s$ that we  used in
Theorem~\ref{cfl-SA:theorem} 
for the SA
result.  To prove  that $s$  is  feasible for  the classic  relaxation
strengthened  by  the submodular  inequalities  we  take  the idea  of
fooling local  constraints a little further: either  the constraint is
local  enough that  we  can use  the  ideas from  our previous  proofs
(define $s'$ that is a convex combination of integer solutions and has
the same  visible part as $s$  with respect to the  constraint), or we
can  define another  instance $I'$  and solution  $s'$ for  which the
inequality in question is true with respect to $s'$ and again $s'$ has
the same visible part as $s$ with respect to the constraint. Note that
our arguments  include two different  instances as opposed to  all our
other proofs so far.

Consider the submodular inequality $\pi$ for some $I,$ $J$ and some
selection of $J_i$'s.  If not all the costly  facilities appear in the
constraint the  proof is similar  to that of Lemma  \ref{assi-sym}. If
at least  $n$ assignment variables
 to cheap facilities do not appear in $\pi$ we do the
following: we add  one more facility $a$ to  the instance. 
We construct  a solution $s'$ for the new instance
$I'$ as follows. 
We transfer
the demand  corresponding to the  missing assignments of the  cheap to
$a,$  and we set $y_a=1.$
Observe that $\pi$ is valid for $I'.$ 
%% and (ii) $a$ is
% assigned fractionally at least one unit of demand. N
Now we
can  show that  $s'$  is  a convex  combination  of integer  solutions
similarly to the proof of Theorem \ref{effective-cap:theorem}, where
the role of $J^*$ is played by those clients whose assignments were
transferred from the $Cheap$ to $a.$ 
Facility  $a$  will be  active  only  when  no costly  facilities  are
open. Because, in the fractional solution $s',$
$a$ is assigned a total demand of at least $1-1/n^2,$ in each outcome of the
random experiment in  which $a$ is active, it will  be assigned at least
one client. 
 By the convex combination produced, 
the inequality is satisfied by  $s'$. Thus the same inequality for the
original instance is satisfied by $s.$

Now  consider  the case  where  less  than  $n$ assignments  to  cheap
facilities are missing from $\pi.$ We will show that it cannot be the case that
all $y_i$ variables  of costly facilities appear in  the constraint as
well.  Consider  the quantity  $\rho_i(I\setminus \{i\})$  for  some costly  facility
$i$. If $\rho_i(I\setminus \{i\})>0,$  
then $J_i$ is not empty. We will show that   the set of nodes
$(Cheap \cap I) \cup \{i\}$ in $G$ has enough incident edges so that the flow
originating from them is equal to  the total client demand $|J|$ in $G.$
We first give some properties of graph $G.$

\begin{claim}
If less than $n$ assignments to cheap
facilities are missing from $\pi,$ then $(Cheap \cap I) = Cheap$ and 
$J=C.$ 
\end{claim}

\noindent{\em Proof of Claim.} 
To see that $(Cheap \cap I) = Cheap,$ notice that if a cheap
facility  is  missing  from  $I,$  at least  $|C|=n^4  +1$  assignment
variables will
be missing  from $\pi,$  a contradiction. For  the second part  of the
claim, if a client $j$ is missing from $J,$ then all the corresponding
$n$ edges that would connect $j$ to a cheap facility cannot be in $G.$
Therefore at least $n$ assignment-to-cheap variables are missing from $\pi,$ a
contradiction. The proof of the claim is complete.

We return to proving that $Cheap \cup \{i\}$ has enough incident edges so that the flow
originating from them is equal to  the total client demand $|C|$ in $G.$
``Assign'' one client $j\in J_i$ to facility $i$ and for the remaining
 $|C|-1$ clients do the following: assign each client $j'$
involved in the set of variables  of assignments-to-cheap that are
missing from $\pi$ to a
cheap facility $i'$  such that $j'\in J_{i'}$. There  is always such a
cheap facility $i'$  since the missing edges from  the client-nodes in
$G$ to the cheap-facility nodes are less than $n.$
%% This means that each facility node in $(Cheap \cap I)$ has degree at
%% least 
Assign
the remaining clients arbitrarily to the cheap facilities respecting the
capacities,  since all  the  edges  from cheap  to  those clients  are
included in the network. Thus it must be the case that 
$\rho_{i'}(I\setminus \{i'\})=0$
for any other costly facility $i' \neq i$. Since the $y_{i'}$ variable of
such a  facility $i'$  has $0$ coefficient  in the constraint,  it can
take  the   blame  and  the  proof   is  similar  to   that  of  Lemma
\ref{assi-sym}.
\end{proof}

\section{Appendix to Section~\ref{sec:firstfamily}}

\begin{example}\label{proper_str}
An increased complexity allows strictly stronger proper relaxations.
\end{example}

First we show how one can construct any integer solution using classes that open the
same number of facilities.
Consider an integer solution $s$ with opened facilities $1,\ldots ,t$. We will use the following classes 
in which exactly $r<t$ facilities are opened:
For any set of  $t$ consecutive classes in a cyclic ordering, namely $(1,\ldots ,r),(2,\ldots ,r+1),\ldots ,(t,\ldots ,r-1)$, define a class that opens those facilities and makes the same assignments to them 
as $s$. Then the integer solution is obtained  if for every $cl$ we set $x_{cl}=1/r$.
Observe that the latter solution is feasible for the proper relaxation.

We give a toy example showing that by increasing the complexity, we can
get strictly stronger relaxations. Consider an \lbfl\ instance with $4$ facilities $2$ sets $S_1,S_2$
of $13$ clients each and 2 sets $S_3,S_4$ of $9$ clients each and $B=10$.  For the star relaxation
(complexity $\alpha=1/4$ for this instance)
there is a feasible solution $\bar{s}$ whose projection to $(y,x)$
 is the following $(\bar{y},\bar{x})$: for facility $1,$ $\bar{y}_1=1$ and is assigned $S_1$ integrally, for facility $2,$ $\bar{y}_2=1$ and is assigned $S_2$ integrally, for facility $3,$ $\bar{y}_3=9/10$ and is assigned each client of $S_3$ with a fraction of $9/10$ and each of $S_4$ with $1/10$, and similarly for facility $4,$ $\bar{y}_4=9/10$ and is assigned
each client of $S_4$ with a fraction of $9/10$ and each of $S_3$ with $1/10$. Actually
a direct consequence of Theorem \ref{theorem:proper} is that for any proper relaxation of the same complexity as the star relaxation, the above solution is feasible.

Now consider the following proper relaxation: all characteristic vectors 
of integer solutions with at most
$3$ facilities are classes plus all the 
vectors of solutions with $4$ facilities restricted in any $3$ facilities ($3/4$ parts of integer solutions that open all four facilities).
It is symmetric and valid by the previous discussion and has complexity $\alpha=3/4$. 
In any assignment of values to the class variables  that projects to $(\bar{y},\bar{x})$ the following are true:
since classes with less than $3$ facilities are integer solutions, they contain
assignments for all the clients and thus if we were 
to use a non-zero measure of such classes we would make non-zero assignment 
that does not exist in the support of $(\bar{y},\bar{x})$.
 If we use
classes with exactly $3$ facilities, then exactly one of facilities $3,4$ must be present, 
since no integer solution opens them both with just the clients in $S_3 \cup S_4$. 
So we have to use at least $\bar{y_3}+\bar{y_4}=18/10$ measure of such classes. 
So each one of facilities $1,2$
must be present in more than a unit of classes, which would make the solution infeasible.

\vspace*{0.8cm}

\noindent
{\em Proof of Theorem \ref{theorem:proper}.}

We first prove the easy part, 
that there are proper relaxations for \cfl\ and \lbfl\ with complexity $1$ that
express the integral polytope.
For a given instance let $\mathcal{C}$ consist of a class for each
distinct integral solution. The resulting $LP(\mathcal{C})$ is clearly
proper. Let $x$ be any feasible solution of $LP(\mathcal{C})$ and let
$S$ be the support of  the solution. For every $cl \in S,$ and 
for every client $j \in C,$ there is an $i \in F,$ such that $(i,j)
\in Assignments_{cl}.$ Therefore 
$$ \sum_{cl \in S} x_{cl} = 1.$$
This implies that $x$ is a convex combination of integral
solutions. By the boundedness of the feasible region of
$LP(\mathcal{C}),$ the  corresponding polytope is integral.  
Clearly not every LP with complexity $1$ has an integrality gap of $1$
since it might contain weak classes together with  strong
ones.

In the next two subsections, 
we prove the first part of Theorem \ref{theorem:proper} for \lbfl\ and \cfl\ respectively.

\subsection{Proof of Theorem \ref{theorem:proper} for \lbfl\ }
\label{sec:proof_theorem_p1}

Our proof includes the following steps. We define an instance $I$ 
and consider any proper relaxation $LP(\mathcal{C})$ for $I$ that has complexity
$\alpha <1.$ 
Given $\alpha,$ we use   the validity  and symmetry properties to show the existence of
a specific set of classes in $\mathcal{C}$. Then we use these classes to construct a
desired feasible fractional solution, relying again on symmetry. 
In the last step  we specify  the distances between the clients and  the facilities, so
that the instance is metric and the constructed solution has an  unbounded integrality
gap.

\subsubsection{Existence of a certain type of classes}

Let us fix for the remainder of the section 
an instance $I$ with $n+1$ facilities, where $n$ is
sufficiently large to ensure  that $\alpha n \leq n - c_0$  where
$c_0,$  is a
constant greater than or equal to  $2$. Let the bound $B=n^2$, and let
the number of  clients be $n^3$. Notice that  there are enough clients
to open $n$ facilities, with  exactly $n^2$ clients assigned  to each
one that is opened. The  facility costs  and the assignment  costs will  be defined
later.  Recall  that the  space  of  feasible  solutions of  a  proper
relaxation is independent of the costs.

 We  assume that  the  facilities are  numbered
$i=1,2,\ldots ,n+1$. 
For a solution $p$ we  denote by $Clients_p(i)$
the set  of clients  that are assigned  to facility $i$  in solution
$p$, and  likewise for a  class $cl$ we denote  by $Clients_{cl}(i)$
the set of clients that are assigned to facility $i$.  
Consider  an  integral  solution  $s$  to  the  instance  where 
facilities $1,\ldots ,n$  are opened. 
Since our proper  relaxation is valid, it must have   a feasible 
solution $s'=(x_{cl})_{cl \in \mathcal{C}}$  whose projection to $(y,x)$ gives the characteristic
vector of $s$.  We prove the existence of a  class $cl_0,$ with some desirable
properties, in the support of $s'.$ 

By Definition~\ref{def:constell},
$s'$ can only be obtained as a positive combination of classes $cl$ such that for
every    facility   $i$    we   have    $Clients_{cl}(i)   \subseteq
Clients_s(i)$, Otherwise,  if the variables  of a  class $cl$
with $Clients_{cl}(i)  \setminus  Clients_s(i) \neq \varnothing$ have  non-zero value,
then in $s'$ there will be  some client assigned to
some facility with a positive fraction, while the projection of $s',$ namely $s,$
does  not include  the
particular  assignment.  
Moreover,  since exactly $B$ clients are  assigned to each
facility in  $s$,   for every facility $i$
that   is  contained   in  such   a  class   $cl,$  $Clients_{cl}(i)=
Clients_s(i)$. To see why this  is true, 
since in  $s$ we have $y_i=1,$ for all $i\leq n,$   it follows  that for every facility $i\leq n$,
 $\sum_{cl  | \exists (i,j)  \in  Agn_{cl}} x_{cl}  =1$.  
But  then  we have  that
$|Clients_s(i)|=B=\sum_{cl  |   \exists  (i,j)  \in  Agn_{cl}}
x_{cl}|Clients_{cl}(i)|$.  We have already established   that $x_{cl}>0
\implies |Clients_{cl}(i)|  \leq B$. Then $B$ is  a convex combination
of quantities less than or equal to $B$, so for all such classes $cl$
we have $|Clients_{cl}(i)|=B$.

Therefore
in the class set of any proper relaxation for $I,$ there is 
a class $cl_0$ that assigns exactly $B$ clients to each of  the
facilities in $F(cl_0).$ By the value of $\alpha,$ 
 $|F(cl_0)| \leq n -c_0.$ The following
lemma has been proved.

\iffalse     --- old version 
\begin{lemma}
There is a class $o$ that is contained in the class set of the proper
relaxation, that assigns $B$ clients to each of $n-c$ for some facilities.
\end{lemma}
\fi 
\begin{lemma}   \label{lemma:existence} 
Given the specific instance $I,$ any proper relaxation  of complexity
$\alpha$ for $I$ contains in its class set a class 
$cl_0$ that assigns $B$ clients to each of $n-c$ facilities, for some
integer  $c \geq 2.$ 
\end{lemma}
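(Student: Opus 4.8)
The plan is to exhibit $cl_0$ inside the support of the fractional witness that a valid proper relaxation must possess for one particular integral solution, and then to read off its cardinality from the complexity bound. First I would fix the integral solution $s$ that opens facilities $1,\ldots,n$ and partitions the $n^3$ clients into $n$ groups of exactly $B=n^2$ clients, assigning one group to each open facility; this is \lbfl-feasible since every opened facility receives exactly $B$ clients. By validity of $LP(\mathcal{C})$ there is a feasible $s'=(x_{cl})_{cl\in\mathcal{C}}$ whose projection to the $(y,x)$ variables is the characteristic vector of $s$. The entire argument is then an analysis of the support of $s'$.

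The first structural claim I would establish is that every class $cl$ in the support of $s'$ makes only assignments that already occur in $s$, i.e. $Clients_{cl}(i)\subseteq Clients_s(i)$ for every facility $i$. Indeed, if some support class had $(i,j)\in Agn_{cl}$ with $j\notin Clients_s(i)$, then the projected value $x_{ij}^{s'}=\sum_{cl'\mid (i,j)\in Agn_{cl'}}x_{cl'}$ would be at least $x_{cl}>0$, contradicting that the projection equals the characteristic vector of $s$, where $x_{ij}=0$. In particular $|Clients_{cl}(i)|\le B$ for every support class and every $i$.

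The key step, and the one I expect to be the main obstacle, is to upgrade this inclusion into an equality: every support class assigns exactly $B$ clients to each facility it contains. I would argue facility by facility for $i\le n$. Fix any client $j^*\in Clients_s(i)$; since $x_{ij^*}^{s'}=1$, the definition of the projection gives $\sum_{cl\mid (i,j^*)\in Agn_{cl}}x_{cl}=1$, so the total measure of support classes serving $i$ is at least $1$. On the other hand the set of classes serving $i$ is contained in the set of classes containing $i$ (serving requires $y_i=1$), whose total measure is at most $1$ by the packing constraint $\sum_{cl\mid i\in F(cl)}x_{cl}\le 1$; hence both measures equal $1$, and by nonnegativity every support class that opens $i$ in fact serves it. Expanding $\sum_j x_{ij}^{s'}=B$ then yields $B=\sum_{cl}x_{cl}\,|Clients_{cl}(i)|$, a convex combination of the quantities $|Clients_{cl}(i)|\le B$ with weights summing to $1$; such an average can equal $B$ only if every participating term equals $B$, so $Clients_{cl}(i)=Clients_s(i)$ for every support class containing $i$.

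Finally I would pick any class $cl_0$ in the (nonempty) support: by the previous step it assigns exactly $B$ clients to each of its $|F(cl_0)|$ facilities. Since the maximum number of openable facilities in $I$ is $n$ and the relaxation has complexity $\alpha$, we have $|F(cl_0)|\le \alpha n\le n-c_0$ with $c_0\ge 2$; writing $c=n-|F(cl_0)|\ge c_0\ge 2$ produces a class that assigns $B$ clients to each of $n-c$ facilities, which is exactly the asserted $cl_0$.
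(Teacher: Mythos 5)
Your proof is correct and follows essentially the same route as the paper: fix the integral solution $s$ opening facilities $1,\ldots,n$, extract the feasible witness $s'$ whose projection is the characteristic vector of $s$, show that support classes only use assignments of $s$ and must assign exactly $B$ clients to each contained facility via the ``convex combination of quantities at most $B$ equals $B$'' argument, and read off $|F(cl_0)|\le \alpha n\le n-c_0$ from the complexity bound. Your two-sided sandwich showing that the total measure of support classes serving $i$ equals $1$ is in fact a slightly more careful justification of a step the paper asserts more tersely from $y_i=1$, so nothing is missing.
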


\subsubsection{Construction of a bad  solution}  
\label{subsec:badlbfl}

In the present section we will use the class $cl_0$ along with the
symmetric classes to construct a solution to the proper LP with 
the following
property: there are some  $q$ facilities   that
are almost integrally opened while the number of distinct  clients assigned to them will be less than $Bq$. 

Recall that by property $P_1$ every class that is isomorphic to $cl_0$ is
also a class of our proper relaxation. This means that
every set  of $n-c$ facilities and every  set of $B(n-c)$ clients
assigned to those facilities so that each facility is assigned exactly
$B$ clients, defines a class, called {\em admissible,} that belongs to the set of classes
defined of a  proper relaxation for the instance $I$.

Let  us  turn  again  to  the solution  $s$  to  provide  some  more
definitions.  For   every  facility  $i,$   $i=1,\ldots, n-1$,  we  choose
arbitrarily a client $j'$ assigned to  it by $s$. For each such facility
$i$  we   denote  by  $Exclusive(i)$   the  set  of   clients  $
Clients_s(i) - \{j' \},$ i.e., the set of clients assigned to
$i$ by $s$ after we discard $j'$ (we will also call them the
{\em exclusive clients of $i$}). For facilities $n,$ $n+1$ the sets
$Exclusive(n),$ $Exclusive(n+1)$ are identical  and defined to be equal to 
 the union of $Clients_s(n)$ with all
the  discarded clients from  the other  facilities. In  the fractional
solution that we will construct below, the clients in $Exclusive(i)$
will be almost integrally assigned to $i$ for $i=1,\ldots ,n-1$.

We  are   ready  to  describe  the  construction   of  the  fractional
solution. We will use a subset $S$ of admissible classes that 
do not contain both $n$ and  $n+1$. $S$ contains all such classes   
 $cl$ that assign to each facility $i \leq
n-1$  in the class  the set  of clients  $Exclusive(i)$ plus  one more
client selected  from the sets $Exclusive(i')$  for those facilities
$i' \leq n-1$ that do not belong  to $cl$ (there are at least $c-1$
of them). As for facility $n$ (resp. $n+1$), if it  is contained in $cl,$ then
it is  assigned some set  of $B$ clients  out of the total  $B+n-1$ in
$Exclusive(n)$ (resp. $Exclusive(n+1)$).  
All classes not in $S$  will get a value
of zero in our solution.
We
will distinguish the classes in $S$ into two types: the classes
of {\em type $A$} that contain facility  $n$ or $n+1$ but not both, and classes
of {\em type $B$} that
 contain neither $n$ nor $n+1$.

We consider  first classes of type  $A$. We give  to each such class   a
very small  quantity of  measure $\epsilon$. Let  $\phi$ be  the total
amount of measure used. We call this step $Round_A$.  The
following  lemma shows  that after  $Round_A$, the  partial fractional
solution  induced  by  the  classes  has a  convenient  and  symmetric
structure:

\begin{lemma}  \label{lemma:roundA}
After  $Round_A$,  each client  $j  \in  Exclusive(i),$  $i\leq n-1,$  is
assigned to $i$  with a fraction of $\frac{n-c-1}{n-1}  \phi$ and is
assigned to each other facility $i',$ $i' \neq i,$ $i' \leq  n-1,$ with a
fraction of $\frac{n-c-1}{(n-1)(n-2)(n^2-1)} \phi$. Each client $j \in
Exclusive(n)$ ($=Exclusive(n+1)$)   is   assigned   to   $n$ and to $n+1$ 
 with   a   fraction   of $\frac{n^2}{2(n^2+n-1)} \phi$.
\end{lemma}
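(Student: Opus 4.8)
The plan is to exploit the symmetric way in which the set $S$ and the measure are assigned, so that each stated assignment fraction reduces to a marginal probability for a single uniformly chosen class. Since every type-$A$ class receives the same measure $\epsilon$ and $\phi$ is their total measure, for any client $j$ and facility $i$ the fraction to which $j$ is assigned to $i$ after $Round_A$ equals $\phi \cdot \Pr[\,j \text{ is assigned to } i\,]$, where the probability is taken over a type-$A$ class drawn uniformly from those in $S$. By property $P_1$ the collection $S$ is invariant under relabeling the facilities $1,\dots,n-1$, under swapping $n$ and $n+1$, and under relabeling clients inside each $Exclusive(\cdot)$ set; moreover the number of client configurations for a fixed facility set depends only on cardinalities, so this symmetry lets me replace explicit counting of classes by marginals of one uniformly random class. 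In particular the in-class facilities drawn from $\{1,\dots,n-1\}$ form a uniformly random $(n-c-1)$-subset.

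First I would treat the home assignments. For $j \in Exclusive(i)$ with $i \le n-1$, the client $j$ is assigned to $i$ exactly when $i$ lies in the class: if $i$ is in the class its client set is $Exclusive(i)$ together with one extra client, so $j$ is assigned to $i$; if $i$ is absent then $j$ can only occur as an extra client of another facility and is not assigned to $i$. Hence the fraction is $\phi\cdot\Pr[i \in cl] = \frac{n-c-1}{n-1}\,\phi$, the first claimed value.

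Next, for the cross assignments, fix $j \in Exclusive(i)$ and a facility $i' \ne i$ with $i' \le n-1$. Now $j$ is assigned to $i'$ precisely when $i' \in cl$, $i \notin cl$, and $j$ is the extra client chosen for $i'$. When $i \notin cl$ there are exactly $c$ facilities of $\{1,\dots,n-1\}$ outside the class, whose pairwise-disjoint exclusive sets form the eligible pool of size $c(B-1)=c(n^2-1)$; by the client-symmetry of $S$ the extra client of $i'$ is uniform over this pool, so the conditional probability that it is $j$ is $\frac{1}{c(n^2-1)}$. Combining this with $\Pr[i' \in cl,\ i \notin cl]=\frac{n-c-1}{n-1}\cdot\frac{c}{n-2}$ and noting that $c$ cancels gives the fraction $\frac{n-c-1}{(n-1)(n-2)(n^2-1)}\,\phi$. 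Finally, for $j \in Exclusive(n)=Exclusive(n+1)$, the symmetry between $n$ and $n+1$ gives $\Pr[n \in cl]=\tfrac12$, and conditioned on $n \in cl$ the $B$ clients assigned to $n$ are a uniform $B$-subset of the $B+n-1$ clients of $Exclusive(n)$, so $j$ is among them with probability $\frac{B}{B+n-1}=\frac{n^2}{n^2+n-1}$; multiplying yields $\frac{n^2}{2(n^2+n-1)}\,\phi$, and the same holds for $n+1$.

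The main obstacle is justifying the uniform-marginal step for the extra clients: within a type-$A$ class the extra clients form a uniformly random injective map from the in-class facilities of $\{1,\dots,n-1\}$ into the eligible pool, and I need the marginal law of a single facility's extra client to be uniform over the pool. This follows because $S$ contains every admissible choice of extra clients with equal measure $\epsilon$ and is invariant under relabeling clients within the exclusive sets, so the injectivity (distinctness) constraint treats all pool clients symmetrically. The only care required is bookkeeping: recording that the sets $Exclusive(i)$, $i\le n-1$, are pairwise disjoint and disjoint from $Exclusive(n)$, so that the pool size is exactly $c(n^2-1)$ and no client is counted twice, after which the three computations above are routine.
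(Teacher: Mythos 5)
Your proof is correct and follows essentially the same route as the paper's: both arguments reduce each assignment fraction to a symmetry-based marginal over the type-$A$ classes, yielding $\frac{n-c-1}{n-1}\phi$ for home assignments, $\frac{n-c-1}{(n-1)(n-2)(n^2-1)}\phi$ for cross assignments (you compute $\Pr[i'\in cl,\ i\notin cl]$ directly where the paper conditions on $i\notin cl$ and then splits uniformly over the $n-2$ candidates, but the two computations are the same), and $\frac{n^2}{2(n^2+n-1)}\phi$ for the clients of $n,n+1$. Your explicit justification of the uniform marginal for the extra clients under the injectivity constraint is a welcome bit of added care, but does not change the substance of the argument.
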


\begin{proof}
Consider a facility  $i, i\leq n-1$. Since exactly one of facilities  $n,n+1$ is present in
all the classes of type  $A$ and each class contains $n-c$ facilities,
$i$ is  present in the  classes of $Round_A$  $\frac{n-c-1}{n-1}$ of
the time due to symmetry of the classes. Each time $i$ is present in
a class  $cl$ that  class $cl$ assigns  all $j \in  Exclusive(i)$ to
$i$.  So  client  $j$  is  assigned  to $i$  with  a  fraction  of
$\frac{n-c-1}{n-1}  \phi$. When $i$  is not  present in  class $cl$,
which happens  $\frac{c}{n-1}$ of the time, then  its exclusive clients
along with  the exclusive  clients of all  the other  $c-1$ facilities
that  are also  not  present in  $cl$  are used  to  help the  $n-c-1$
facilities $i \leq n-1,$  reach the bound $B$ of clients (recall
that the number of exclusive clients of each such facility is equal to
$B-1$).  Each time  this happens, the $n-c-1$ facilities  in $cl$ need
$n-c-1$  additional clients, while  the exclusive  clients of  the $c$
facilities that are  not present in $cl$ are  $c(n^2-1)$ in total. Due
to symmetry  once again, a  specific client $j \in  Exclusive(i)$ is
assigned to  one of those  $n-c-1$ facilities $\frac{n-c-1}{c(n^2-1)}$
of the  time of those cases.  So in total  this happens $\frac{c}{n-1}
\times  \frac{n-c-1}{c(n^2-1)}  =  \frac{n-c-1}{(n-1)(n^2-1)}$ of  the
time, so it follows that client $j$ is assigned to a specific facility
$i',$ $i' \neq i,$ $i' \leq n-1,$ $\frac{n-c-1}{(n-1)(n-2)(n^2-1)}$ of the
time. The fraction with which  $j$ is assigned to $i'$ after $Round_A$
is $\frac{n-c-1}{(n-1)(n-2)(n^2-1)} \phi$.

For  the proof  of the  second part  of the  lemma,  consider facilities
$n,n+1$. Each one of those is present in the classes of type $A$ an equal 
fraction $1/2$ of the time. The
only clients that  are assigned to them are  their exclusive clients. Each
class $cl$ assigns exactly $B= n^2$ out of those $n^2+n-1$ clients. So,
due to symmetry, each client $j \in Exclusive(n)$ is present in
$cl$ $\frac{n^2}{n^2+n-1}$  of the time,  so $j$ is assigned to $n$ and $n+1$
with a fraction of $\frac{n^2}{2(n^2+n-1)} \phi$ to each.     
\end{proof}

Note that  after $Round_A$ each  facility $i, i  \leq n-1,$ has  a total
amount $ \frac{(n-c-1)B}{(n-1)} \phi$  of clients (since it is present
in a class  $\frac{(n-c-1)}{(n-1)}$ of the time and  when this happens
it is  given $B$ clients).  Similarly, facilities $n,n+1$  after $Round_A$
have a total amount $B\phi /2$ each.

Now we can explain the underlying intuition for distinguishing between
the two
types of classes.  The feasible fractional solution $(y^*,x^*)$
we  intend to  construct is  the following:  for each
facility $ i\leq n-1,$ its exclusive clients are assigned to it with
a fraction of $\frac{n^2-1}{n^2}$ each, while they are assigned with a
fraction of $\frac{1}{(n^2)(n-2)}$ to  each other facility $ i'
\leq  n-1$. As  for facilities  $n,n+1$, all  of their exclusive  clients are
assigned with a fraction of $1/2$  to each.  If  we  project  the solution  to  
 $(y,x)$, the $y$ variables will be forced 
to  take   the  values
$y^*_i=\frac{n^2-1}{n^2},$ for $i \leq n-1,$ and $y^*_n=y^*_{n+1}=\frac{n^2+n-1}{2n^2}$. Observe as we give some  amount of  measure to  $Round_A$,
 the  variables  concerning the
assignments to facilities $n,n+1$ tend to their intended values in the
solution we want to construct ``faster'' than the variables concerning the
assignments to the other facilities. This is because, by Lemma~\ref{lemma:roundA}
after $Round_A$ each exclusive client  of $n,n+1$ is assigned to each of them with
a fraction of $\frac{n^2}{2(n^2+n-1)} \phi$ which is $\frac{n^2}{n^2+n-1}
\phi$ of  the intended value. At the  same time, every
exclusive  client of  each other  facility is  assigned to  it  with a
fraction of $\frac{n-c-1}{n-1} \phi$ which is $\frac{\frac{n-c-1}{n-1}
  \phi}{\frac{n^2-1}{n^2}}$  of the  intended value.  For sufficiently
large  instance  $I$,  as  $n$  tends  to  infinity,  the  assignments
to $n$ and $n+1$ will reach their intended values while there will
be   some    fraction   of   every    other   client   left    to   be
assigned. Subsequently we have to use classes of type $B$, 
to achieve the opposite effect: the
variables  concerning the  assignments of  the first  $n-1$ facilities
should tend
to their intended values ``faster''  than those of $n$ and $n+1$ (since
$n$ and $n+1$  are 
not  present in  any of  the classes  of type  $B$,  the corresponding
speed will actually be zero).

We  proceed  with  giving  the  details  of  the  usage  of  type  $B$
classes. As before,  we give to each such class  a very small quantity
of  measure $\epsilon$.  Let  $\xi$  be the  total  amount of  measure
used. We call this step $Round_B$.

\begin{lemma}   \label{lemma:roundB}
After  $Round_B$,  each client  $j  \in  Exclusive(i),$  $i\leq n-1,$  is
assigned  to $i$  with a  fraction of  $\frac{n-c}{n-1} \xi$  and is
assigned to each other  facility $i',$ $i' \neq i,$ $i'  \leq n-1,$ with a
fraction of $\frac{n-c}{(n-1)(n-2)(n^2-1)} \xi$.
\end{lemma}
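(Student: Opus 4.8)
The plan is to mimic almost verbatim the proof of Lemma~\ref{lemma:roundA}, adjusting only the combinatorial counts to reflect that a type $B$ class draws all $n-c$ of its facilities from $\{1,\ldots,n-1\}$, whereas a type $A$ class spent one of its slots on $n$ or $n+1$. Throughout I would use that, by property $P_1$ and the fact that every type $B$ class receives the same measure $\epsilon$ in $Round_B$, the $n-1$ facilities in $\{1,\ldots,n-1\}$ are fully interchangeable, so all ``fraction of the time'' statements are licensed by symmetry.

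First I would handle the assignment of $j \in Exclusive(i)$ to $i$ itself. Since each type $B$ class contains exactly $n-c$ of the $n-1$ facilities in $\{1,\ldots,n-1\}$, facility $i$ is present in a fraction $\frac{n-c}{n-1}$ of the measure $\xi$; and whenever $i$ is present, the defining recipe for $S$ assigns all of $Exclusive(i)$ to $i$. Hence $j$ is assigned to $i$ with fraction $\frac{n-c}{n-1}\,\xi$, which is the first part. Next I would treat the assignment of $j$ to a distinct facility $i' \leq n-1$. Such an assignment can occur only when $i$ is absent from the class, which happens a fraction $\frac{c-1}{n-1}$ of the time (this is the one place the count departs from $Round_A$, where $i$ was absent $\frac{c}{n-1}$ of the time). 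Conditioned on $i$ absent, there are exactly $c-1$ absent facilities from $\{1,\ldots,n-1\}$, whose exclusive clients form a pool of $(c-1)(n^2-1)$ clients, while the $n-c$ present facilities each require exactly one extra client (they already hold their $B-1=n^2-1$ exclusive clients), i.e.\ $n-c$ slots filled symmetrically from that pool. By symmetry each pooled client, in particular $j$, is reassigned with conditional probability $\frac{n-c}{(c-1)(n^2-1)}$. Multiplying by $\frac{c-1}{n-1}$ the factor $c-1$ cancels, giving $\frac{c-1}{n-1}\cdot\frac{n-c}{(c-1)(n^2-1)} = \frac{n-c}{(n-1)(n^2-1)}$ as the unconditional fraction with which $j$ lands on \emph{some} facility other than $i$. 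Finally, symmetry among the $n-2$ candidate facilities in $\{1,\ldots,n-1\}\setminus\{i\}$ splits this equally, so $j$ is assigned to each specific $i'$ with fraction $\frac{n-c}{(n-1)(n-2)(n^2-1)}\,\xi$, as claimed.

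The only subtle point — and hence the place I would be most careful — is keeping the combinatorial counts exactly right and distinguishing them from those of $Round_A$: because type $B$ classes spend none of their $n-c$ slots on $\{n,n+1\}$, there are $c-1$ (not $c$) absent simplex facilities, so \emph{both} the absence frequency $\frac{c-1}{n-1}$ \emph{and} the pool size $(c-1)(n^2-1)$ carry a factor $c-1$, which is precisely what cancels to reproduce the clean numerator $n-c$. I would also note explicitly that, since $n,n+1 \notin F(cl)$ for every type $B$ class, $Round_B$ contributes nothing to the assignments of clients in $Exclusive(n)=Exclusive(n+1)$, which is why the lemma records only the case $i \leq n-1$.
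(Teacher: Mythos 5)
Your proposal is correct and takes essentially the same approach as the paper: a symmetry count of how often $i$ is present in a type~$B$ class (giving $\tfrac{n-c}{n-1}\xi$) followed by an equal split of the one extra client over the candidate pool. The only cosmetic difference is that for the second fraction the paper counts from the receiving facility's side (the $\tfrac{n-c}{n-1}\xi$ units of extra-client mass at $i$ split equally over the $(n-2)(B-1)$ candidates), while you condition on $i$ being absent and compute $\tfrac{c-1}{n-1}\cdot\tfrac{n-c}{(c-1)(n^2-1)}$ from the client's side, exactly as in the paper's proof of Lemma~\ref{lemma:roundA}; the two counts are equivalent and yield the same value.
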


\begin{proof}
The proof  follows closely that of  Lemma~\ref{lemma:roundA}. A  facility $i, i
\leq n-1,$  is present in  a class of  type $B$ $\frac{n-c}{n-1}$  of the
time (since  $c \geq  2$ this  fraction is less  than $1$).  Each such
time, every  client $j  \in Exclusive(i)$ is  assigned to  it (again
this  is due  to the  definition  of classes  of type  $B$). So  after
$Round_B$,   $j$   is  assigned   to   $i$   with   a  fraction   of
$\frac{n-c}{n-1} \xi$.  
Also, when $i$  is  present  in a  class, it is assigned exactly one client
which is exclusive to a facility
not in the class. Since in total there are $(n-2)(B-1)$ such candidate clients,
and by symmetry, after round $B$ 
each one of them is picked an equal fraction of the time to
be assigned to $i$, we have that
each client $j$ is assigned to a facility for which $j$ is not
exclusive with  a fraction
$\frac{n-c}{(n-1)(n-2)(n^2-1)} \xi$.    
\end{proof}

\noindent
To  construct  the   aforementioned  fractional  solution $(y^*,x^*)$,  set  $\phi=
\frac{n^2+n-1}{n^2}$    and     $\xi=    (\frac{n^2-1}{n^2}-\frac{n-c-1}{n-1}\phi)
\frac{n-1}{n-c}$, and  add the  fractional assignments of  the two
rounds. 

It is easy to check that the facility and assignment variables of facilities $n,n+1$
take the value they have in $(y^*,x^*)$. Same is true for the facility variables for $i \leq n-1$
and the assignment variables of the clients to the facilities they are exclusive. 
To see that the same goes for the non-exclusive assignments, observe that since
every class assign exactly $B$ clients to its facilities we have that $\sum_j x_{ij}=By_i$.
So each $i\leq n-1$ takes exactly $1-1/n^2$ demand from non-exclusive clients which  are
$(n-2)(B-1)$ in total. Thus, by symmetry of the construction, each one them is assigned to $i$
with a fraction of $\frac{B-1}{n^2(n-2)(B-1)}=\frac{1}{n^2(n-2)}$

\subsubsection{Proof of unbounded integrality gap of the constructed solution}

In the  present subsection, we  manipulate the costs of  instance $I$,
which we left undefined, so as to create a large integrality gap while
ensuring that the distances form a metric.

Set each facility opening cost to zero. As for the connection costs (distances)
consider the $(n-2)$-dimensional Euclidean space $\mathbb{R}^{n-2}$. Put
every facility $i,$  $i\leq n-1,$ together with its  exclusive clients on a
distinct vertex of an $(n-2)$-dimensional regular simplex with edge length
$D$. Put facilities $n,n+1$ together with their exclusive clients to a point
far away  from the simplex, so  the minimum distance from  a vertex is
$D' >> D.$ Setting $D'=\Omega(nD)$ is enough.

Since the distance between a facility and one of its exclusive clients
is  $0$,  the  cost  of  the fractional  solution  we  constructed  is
$O(nD)$. This cost  is due to the assignments  of exclusive clients of
facility $i,$ $i \leq n-1,$ to facilities $i'$ with $i' \neq i,$ $i' \leq n-1.$ 
As  for the cost  of an arbitrary integral  solution, observe
that since the $n^2+n-1$ exclusive  clients of $n,n+1$ are very far from
the  rest of  the facilities,  using $n$  of them  to  satisfy some
demand of  those facilities and help  to open all of  them, incurs a
cost of $\Omega(nD') = \Omega(n^2D).$ On the other hand, if we do not open all of the $n-1$
facilities on  the vertices of the  simplex (since they  have in total
$(n-1)(B-1)$  exclusive clients  which is  not enough  to open  all of
them), there  must be  at least  one such facility  not opened  in the
solution, thus its $B-1=\Theta(n^2)$ exclusive clients must be assigned elsewhere,
incurring a cost of  $\Omega(n^2D).$

This concludes the proof of Theorem~\ref{theorem:proper}.

\subsection{Proof of Theorem \ref{theorem:proper} for \cfl\ } 
\label{sec:proof_theorem_p2}

The proof is similar to that for \lbfl. 
We prove that the relaxation must use 
a specific set of classes and then we use these classes to construct a
desired feasible solution. In the last step we 
 define appropriately  the costs of the instance. 

\subsubsection{Existence of a specific type of classes}

Consider
an instance $I$ with $n$ facilities, where $n$ is
sufficiently large to ensure  that $\alpha n \leq n - c_0$  where
$c_0,$  is a
constant greater than or equal to  $1$. Let the capacity be $U=n^2$, and let
the number of  clients be $(n-1)U+1$. Notice that in every integer solution of the instance
 we must open  at least $n$ facilities. The  facility costs  and the assignment  costs will  be defined
later.  

 We  assume, like before, that  the  facilities are  numbered
$1,2,\ldots ,n$. 
Consider  an  integral  solution  $s$  for $I$   where  all the
facilities  are opened, and furthermore 
facilities $1,\ldots , n-1$  are assigned $U$ clients each 
and facility $n$ is assigned one client. 
Since our proper  relaxation is valid, there must be a solution $s'$ in the  space of
feasible solutions of the proper relaxation whose $(y,x)$ projection is the characteristic 
vector of $s$.  
By Definition~\ref{def:constell},
it is easy to see that $s'$ can only be obtained as a 
positive combination of classes $cl$ such that for
every    facility   $i$    we   have    $Clients_{cl}(i)   \subseteq
Clients_s(i)$.  Recall  that  since  the  complexity  of  our
relaxation is $\alpha$, the classes in the support of any solution 
have at most $n-c_0 \leq n-1$
facilities. 

Now consider the support  of $s'$. We will distinguish the classes $cl$ for
which variable $x_{cl}$ is in the support of $s'$ into 2 sets. The first set consists 
of the classes that assign exactly one client to facility $n$; call them \emph{type A} classes.
The second set  consists  of the classes that do not assign any client to facility 
$n$; call those \emph{type B} classes. By the discussion above those sets form a
partition of the classes in the support of $s'$, and moreover they are both non-empty: this is
 by the fact that at most  $n-c_0$ facilities are in any class, and by the fact
 that in $s$ all $n$ 
facilities are opened integrally. Notice also that no  class
of type B can contain facility $n$ even though the definition of a class does not
exclude the possibility that a class contains a facility to which no clients are
assigned. 

We call \emph{density} of  a class $cl$ the ratio 
$d(cl)=\frac{\sum_{i\neq n}|Clients_{cl}(i)|}{|F(cl)-\{n\}|}$. By the discussion 
above we have that $d(cl)\leq U$ for all $cl$ in the support of $s'$. The following holds:

\begin{lemma}
All classes in the support of $s'$ have density $U.$
\end{lemma}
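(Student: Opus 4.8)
The plan is to follow, almost verbatim, the existence argument already used for \lbfl\ in Section~\ref{sec:proof_theorem_p1}. There the decisive observation was that every facility contained in a support class is assigned \emph{exactly} $B$ clients by that class; the analogue I would prove here is that every facility $i\neq n$ contained in a support class $cl$ satisfies $|Clients_{cl}(i)|=U$. Granting this, the density computation is immediate, because every facility counted in the numerator of $d(cl)$ then contributes exactly $U$, matching the denominator.

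First I would write down the two projection identities forced on $s'$ by the fact that its $(y,x)$-projection is the characteristic vector of $s$. Fix a facility $i$ with $1\le i\le n-1$. Since $y_i^s=1$, the definition of the projection gives $\sum_{cl\mid i\in F(cl)} x_{cl}=y_i^s=1$, so the weights $x_{cl}$ over support classes containing $i$ form a convex combination. Summing the assignment projections over all clients yields
\[
U=|Clients_s(i)|=\sum_{j} x_{ij}^{s}=\sum_{j}\sum_{cl\mid (i,j)\in Agn_{cl}} x_{cl}=\sum_{cl\mid i\in F(cl)} x_{cl}\,|Clients_{cl}(i)|,
\]
where the final equality reorders the double sum and uses that $(i,j)\in Agn_{cl}$ forces $i\in F(cl)$.

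The step I expect to be the only genuine subtlety is the upper bound $|Clients_{cl}(i)|\le U$ for each support class. This cannot be read off from a capacity constraint: by definition a class is merely a $0$-$1$ vector with $y_i\ge x_{ij}$ and is not required to respect $U$. Instead it comes from the containment $Clients_{cl}(i)\subseteq Clients_s(i)$ established earlier for every class in the support, together with $|Clients_s(i)|=U$ for $i\le n-1$. Hence the displayed equation exhibits $U$ as a convex combination of quantities each at most $U$, which forces $|Clients_{cl}(i)|=U$ for every support class $cl$ with $i\in F(cl)$.

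Finally I would assemble the density. Let $cl$ be any class in the support of $s'$; as in the preceding discussion we take $F(cl)-\{n\}\neq\varnothing$, which is precisely the regime in which $d(cl)$ is defined and was bounded above by $U$. Every facility $i\in F(cl)-\{n\}$ lies in $\{1,\dots,n-1\}$, so the previous step gives $|Clients_{cl}(i)|=U$. Therefore $\sum_{i\neq n}|Clients_{cl}(i)|=U\,|F(cl)-\{n\}|$, and dividing by $|F(cl)-\{n\}|$ yields $d(cl)=U$, as claimed.
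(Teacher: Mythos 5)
Your proof is correct, but it follows a different route from the one the paper uses for this lemma. The paper's proof is a single global averaging argument over \emph{classes}: it writes $(n-1)U=\sum_{cl}d(cl)\,|F(cl)-\{n\}|\,x_{cl}$ and $n-1=\sum_{cl}|F(cl)-\{n\}|\,x_{cl}$, sets $m_{cl}=x_{cl}|F(cl)-\{n\}|/(n-1)$, and concludes that $U$ is a convex combination of the densities $d(cl)$, each of which is at most $U$; hence all densities equal $U$. You instead run a per-\emph{facility} convex-combination argument --- exactly the one the paper itself uses in the \lbfl\ existence step --- fixing $i\le n-1$, using $\sum_{cl\mid i\in F(cl)}x_{cl}=1$ and $U=\sum_{cl\mid i\in F(cl)}x_{cl}|Clients_{cl}(i)|$ together with the containment $Clients_{cl}(i)\subseteq Clients_s(i)$ to force $|Clients_{cl}(i)|=U$ for every support class containing $i$. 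Your version buys a strictly stronger conclusion: not only is the average load $U$, but every facility other than $n$ in every support class receives exactly $U$ clients; the paper's averaging only pins down the per-class average, which is all the lemma needs. Both arguments rest on the same two ingredients (the projection identities and the containment $Clients_{cl}(i)\subseteq Clients_s(i)$), and both share the same harmless caveat that $d(cl)$ is only defined when $F(cl)-\{n\}\neq\varnothing$, which you correctly flag. Your identification of the bound $|Clients_{cl}(i)|\le U$ as coming from the containment rather than from any capacity constraint on classes is exactly right and is the one genuinely delicate point.
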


\begin{proof}
The amount of demand that a class $cl$ contributes to the demand assigned to the set
of the first $n-1$ facilities by $s'$ is $d(cl)|F(cl)-\{n\}|x_{cl}.$
 We have $\sum_{cl}d(cl)|F(cl)-\{n\}|x_{cl}=(n-1)U$.
 Observe
that by the projection of $s'$ on $(y,x)$ and by the fact that for $i=1,\ldots ,n-1$,
 $y_i=1$ in $s$, we have $\sum_{cl}|F(cl)-\{n\}|x_{cl}=n-1$. 
Setting $m_{cl}=\frac{x_{cl}|F(cl)-\{n\}|}{n-1}$ we have from
the above $\sum_{cl}m_{cl}=1$ and $\sum_{cl}m_{cl}d(cl)=U$. 
The latter together with the fact that $d(cl)\leq U$ we have that $d(cl)=U$ for all classes
$cl$ in the support of $s'$.   
\end{proof}

The following corollary is immediate from the above:

\begin{corollary}
There is a type $B$ class in the support of $s'$ that has density $U.$
\end{corollary}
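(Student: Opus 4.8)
The plan is to read the corollary off from two facts already in hand: the density lemma just proved, and the non-emptiness of the type $B$ family. The claim is genuinely immediate, so the only work is to assemble these ingredients and to check that the lemma's conclusion is not secretly restricted to type $A$ classes. Indeed, the lemma's proof sets $m_{cl} = x_{cl}\,|F(cl)-\{n\}|/(n-1)$, establishes $\sum_{cl} m_{cl} = 1$ and $\sum_{cl} m_{cl}\, d(cl) = U$, and then invokes $d(cl)\le U$ to force $d(cl)=U$ for every $cl$ in the support of $s'$. No step there distinguishes whether a class touches facility $n$, so the conclusion $d(cl)=U$ applies to type $B$ classes exactly as it does to type $A$ classes.

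Next I would confirm that the type $B$ family is non-empty, as was noted when the partition into types was introduced. A clean way to see this is through a covering count. If there were no type $B$ class, then every class in the support would assign a client to facility $n$ and hence contain it, so that $\sum_{cl} x_{cl} = \sum_{cl\,:\,n\in F(cl)} x_{cl} \le 1$ by the packing constraint at facility $n$. Since each class carries at most $n-c_0$ facilities and a type $A$ class spends one of these on facility $n$ (to which it assigns exactly one client), every type $A$ class assigns at most $(n-c_0-1)U + 1 \le (n-2)U + 1$ clients, using $c_0\ge 1$. The total fractional client coverage $\sum_{cl} x_{cl}\,|\{j : j \text{ assigned in } cl\}|$ would then be at most $(n-2)U+1$, contradicting the fact that each of the $(n-1)U+1$ clients must be covered with total measure $1$. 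Hence the support of $s'$ contains at least one type $B$ class.

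Combining the two observations finishes the proof: choose any type $B$ class in the support of $s'$; it is of type $B$ by construction and has density $U$ by the lemma. There is no substantive obstacle here, the entire content being the intersection of the uniform density conclusion with the already-established non-emptiness of the type $B$ family. The only point meriting a moment's care is verifying that the lemma quantifies over the whole support, both types at once, which it does.
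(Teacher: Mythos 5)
Your proof is correct and follows essentially the same route as the paper: the corollary is read off by intersecting the lemma's conclusion (every class in the support of $s'$ has density $U$) with the non-emptiness of the type $B$ family, and you rightly check that the lemma's argument never distinguishes the two types. The only divergence is your justification of non-emptiness: the paper counts facilities (each class contains at most $n-c_0\le n-1$ facilities while $\sum_i y_i=n$, so classes containing $n$ cannot carry all the facility mass), whereas you count client coverage against the bound $(n-c_0-1)U+1$ per type $A$ class; both counts are valid, and yours correctly uses the containment $Clients_{cl}(i)\subseteq Clients_s(i)$ to cap each facility's load at $U$.
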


So far we have proved that 
in the class set of any proper relaxation for $I,$ there is 
a class $cl_0$ of type $B$ with density $d(cl_0)=C$.
 Let $|F(cl_0)| = t \leq n-1.$

\subsubsection{Construction of a bad solution}

Consider the symmetric classes of $cl_0$ for all permutations of the $n$ facilities
and for all permutations of the clients. Those classes are not necessarily in the support of $s'$. Take a quantity of measure $\epsilon$ and distribute it equally among all 
those classes. Since class $cl_0$ has density $U,$ all those symmetric classes
assign on average $U$ clients to each of their facilities. 
Due to symmetry, each facility is in a class $\epsilon \frac{t}{n}$ of the time and is assigned $\epsilon \frac{t}{n}U$ demand. Each client is assigned to
each facility $\epsilon\frac{tU}{((n-1)U+1)n}$ of the time. We call that step of our construction \emph{round $A$}.

Now consider the symmetric classes of $cl_0$ for all permutations of the first $n-1$ facilities
and for all permutations of the clients (those classes are well defined since $t\leq n-1$).
Again distribute a quantity of measure $\epsilon$ equally among all 
those classes. Similarly to the previous, each facility is in a class $\epsilon \frac{t}{n-1}$ of the time and is assigned $\epsilon \frac{t}{n-1}U$ demand. Each client is assigned to
each facility $\epsilon\frac{tU}{((n-1)U+1)(n-1)}$ of the time. 
We call that step of our construction \emph{round $B$}.

Spending $\phi=\frac{1}{nt}$ measure in round $A$ and $\xi=\frac{(n-1)(1-1/n^2)}{t}$ 
measure in round $B$ we construct a solution $s_b$ whose projection to $(y,x)$ is the 
following $(y^*,x^*)$:
$y^*_i=1$ for $i=1,\ldots ,n-1$, $y^*_n=\frac{1}{n^2}$, and for every client $j,$ $x^*_{nj}=\frac{U/n^2}{(n-1)U+1}$ and 
$x^*_{ij}=\frac{1-x^*_{nj}}{n-1}$ for $i=1,\ldots ,n-1.$ It is easy to see that $s_b$ is
a feasible solution for our proper relaxation.

Now simply set all distances to $0$, and define the facility opening costs as 
$f_n=1$ and $f_i=0$ for $i\leq n-1.$ It is easy to see
that the integrality gap of the proper relaxation is $\Omega (n^2)$.

\end{document}